\providecommand{\U}[1]{\protect\rule{.1in}{.1in}}
\newcommand{\R}{{\mathbb R}}
\newcommand{\C}{{\mathbb C}}
\newcommand{\be}{\begin{eqnarray}}
\newcommand{\ben}{\begin{eqnarray*}}
\newcommand{\en}{\end{eqnarray}}
\newcommand{\enn}{\end{eqnarray*}}
\newtheorem{theorem}{Theorem}[section]
\newtheorem{lemma}[theorem]{Lemma}
\newtheorem{defn}[theorem]{Definition}
\newtheorem{remark}[theorem]{Remark}
\newtheorem{proposition}[theorem]{Proposition}
\definecolor{rot}{rgb}{0.000,0.000,0.000}
\newcommand{\tcr}{\textcolor{rot}}
\begin{document}
\renewcommand{\theequation}{\arabic{section}.\arabic{equation}}
\begin{titlepage}
\title{\bf Direct and inverse time-harmonic elastic scattering from point-like and extended obstacles}
\author{Guanghui Hu$^1$, Andrea Mantile$^2$, Mourad Sini $^3$ and Tao Yin$^4$}
\let\thefootnote\relax\footnotetext{
$^{1}$Beijing Computational Science Research Center, Beijing 100193, China. Email: hu@csrc.ac.cn \\
$^{2}$LMR, EA4535 URCA, F\'ed\'eration de Recherche ARC Math\'ematiques, FR 3399
CNRS. Email: andrea.mantile@univ-reims.fr\\
$^{3}$RICAM, Austrian Academy of Sciences, Altenbergerstr. 69, A-4040 Linz, Austria. Email: mourad.sini@oeaw.ac.at (Corresponding author)\\
$^{4}$Department of Computing \& Mathematical Sciences, California Institute of Technology, 1200 East California Blvd., CA 91125, United States. Email: taoyin89@caltech.edu
}
\date{}
\end{titlepage}

\maketitle
%\vspace{.2in}
\begin{abstract}

This paper concerns the time-harmonic direct and inverse elastic scattering by
an extended rigid elastic body surrounded by a finite number of point-like
obstacles. We first justify the point-interaction model for the Lam\'{e}
operator within the singular perturbation approach. For a general family of
pointwise-supported singular perturbations, including anisotropic and
non-local interactions, we derive an explicit representation of the
scattered field. In the case of isotropic and local point-interactions, our
result is consistent with the ones previously obtained by Foldy's formal method
as well as by the renormalization technique.

In the case of multiple scattering with pointwise and extended obstacles, we
show that the scattered field consists of two parts: one is due to the
diffusion by the extended scatterer and the other one is a linear combination
of the interactions between the point-like obstacles and the interaction
between the point-like obstacles with the extended one.

As to the inverse problem, the factorization method by Kirsch is adapted to
recover simultaneously the shape of an extended elastic body and the location
of point-like scatterers in the case of isotropic and local interactions. The
inverse problems using only one type of elastic waves (i.e. pressure or shear
waves) are also investigated and numerical examples are present to confirm the
inversion schemes.

%%\end{abstract}
%\vspace{.2in}

{\bf Keywords}: Linear elasticity, point-like scatterers, Navier equation, Green's tensor, far field pattern.
\vspace{.1in}

{\bf 2010 MSC:} 74B05, 78A45, 81Q10.
\end{abstract}

%\vspace{.2in}

\section{Introduction}

\setcounter{equation}{0}

We deal with the elastic scattering of a time-harmonic plane wave from an
inhomogeneous isotropic medium in ${\mathbb{R}}^{n}$ ($n=2,3$) characterized
by the mass density function $\rho:=\rho(x)$ and the Lam\'{e} constants
$\lambda,\mu\in{\mathbb{R}}$ satisfying
\begin{equation}
\mu>0,\quad n\lambda+2\mu>0. \label{ConL}%
\end{equation}
It is supposed that the background medium is homogeneous, isotropic and that
the inhomogeneous medium occupies a bounded domain $\Omega$ with the Lipschitz
boundary $\partial\Omega$. In particular, $\Omega$ is allowed to contain a
finite number of disconnected components, but its exterior $\Omega
^{e}:={\mathbb{R}}^{n}\backslash\overline{\Omega}$ is always connected. For
simplicity we assume that $\rho\equiv1$ in $\Omega^{e}:={\mathbb{R}}%
^{n}\backslash\overline{\Omega}$. In linear elasticity, the elastic
displacement is then governed by the time-harmonic Navier equation
\begin{equation}
(\Delta^{\ast}+\omega^{2})\,u=0\quad\mbox{in}\quad\Omega^{e},\qquad
\Delta^{\ast}:=\mu\Delta+(\lambda+\mu)\,\mathrm{grad\,}\mathrm{div\,}%
\quad\label{Navier}%
\end{equation}
where $\omega>0$ denotes the angular frequency of incitation and
$u=u^{in}+u^{sc}$ is the sum of the incident and scattered fields. Since the
domain $\Omega^{e}$ is infinity in all directions $\hat{x}:=x/|x|\in
\mathbb{S}^{n-1}:=\{|\hat{x}|=1\}$, the scattered field $u^{sc}$ is required
to satisfy the outgoing Kupradze radiation conditions
\begin{equation}%
\begin{array}
[c]{ccccc}%
\lim_{r\rightarrow\infty}r^{(n-1)/2}(\frac{\partial u_{p}}{\partial r}%
-ik_{p}u_{p})=0\,, &  & \lim_{r\rightarrow\infty}r^{(n-1)/2}(\frac{\partial
u_{s}}{\partial r}-ik_{s}u_{s})=0\,, &  & r=|x|\,,
\end{array}
\label{Kupradze}%
\end{equation}
uniformly in all directions. Here,
\begin{equation}%
\begin{array}
[c]{ccc}%
k_{p}:=\omega/\sqrt{\lambda+2\mu}\,, &  & k_{s}:=\omega/\sqrt{\mu}\,,
\end{array}
\label{wn}%
\end{equation}
are the compressional and shear wavenumbers of the background medium, and
\begin{equation}%
\begin{array}
[c]{ccc}%
u_{p}:=-k_{p}^{-2}\mathrm{grad\,}\mathrm{div\,}u^{sc}\,, &  & u_{s}=k_{s}%
^{-2}\mathrm{curlcurl\,}u^{sc}\,,
\end{array}
\label{decomposition}%
\end{equation}
denote the longitudinal (compressional) and transversal (shear) parts of the
scattered field in $\Omega^{e}\subset{\mathbb{R}}^{3}$, respectively. Note
that in two dimensions the transversal (shear) part should be modified to be
\begin{equation}
{u}_{s}=k_{s}^{-2}\,\overrightarrow{\mbox{curl}}\,\mbox{curl}\;{u}^{sc},
\label{decomposition-2d}%
\end{equation}
where the two curl operators in ${\mathbb{R}}^{2}$ are defined by
\[
\overrightarrow{\mbox{curl}}\,v=\partial_{1}v_{2}-\partial_{2}v_{1},\quad
v=(v_{1},v_{2}),\qquad\mbox{curl}\;f:=(\partial_{2}f,-\partial_{1}f).
\]
It follows from the Navier equation (\ref{Navier}) and the decompositions
(\ref{decomposition})-(\ref{decomposition-2d}) that in $\Omega^{e}$,
\[
(\Delta+k_{\alpha}^{2})\,u_{\alpha}=0,\quad\alpha=p,s,\qquad\mathrm{div\,}%
u_{s}=0,\qquad\mbox{curl}\,u_{p}\;(\mbox{or}\;\overrightarrow{\mbox{curl}}%
\,u_{p})=0.
\]
The Kupradze radiation condition (\ref{Kupradze}) leads to the P-part
(longitudinal part) $u_{p}^{\infty}$ and the S-part (transversal part)
$u_{s}^{\infty}$ of the far-field pattern of $u^{sc}$, given by the asymptotic
behavior
\begin{equation}
u^{sc}(x)=\frac{\exp(ik_{p}|x|)}{|x|^{\frac{n-1}{2}}}\,u_{p}^{\infty}(\hat
{x})+\frac{\exp(ik_{s}|x|)}{|x|^{\frac{n-1}{2}}}\,u_{s}^{\infty}(\hat
{x})+\mathcal{O}(|x|^{-\frac{n+1}{2}}),\quad|x|\rightarrow+\infty, \label{far}%
\end{equation}
where, with some normalization, $u_{p}^{\infty}$ and $u_{s}^{\infty}$ are the
far-field patterns of $u_{p}$ and $u_{s}$, respectively. In this paper, we
define the far-field pattern $u^{\infty}$ of the scattered field $u^{sc}$ as
the sum of $u_{p}^{\infty}$ and $u_{s}^{\infty}$, that is, $u^{\infty}%
:=u_{p}^{\infty}+u_{s}^{\infty}$. It is well-known that $u_{p}^{\infty}$ is
normal to $\mathbb{S}^{n-1}$ and $u_{s}^{\infty}$ is tangential to
$\mathbb{S}^{n-1}$. Hence, we have the relations
\[
u_{p}^{\infty}(\hat{x})=(u^{\infty}(\hat{x})\cdot\hat{x})\,\hat{x},\quad
u_{s}^{\infty}(\hat{x})=\left\{
\begin{array}
[c]{lll}%
\hat{x}\times u^{\infty}(\hat{x})\times\hat{x},\quad &  & \mbox{if}\quad
n=3,\\
(u^{\infty}(\hat{x})\cdot\hat{x}^{\bot})\,\hat{x}^{\bot},\quad &  &
\mbox{if}\quad n=2,
\end{array}
\right.
\]
where $\hat{x}^{\bot}\in{\mathbb{S}}^{n-1}$ is perpendicular to $\hat{x}$.
Note that boundary and transmission conditions should be imposed on
$\partial\Omega$, relying on the physical property of $\Omega$. It is well
known that the forward scattering problem for both penetrable and impenetrable
bodies admits a unique solution $u\in H_{loc}^{1}(D^{c})$. To prove existence
of solutions we refer to \cite[Chapter 7.3]{Kupradze} for the standard
integral equation method applied to rigid scatterers with $C^{2}$-smooth
boundaries and to a recent paper \cite{BHSY2018} using the variational
approach for treating Lipschitz boundaries. Hence, the far-field pattern is
uniquely determined by the incident wave (for instance, exciting frequency and
direction) and the elastic body. Throughout our paper, an elastic body will be
referred as a point-like scatterer if its size is much smaller than the shear
wave length and the mass density has a high contrast of certain scale, as
compared to the background mass density, that will be described later. It is
called an extended scatterer if its size is comparable with the shear wave
length. We remark that the compressional wave length is greater than the shear
wave length in an isotropic and homogeneous medium. The aim of this paper is
to address the following direct and inverse problems:

\begin{itemize}
\item Describe the Foldy approach and the point-interaction model for elastic
scattering from finitely many point-like scatterers (see Section
\ref{sec:point} for the details).

\item Present a multi-scale model for elastic scattering by both point-like
scatterers and an extended rigid body (Section \ref{sec:multiscale}).

\item Recover multi-scale elastic scatterers from far-field patterns
corresponding to infinitely many plane waves with all directions excited at a
fixed frequency (Section \ref{inverse}).
%using the factorization method by A. Kirsch (1998).

\end{itemize}

In the presence of a finite number of point-like or small scatterers embedded
in a homogeneous medium, it is well known that the Born approximation models
the scattering effect by neglecting the wave interaction between these
scatterers. Consequently, the scattered field can be represented as a weighted
linear combination of point source waves emitting from each scatterer, where
the weight models the scattering strength (also called scattering
coefficients). Taking into account the multiple scattering, the Foldy formal
approach (see \cite{Foldy}) assumes that the scattering coefficient of each
scatterer is proportional to the external field acting on it (which is known
as the Foldy assumption) and suggests to present the scattered field as a
linear combination of the interactions between the point-like obstacles by
solving a linear algebraic system. From the mathematical point of view, the
solution to wave scattering from $M$ point-like obstacles can be rigorously
derived from the resolvent of a perturbed elliptic operator and the Krein's
inversion formula of the resolvents. In fact, point perturbation operator can
be regarded as the self-adjoint extension of some symmetric operator acting on
appropriate Sobolev spaces. For acoustic scattering from both point-like and
extended sound-soft obstacles, the point-interaction model was derived in
\cite{HMS} justifying the Foldy formel method and extending it to more general
models including the nonlocal interactions. A closed form of the
solution to such a multiscale scattering problem was obtained in \cite{HMS}.
Numerically, the authors of \cite{HL, HSZ} established an integral equation
representation based on the Foldy formal approach and proposed an iterative
approach for computing the unknown densities and coefficients.

The first aim of this paper is to justify the equivalence of the Foldy
approach and the point interaction model for the Lam\'e system. The extension
of our previous work \cite{HMS} to the linear elasticity turns out to be
non-trivial, mainly due the vectorial nature of the governing equation which
models a coupling of the propagation of compressional and shear waves.
Using the abstract construction of selfadjoint extensions by Posilicano \cite{Posi},
we model singular perturbations, of the Lam\'e operator, supported on a set of points, (see subsection \ref{BCM}).
This provides a generalized boundary conditions of impedance type on this set of points.
In the particular case of local and isotropic point perturbations,
we retrieve the closed form of the solution obtained in \cite{HM} by the renormalization techniques arising
from quantum mechanics \cite{AGHH}; see subsection \ref{subsec:point}.
 The multiscale point-interaction model for elastic scattering from a
combination of point-like and extended scatterers can be analogously
formulated. In Section \ref{sec:multiscale}, we present a straightforward
proof to the well-posedness of the resulting boundary value problem for
isotropic point interactions in linear elasticity. Related to our present work, let us mention the recent contribution \cite{BBPFN}, on point-like
perturbations for the two dimensional Lam\'{e} operator,
where the model is stated as a selfadjoint extension of a
symmetric restriction of $\Delta^{\ast}$ using boundary triplets. A
factorized representation of the fields is provided in the particular case of
local and isotropic perturbations.

The second aim of this paper is to investigate the inverse problem of imaging
an extended rigid elastic body and a finite number of point-like scatterers,.
We shall apply the factorization method \cite{Kirsch1998, K2008} by Kirsch to
such multi-scale inverse scattering problems by using different type of
elastic waves. In contrast to iterative schemes, the factorization method
requires neither direct solvers nor initial guesses, and it provides a
sufficient and necessary condition for characterizing the shape of the
extended obstacle and positions of the point-like scatterers. Note that there
is already a vast literature on inverse elastic scattering problems. The
linear-sampling and factorization methods were developed in \cite{AK, Arens}
and \cite{S2005, 2007} for imaging impenetrable and penetrable scatterers.
Using only one-type of elastic waves, uniqueness results for detecting extended scatterers
(penetrable or impenetrable) were proved in
\cite{GS, KS} and the MUSIC type algorithm \cite{GST} was applied to the
detection of point-like elastic scatterers. In \cite{HKS13}, the
factorization method was adapted to recover the shape of an extended rigid
body from the scattered S-waves (resp. P-waves) corresponding to all incident
plane shear (resp. pressure) waves. Within the framework of this paper, we
have unified the MUSIC algorithm for imaging point-like scatterers and the
classical factorization scheme for recovering extended obstacles.

The remaining part of this paper is organized as follows. In Section
\ref{pre}, we state properties of the resolvent and outgoing Green's tensor of
the Lam\'{e} operator in ${\mathbb{R}}^{n}$. Section \ref{sec:point} is
devoted to the Foldy approach and the point-interaction model for elastic
scattering by a collection of point-like scatterers. In Section
\ref{sec:multiscale}, we present mathematical formulations for the multi-scale
scattering problem and prove well-posedness of the boundary value problem.
Finally, the factorization method to inverse problems together with some
numerical tests are reported in Section \ref{inverse}.

\tcr{We end up this section by introducing some notation to be used later.
The spacial variables in $\R^n$ are denoted by $x=(x_1, x_2,\cdots, x_n)$ and $y=(y_1, y_2,\cdots, y_n)$, where $n=2,3$ is the spacial dimension.
 Denote by $\overline{(\cdot)}$ the closure of a set or the complex conjugate of a complex number. For $a\in \C$, let $|a|$ denote its modulus, and for
$\textbf{a}\in \C^2$, let $|\textbf{a}|$ denote its Euclidean norm. The symbol
$\textbf{a}\cdot\textbf{b}$
stands for the inner product
$a_1b_1+a_2b_2$ of $\textbf{a}=(a_1,a_2), \,\textbf{b}=(b_1,b_2)\in \C^2$.
Standard $L^2$-based scalar Sobolev spaces defined in a domain $D$ or on
a surface $M$ are denoted by $H^s(D)$ or $H^s(M)$ for $s\in\R$. By $\mathsf{B}(X,Y)$ we mean the space of bounded linear operators from the space $X$ to $Y$, and by $\mathbf{I}_{n}$ the identity matrix in $\R^n$. }
%Throughout the paper the branch cut of a complex square root $\sqrt{z}$ ($z\in \C$) is always chosen such that its imaginary part is non-negative.
%Unless otherwise stated, we always use $c,C$ to denote generic positive constants which may vary from line to line.}

\section{Preliminaries}

\label{pre} \setcounter{equation}{0}
%Define $L_\omega u:=-(\Delta^*+\omega^2) u$ for $\omega\in \R_+$. The operator $L_0:=-\mu\Delta u-\left(  \lambda+\mu\right) \nabla\left(
%\operatorname{div}u\right) $ will be referred to as Lam\'e operator.

%\mathsf{dom}\left(  L_{0}\right)  :=\left(  H^{2}\left(  \mathbb{R}%
%^{n}\right)  \right)  ^{n}\,,\\
%\\
%L_{0}u=  \nabla\left(
%\operatorname{div}u\right)

\subsection{Properties of the resolvent of the Lam\'e operator in
${\mathbb{R}}^{n}$.}

%Let $d=2,3$ and $\mu,\lambda\in\mathbb{R}$ be such that%
%\begin{equation}%
%\begin{array}
%[c]{ccc}%
%\mu>0\,, &  & d\lambda+2\mu>0\,.
%\end{array}
%\label{Lame_coeff}%
%\end{equation}
%We consider the Lam\'{e} differential operator%
%\begin{equation}%
%\begin{array}
%[c]{ccc}%
%-\mu\Delta\left(  u_{1},...u_{d}\right)  -\left(  \lambda+\mu\right)
%\nabla\left(  \operatorname{div}\left(  u_{1},...u_{d}\right)  \right)  \,, &
%& \left(  u_{1},...u_{d}\right)  \in\left(  \mathcal{C}_{0}^{\infty}\left(
%\mathbb{R}^{n}\right)  \right)  ^{n}\,,
%\end{array}
%\label{Lame_diff}%
%\end{equation}
The quadratic form corresponding to the Lam\'e operator $-\Delta^{*}$ is given
by the closed form
\begin{equation}%
\begin{array}
[c]{ccc}%
Q_{0}\left(  u\right)  :=\lambda\left\Vert
%TCIMACRO{\tsum \nolimits_{i=1}^{n}}%
%BeginExpansion
{\textstyle\sum\nolimits_{i=1}^{n}}
%EndExpansion
\mathrm{div\,} u\right\Vert _{L^{2}\left(  \mathbb{R}^{n}\right)  }^{2}%
+\frac{\mu}{2}%
%TCIMACRO{\tsum \nolimits_{i,j=1}^{n}}%
%BeginExpansion
{\textstyle\sum\nolimits_{i,j=1}^{n}}
%EndExpansion
\left\Vert \partial_{i}u_{j}+\partial_{j}u_{i}\right\Vert _{L^{2}\left(
\mathbb{R}^{n}\right)  }^{2}\,, &  & u=(u_{1}, u_{2},\cdots, u_{n})
\end{array}
\label{Q_0}%
\end{equation}
with $\mathsf{dom}\left(  \bar{Q}_{0}\right)  =\left(  H^{1}\left(
\mathbb{R}^{n}\right)  \right)  ^{n}$. By (\ref{ConL}), it is positive defined
(see e.g. in \cite[Lemma 1.1]{ColdVe}). By \cite[Theorem VIII.15]{ReSi1},
there exists a unique selfadjoint operator $L_{0}$ on $\left(  L^{2}\left(
\mathbb{R}^{n}\right)  \right)  ^{n}$ fulfilling%
\[%
\begin{array}
[c]{ccc}%
Q_{0}\left(  u\right)  =\left\langle u,L_{0}u\right\rangle _{\left(
L^{2}\left(  \mathbb{R}^{n}\right)  \right)  ^{n}}\,, &  & u\in\mathsf{dom}%
\left(  L_{0}\right)  \,.
\end{array}
\]
This is the Friedrichs extension of $-\Delta^{*}$ and it is defined as%
\begin{equation}
\left\{
\begin{array}
[c]{l}%
\mathsf{dom}\left(  L_{0}\right)  :=\left(  H^{2}\left(  \mathbb{R}%
^{n}\right)  \right)  ^{n}\,,\\
\\
L_{0}u=-\mu\Delta u-\left(  \lambda+\mu\right)  \nabla\left(
\operatorname{div}u\right)  .
\end{array}
\right.  \label{Lame_op}%
\end{equation}
Since (\ref{Q_0}) is positive, $L_{0}$ is positive defined and we have the
resolvent set $\mathsf{res}\left(  L_{0}\right)  =\mathbb{C}\backslash\left[
0,+\infty\right)  $ and as $L_{0}:\left(  H^{2}\left(  \mathbb{R}^{n}\right)
\right)  ^{n}\rightarrow\left(  L^{2}\left(  \mathbb{R}^{n}\right)  \right)
^{n}$, it follows%
\begin{equation}%
\begin{array}
[c]{ccc}%
\mathcal{K}_{z}:=\left(  L_{0}-z\right)  ^{-1}\in\mathsf{B}\left(  \left(
L^{2}\left(  \mathbb{R}^{n}\right)  \right)  ^{n},\left(  H^{2}\left(
\mathbb{R}^{n}\right)  \right)  ^{n}\right)  \,, &  & z\in\mathsf{res}\left(
L_{0}\right)  \,.
\end{array}
\end{equation}
Denote the Laplacian resolvent by
\begin{equation}%
\begin{array}
[c]{ccc}%
R_{z}:=\left(  -\Delta-z\right)  ^{-1}\in\mathsf{B}\left(  H^{0}\left(
\mathbb{R}^{n}\right)  ,H^{2}\left(  \mathbb{R}^{n}\right)  \right)  \,, &  &
z\in\mathbb{C}\backslash\left[  0,+\infty\right)  \,.
\end{array}
\label{Res_laplacian}%
\end{equation}
The \emph{Kupradze matrix} defines the Lam\'e operator resolvent according to
(see \cite[Chp. 2]{Kup})%
\begin{equation}
\mathcal{K}_{z}=\frac{1}{\mu}R_{z/\mu}+\frac{1}{z}\nabla\operatorname{div}%
\left(  R_{z/\mu}-R_{z/\left(  \lambda+2\mu\right)  }\right)  \,,
\label{Kupradze_res}%
\end{equation}
where $\mathbf{I}_{n}$ is the identity matrix on $\mathbb{R}^{n}$, while
$z/\mu$ and $z/\left(  \lambda+2\mu\right)  $ are the rescaled energies
related to compressional and shared waves. The integral kernels of $R_{z}$ and
$\mathcal{K}_{z}$ are next denoted as%
\begin{equation}%
\begin{array}
[c]{ccc}%
\Phi_{z}\left(  x-y\right)  :=R_{z}\left(  x,y\right)  \,, &  & \Gamma
_{z}\left(  x-y\right)  :=\mathcal{K}_{z}\left(  x,y\right)  \,,
\end{array}
\label{Green_ker}%
\end{equation}
(where the identities $R_{z}^{\ast}=R_{\bar{z}}$ and $\mathcal{K}_{z}^{\ast
}=\mathcal{K}_{\bar{z}}$ are taken into account). From the identity
(\ref{Kupradze_res}), it follows that%
\begin{equation}
\Gamma_{z}\left(  x\right)  =\frac{1}{\mu}\Phi_{z/\mu}\left(  x\right)
+\frac{1}{z}\nabla\operatorname{div}\left(  \Phi_{z/\mu}\left(  x\right)
-\Phi_{z/\left(  \lambda+2\mu\right)  }\left(  x\right)  \right)  \,.
\label{Gamma_Res_ker}%
\end{equation}

We use the weighted spaces%
\begin{equation}
H_{\eta}^{s}\left(  \mathbb{R}^{n}\right)  :=\left\{  u\in\mathcal{D}^{\prime
}\left(  \mathbb{R}^{n}\right)  \,,\ \left\langle x\right\rangle ^{\eta}u\in
H^{s}\left(  \mathbb{R}^{n}\right)  \right\}  \,,\quad s\geq0\,,\ \eta
\in\mathbb{R}\,, \label{Sobolev_w}%
\end{equation}
where $\left\langle x\right\rangle ^{\eta}:=\left(  1+\left\vert x\right\vert
^{2}\right)  ^{\eta/2}$. The duals (w.r.t. the $L^{2}\left(  \mathbb{R}%
^{n}\right)  $ product) of (\ref{Sobolev_w}) are%
\begin{equation}
H_{-\eta}^{-s}\left(  \mathbb{R}^{n}\right)  :=\left\{  u\in\mathcal{D}%
^{\prime}\left(  \mathbb{R}^{n}\right)  \,,\ \left\langle x\right\rangle
^{-\eta}u\in H^{-s}\left(  \mathbb{R}^{n}\right)  \right\}  \,,\quad
s\geq0\,,\ \eta\in\mathbb{R}\,. \label{Sobolev_w_dual}%
\end{equation}
The Laplacian resolvent has well known mapping properties which are next
recalled. At first, we recall the resolvent identity%
\begin{equation}%
\begin{array}
[c]{ccc}%
R_{z}-R_{z_{0}}=\left(  z_{0}-z\right)  R_{z_{0}}R_{z}=\left(  z_{0}-z\right)
R_{z}R_{z_{0}}\,, &  & z,z_{0}\in\mathbb{C}\backslash\left[  0,+\infty\right)
\,.\,
\end{array}
\label{Res_id}%
\end{equation}
Using Fourier transform, duality and interpolation, from (\ref{Res_laplacian})
follows $R_{z}\in\mathsf{B}\left(  H^{s}\left(  \mathbb{R}^{n}\right)
,H^{2+s}\left(  \mathbb{R}^{n}\right)  \right)  $, for any $s\in\mathbb{R}$,
and%
\begin{equation}%
\begin{array}
[c]{ccc}%
\left\Vert R_{z}\right\Vert _{H^{s}\left(  \mathbb{R}^{n}\right)
,H^{s+t}\left(  \mathbb{R}^{n}\right)  }\leq\frac{1}{d^{1-t/2}\left(
z,\left[  0,+\infty\right)  \right)  }\,, &  & t\in\left[  0,2\right]  \,,
\end{array}
\label{R_est}%
\end{equation}
where $d\left(  \cdot,\left[  0,+\infty\right)  \right)  $ is the distance
from the set $\left[  0,+\infty\right)  $. According to \cite[Lemma 1,
p.170]{ReSi IV}, one has: $R_{z}\in\mathsf{B}\left(  L_{\eta}^{2}\left(
\mathbb{R}^{n}\right)  \right)  $, for any $\eta\in\mathbb{R}$; this entails
(see \cite[relation (4.8)]{MaPoSI LAP}) that $R_{z}\in\mathsf{B}\left(
L_{\eta}^{2}\left(  \mathbb{R}^{n}\right)  ,H_{\eta}^{2}\left(  \mathbb{R}%
^{n}\right)  \right)  $ and, by duality and interpolation, we get%
\begin{equation}%
\begin{array}
[c]{ccccc}%
R_{z}\in\mathsf{B}\left(  H_{\eta}^{-s}\left(  \mathbb{R}^{n}\right)
,H_{\eta}^{2-s}\left(  \mathbb{R}^{n}\right)  \right)  \,, &  & \eta
\in\mathbb{R}\,, &  & s\in\left[  -2,0\right]  \,.
\end{array}
\end{equation}
Since $H_{\eta}^{s}\left(  \mathbb{R}^{n}\right)  \hookrightarrow H^{s}\left(
\mathbb{R}^{n}\right)  $ for $\eta>0$, the previous mapping properties also
yield%
\begin{equation}%
\begin{array}
[c]{ccccc}%
R_{z}\in\mathsf{B}\left(  H_{\eta}^{s}\left(  \mathbb{R}^{n}\right)
,H_{-\eta}^{2+s}\left(  \mathbb{R}^{n}\right)  \right)  \,, &  & \eta>0\,, &
& s\in\mathbb{R}\,.
\end{array}
\end{equation}
Moreover, it is well-known that $z\rightarrow R_{z}$ is holomorphic in
$z\in\mathbb{C}\backslash\left[  0,+\infty\right)  $ and that a limiting
absorption principle holds (see e.g. \cite[Theorem 4.1]{Agm}, \cite[Theorem
18.3]{KoKo}), i.e. the limits%
\begin{equation}%
\begin{array}
[c]{ccc}%
R_{\omega^{2}}^{\pm}:=\lim_{\varepsilon\rightarrow0^{+}}R_{\omega^{2}\pm
i\varepsilon}\,, &  & \omega>0\,,
\end{array}
\end{equation}
exist in $\mathsf{B}\left(  L_{\eta}^{2}\left(  \mathbb{R}^{n}\right)
,H_{-\eta}^{2}\left(  \mathbb{R}^{n}\right)  \right)  $ with $\eta>1/2$ and
they satisfy%
\begin{equation}
\left(  -\Delta-\omega^{2}\right)  R_{\omega^{2}}^{\pm}=1\,.
\end{equation}
This limit allows us to define the extended map%
\begin{equation}
z\rightarrow R_{z}^{\pm}:=\left\{
\begin{array}
[c]{lll}%
R_{z}\,, &  & z\in\mathbb{C}\backslash\left[  0,+\infty\right)  \,,\\
&  & \\
R_{\omega^{2}}^{\pm}\,, &  & z=\omega^{2}\pm i0\,.
\end{array}
\right.  \label{Res_ext}%
\end{equation}
Using this definition, the resolvent identity extends according to%
\begin{equation}%
\begin{array}
[c]{ccccc}%
R_{z}^{\pm}-R_{z_{0}}=\left(  z_{0}-z\right)  R_{z_{0}}R_{z}^{\pm}\,, &  &
z\in\mathbb{C}\,, &  & z_{0}\in\mathbb{C}\backslash\left[  0,+\infty\right)
\,.
\end{array}
\end{equation}
Using this relation, it is easy to verify by iteration, duality and
interpolation, that the limit mapping properties improve as%
\begin{equation}%
\begin{array}
[c]{ccccc}%
R_{z}^{\pm}\in\mathsf{B}\left(  H_{\eta}^{s}\left(  \mathbb{R}^{n}\right)
,H_{-\eta}^{2+s}\left(  \mathbb{R}^{n}\right)  \right)  \,, &  & \eta>0\,, &
& s\in\mathbb{R}\,.
\end{array}
\label{Res_ext_map}%
\end{equation}

The above properties of the Laplace operator extend to the Lam\'e operator as
well. We state them in the following theorem.

\begin{theorem}
\label{Thorem_res}$i)$ Let $s\in\mathbb{R}$. For $z\in\mathbb{C}%
\backslash\left[  0,+\infty\right)  $ the map $z\rightarrow\mathcal{K}_{z}$ is
holomorphic with values in\newline$\mathsf{B}\left(  \left(  H^{s}\left(
\mathbb{R}^{n}\right)  \right)  ^{n},\left(  H^{s+2}\left(  \mathbb{R}%
^{n}\right)  \right)  ^{n}\right)  $ and fulfills the estimates%
\begin{equation}%
\begin{array}
[c]{ccc}%
\left\Vert \mathcal{K}_{z}\right\Vert _{H^{s}\left(  \mathbb{R}^{n}\right)
,H^{s+t}\left(  \mathbb{R}^{n}\right)  }\leq\frac{1}{d^{1-t/2}\left(
z,\left[  0,+\infty\right)  \right)  }\,, &  & t\in\left[  0,2\right]  \,.
\end{array}
\label{Kupradze_res_est}%
\end{equation}
$ii)$ If $\eta>1/2$, $z\rightarrow R_{z}$ continuously extends to the limits
$z\rightarrow\omega^{2}\pm i0$, $\omega>0$ in the weaker topology of
$\mathsf{B}\left(  \left(  H_{\eta}^{s}\left(  \mathbb{R}^{n}\right)  \right)
^{n},\left(  H_{-\eta}^{s+2}\left(  \mathbb{R}^{n}\right)  \right)
^{n}\right)  $, i.e. the limits%
\begin{equation}
\mathcal{K}_{\omega^{2}}^{\pm}:=\frac{1}{\mu}R_{k_{s}^{2}}^{\pm}+\frac
{1}{\omega^{2}}\nabla\operatorname{div}\left(  R_{k_{s}^{2}}^{\pm}%
-R_{k_{p}^{2}}^{\pm}\right)  =\lim_{\varepsilon\rightarrow0^{+}}%
\mathcal{K}_{\omega^{2}\pm i\varepsilon}\,, \label{Kupradze_res_lim}%
\end{equation}
exist in $\mathsf{B}\left(  \left(  H_{\eta}^{s}\left(  \mathbb{R}^{n}\right)
\right)  ^{n},\left(  H_{-\eta}^{s+2}\left(  \mathbb{R}^{n}\right)  \right)
^{n}\right)  $ with $\eta>1/2$, $s\in\mathbb{R}$, and they satisfy%
\begin{equation}
\left(  -L_{0}-\omega^{2}\right)  \mathcal{K}_{\omega^{2}}^{\pm}%
=\mathbf{I}_{n}\,. \label{Kupradze_res_lim_id}%
\end{equation}

\end{theorem}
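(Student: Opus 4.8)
The strategy is to transfer the stated scalar properties of the Laplacian resolvent $R_z$ to $\mathcal{K}_z$ through the Kupradze factorization (\ref{Kupradze_res}). The point to exploit at once is that the singular prefactor $1/z$ there is only apparent: applying the resolvent identity (\ref{Res_id}) to the rescaled energies $z/\mu$ and $z/(\lambda+2\mu)$ gives $R_{z/\mu}-R_{z/(\lambda+2\mu)}=-\frac{(\lambda+\mu)z}{\mu(\lambda+2\mu)}R_{z/(\lambda+2\mu)}R_{z/\mu}$, so that
\begin{equation*}
\mathcal{K}_z=\frac{1}{\mu}\,R_{z/\mu}-\frac{\lambda+\mu}{\mu(\lambda+2\mu)}\,\nabla\operatorname{div}\,R_{z/(\lambda+2\mu)}R_{z/\mu}.
\end{equation*}
The factor $z$ cancels, the pole at $z=0$ disappears, and the second summand becomes a genuine composition of two resolvents. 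Note also that (\ref{ConL}) forces $\lambda+\mu\neq0$ for $n=2,3$ (otherwise $n\lambda+2\mu=(2-n)\mu\le0$), whence $k_p\neq k_s$; this separation of the compressional and shear energies will be crucial below.

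For part $i)$, holomorphy on $\mathbb{C}\setminus[0,+\infty)$ is inherited from that of $R_z$, since $z\mapsto z/\mu$ and $z\mapsto z/(\lambda+2\mu)$ preserve $\mathbb{C}\setminus[0,+\infty)$ (as $\mu,\lambda+2\mu>0$) and products and compositions of holomorphic operator families with the fixed operator $\nabla\operatorname{div}$ stay holomorphic. The first summand obeys (\ref{Kupradze_res_est}) directly from (\ref{R_est}) together with $d(z/\mu,[0,+\infty))=\mu^{-1}d(z,[0,+\infty))$. For the second summand I would split the two-derivative gain between the two resolvents: writing $t+2=t_1+t_2$ with $t_1,t_2\in[0,2]$ and applying (\ref{R_est}) to $R_{z/\mu}\in\mathsf{B}(H^s,H^{s+t_1})$ and $R_{z/(\lambda+2\mu)}\in\mathsf{B}(H^{s+t_1},H^{s+t_1+t_2})$ produces the bound $d^{-(1-t/2)}(z,[0,+\infty))$, after which $\nabla\operatorname{div}\in\mathsf{B}(H^{s+t+2},H^{s+t})$ costs only a constant. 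This yields (\ref{Kupradze_res_est}) up to a multiplicative constant depending on $\lambda,\mu$.

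For part $ii)$ the first summand is immediate: as $z=\omega^2\pm i\varepsilon\to\omega^2\pm i0$ one has $z/\mu\to k_s^2\pm i0$ (the sign of the imaginary part is preserved since $\mu>0$), so $\frac1\mu R_{z/\mu}\to\frac1\mu R^\pm_{k_s^2}$ in $\mathsf{B}((H_\eta^s)^n,(H_{-\eta}^{s+2})^n)$ by (\ref{Res_ext_map}). The real obstacle is the second summand, a product of two resolvents: a factorwise passage to the limit fails, because the inner resolvent converges only into the negatively weighted space $H_{-\eta}^{s+2}$, on which the limiting absorption principle for the outer resolvent is unavailable, while the crude mapping $\nabla\operatorname{div}\in\mathsf{B}(H_{-\eta}^{s+2},H_{-\eta}^s)$ would throw away the two derivatives that must be retained. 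My plan is to recover those derivatives by borrowing smoothing from an off-axis resolvent. Fixing $z_0\in\mathbb{C}\setminus[0,+\infty)$, for which $R_{z_0}\in\mathsf{B}(H_\gamma^\sigma,H_\gamma^{\sigma+2})$ for every $\sigma,\gamma\in\mathbb{R}$, I combine (\ref{Res_id}) in the form $R_{z/(\lambda+2\mu)}=R_{z_0}+(z_0-z/(\lambda+2\mu))R_{z_0}R_{z/(\lambda+2\mu)}$ with the commuting identity $R_{z/(\lambda+2\mu)}R_{z/\mu}=(R_{z/(\lambda+2\mu)}-R_{z/\mu})/(z/\mu-z/(\lambda+2\mu))$ to write
\begin{equation*}
R_{z/(\lambda+2\mu)}R_{z/\mu}=R_{z_0}R_{z/\mu}+\frac{z_0-z/(\lambda+2\mu)}{z/\mu-z/(\lambda+2\mu)}\,R_{z_0}\big(R_{z/(\lambda+2\mu)}-R_{z/\mu}\big).
\end{equation*}
On the right, $R_{z/\mu}$ and $R_{z/(\lambda+2\mu)}$ converge in $\mathsf{B}((H_\eta^s)^n,(H_{-\eta}^{s+2})^n)$ by (\ref{Res_ext_map}), the fixed operator $R_{z_0}$ carries $H_{-\eta}^{s+2}$ into $H_{-\eta}^{s+4}$, and the scalar prefactor stays bounded precisely because $k_s\neq k_p$; hence the product converges in $\mathsf{B}((H_\eta^s)^n,(H_{-\eta}^{s+4})^n)$. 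Applying the fixed operator $\nabla\operatorname{div}\in\mathsf{B}(H_{-\eta}^{s+4},H_{-\eta}^{s+2})$ then gives convergence of the second summand, hence of $\mathcal{K}_z$, in $\mathsf{B}((H_\eta^s)^n,(H_{-\eta}^{s+2})^n)$, with limit (\ref{Kupradze_res_lim}). Finally, (\ref{Kupradze_res_lim_id}) follows by applying the constant-coefficient operator $L_0$, continuous from $H_{-\eta}^{s+2}$ into $H_{-\eta}^s$, to the resolvent equation satisfied by $\mathcal{K}_{\omega^2\pm i\varepsilon}$ and letting $\varepsilon\to0^+$, the limit passing inside by the convergence just established. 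The step I expect to be most delicate is exactly this product limit — upgrading the smoothing from two to four derivatives uniformly up to the spectrum — and it is there that the separation $k_p\neq k_s$ is indispensable.
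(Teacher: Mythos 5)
Your proposal is correct and follows essentially the same route as the paper: part $i)$ uses the identical rewriting of $\mathcal{K}_z$ via the resolvent identity to cancel the $1/z$ prefactor, and part $ii)$ rests on the same key mechanism of anchoring at a fixed $z_0\in\mathbb{C}\setminus[0,+\infty)$ so that the off-axis resolvent $R_{z_0}$ supplies the two extra derivatives of weighted smoothing that $\nabla\operatorname{div}$ then consumes. The only (cosmetic) difference is that you apply this identity to the product $R_{z/(\lambda+2\mu)}R_{z/\mu}$ and convert via the commuting relation, whereas the paper applies it directly to the difference $R_{z/\mu}-R_{z/(\lambda+2\mu)}$; the two manipulations are algebraically equivalent, and your explicit splitting $t+2=t_1+t_2$ for the estimate and the observation that \eqref{ConL} forces $\lambda+\mu\neq0$ are correct refinements of details the paper leaves implicit.
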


\begin{proof}
$i)$ The mapping properties of $z\rightarrow R_{z}$ imply that $z\rightarrow
R_{z/\mu}R_{z/\left(  \lambda+2\mu\right)  }$ is a $\mathsf{B}\left(  \left(
H^{s}\left(  \mathbb{R}^{n}\right)  \right)  ^{n},\left(  H^{s+4}\left(
\mathbb{R}^{n}\right)  \right)  ^{n}\right)  $-valued holomorphic function in
$z\in\mathbb{C}\backslash\left[  0,+\infty\right)  $; by (\ref{Res_id}), we
have%
\begin{equation}
\frac{1}{z}\nabla\operatorname{div}\left(  R_{z/\mu}-R_{z/\left(  \lambda
+2\mu\right)  }\right)  =-\frac{\lambda+\mu}{\mu\left(  \lambda+2\mu\right)
}\nabla\operatorname{div}\left(  R_{z/\mu}R_{z/\left(  \lambda+2\mu\right)
}\right)  \,,
\end{equation}
and (\ref{Kupradze_res}) rephrases as%
\begin{equation}
\mathcal{K}_{z}=\frac{1}{\mu}R_{z/\mu}-\frac{\lambda+\mu}{\mu\left(
\lambda+2\mu\right)  }\nabla\operatorname{div}\left(  R_{z/\mu}R_{z/\left(
\lambda+2\mu\right)  }\right)  \,.
\end{equation}
Hence the first statement follows from the mapping properties of $R_{z}$
recalled above. In particular the estimates (\ref{Kupradze_res_est}) follow
from (\ref{Kupradze_res}) and (\ref{R_est}). $ii)$ Let $z_{0},z_{1},z_{2}%
\in\mathbb{C}\backslash\left[  0,+\infty\right)  $; using twice (\ref{Res_id})
we have%
\begin{equation}
R_{z_{1}}-R_{z_{2}}=R_{z_{0}}\left(  \left(  z_{0}-z_{1}\right)  R_{z_{1}%
}-\left(  z_{0}-z_{2}\right)  R_{z_{2}}\right)  \,.
\end{equation}
By (\ref{Res_ext_map}), the r.h.s. continuously extends both to the limits
$z_{j=1,2}\rightarrow\omega^{2}\pm i0$ as a $\mathsf{B}\left(  \left(
H_{\eta}^{s}\left(  \mathbb{R}^{n}\right)  \right)  ^{n},\left(  H_{-\eta
}^{s+4}\left(  \mathbb{R}^{n}\right)  \right)  ^{n}\right)  $-valued map, for
$\eta>1/2$. Hence, $z\rightarrow\nabla\operatorname{div}\left(  R_{z_{s}%
}-R_{z_{p}}\right)  $ is continuous in $\mathbb{C}\backslash\left[
0,+\infty\right)  $ up to $z\rightarrow\omega^{2}\pm i0$ with values in
$\mathsf{B}\left(  \left(  H_{\eta}^{s}\left(  \mathbb{R}^{n}\right)  \right)
^{n},\left(  H_{-\eta}^{s+2}\left(  \mathbb{R}^{n}\right)  \right)
^{n}\right)  $ and the limits are defined by%
\begin{equation}
R_{k_{s}^{2}}^{\pm}-R_{k_{p}^{2}}^{\pm}=R_{z_{0}}\left(  \left(  z_{0}%
-k_{s}^{2}\right)  R_{k_{s}^{2}}^{\pm}-\left(  z_{0}-k_{p}^{2}\right)
R_{k_{p}^{2}}^{\pm}\right)  \,.
\end{equation}
From (\ref{Kupradze_res}), the limits (\ref{Kupradze_res_lim}) hold. Finally,
the limits (\ref{Kupradze_res_lim_id}) follows from%
\begin{equation}%
\begin{array}
[c]{ccc}%
\left(  -L_{0}-z\right)  \mathcal{K}_{z}u=u\,, &  & u\in\left(  H_{\eta}%
^{s}\left(  \mathbb{R}^{n}\right)  \right)  ^{n}\,.
\end{array}
\end{equation}

\end{proof}

The Green kernels of the limiting operators $R_{k^{2}}^{\pm}$, expressed by
the limits%
\begin{equation}
\Phi_{k^{2}}^{\pm}:=\lim_{\varepsilon\rightarrow0^{+}}\Phi_{k^{2}\pm
i\varepsilon}\,,
\end{equation}
are radiating solutions of the Helmoholtz equation in $\mathbb{R}%
\backslash\left\{  0\right\}  $, i.e. these satisfy the radiation conditions%
\begin{equation}
\lim_{r \rightarrow0}r^{\left(  n-1\right)  /2}\left(  \partial_{r}\mp i
k\right)  \Phi_{k^{2}}^{\pm}\left(  x\right)  =0\,,\quad r=|x|.
\end{equation}
From (\ref{Kupradze_res_lim}), the corresponding limit kernels of
$\mathcal{K}_{\omega^{2}}^{\pm}$ are given by%
\begin{equation}
\Gamma_{\omega^{2}}^{\pm}:=\frac{1}{\mu}\Phi_{k_{s}}^{\pm}+\frac{1}{\omega
^{2}}\nabla\operatorname{div}\left(  \Phi_{k_{s}}^{\pm}-\Phi_{k_{p}}^{\pm
}\right)  \,,
\end{equation}
where we recall that $k_{p}$ and $k_{s}$ are the compressional and shear
wavenumbers, respectively. Following \cite{Kup}, these are solutions of the
Lam\'{e} stationary equation $\left(  L_{0}-\omega^{2}\right)  \Gamma
_{\omega^{2}}^{\pm}=0$ in $\mathbb{R}\backslash\left\{  0\right\}  $ and, for
$\omega\neq0$, fulfill the Kupradze radiation conditions%
\begin{align}
&  \left.  \lim_{r \rightarrow0}r^{\left(  n-1\right)  /2}\left(  \partial
_{r}\mp i\,k_{p}\right)  \nabla\operatorname{div}\Gamma_{\omega^{2}}^{\pm
}=0\,,\right. \label{rad_cond_L}\\
&  \left.  \lim_{r\rightarrow0}r^{\left(  n-1\right)  /2}\left(  \partial
_{r}\mp i\,k_{s}\right)  \nabla\times\nabla\times\Gamma_{\omega^{2}}^{\pm
}=0\,.\right.  \label{rad_cond_T}%
\end{align}

\subsection{\label{Sec_Green_2D}Outgoing Green's tensor in 2D}

We recall some known properties of the integral kernels $\Phi_{z}$ and
$\Gamma_{z}$ for $n=2$; more details can be found in \cite{Kup}, \cite{HuSi}
and references therein. If $n=2$, the integral kernel of $R_{z}$ is given by%
\begin{equation}
\Phi_{z}\left(  x-y\right)  :=\frac{i}{4}H_{0}^{\left(  1\right)  }\left(
\zeta\left\vert x-y\right\vert \right)  \,,\qquad\zeta\in\mathbb{C}^{+}%
:\zeta^{2}=z\in\mathbb{C}\backslash\left[  0,+\infty\right)  \,,
\label{Green_ker_2D}%
\end{equation}
where $H_{0}^{\left(  1\right)  }$ is the Henkel function of the first kind
and of order zero. By (\ref{Gamma_Res_ker}) and (\ref{Green_ker_2D}), the
$\mathcal{C}^{\infty}\left(  \mathbb{R}^{2,2}\right)  $-valued map
$z\rightarrow\Gamma_{z}\left(  x\right)  $, introduced in (\ref{Gamma_Res_ker}%
), is holomorphic in $z\in\mathbb{C}\backslash\left[  0,+\infty\right)  $ and
continuously extends to the limits $z\rightarrow\omega^{2}\pm i0$,
$k\in\mathbb{R}$.; in particular, by \cite[eq. (11)]{HuSi}, the limit%
\begin{equation}
\Gamma_{0}\left(  x\right)  :=\lim_{z\rightarrow0}\Gamma_{z}\left(  x\right)
=\frac{1}{4\pi}\left[  -\frac{\lambda+3\mu}{\mu\left(  \lambda+2\mu\right)
}\ln\left\vert x\right\vert
\begin{pmatrix}
1 & 0\\
0 & 1
\end{pmatrix}
+\frac{\lambda+\mu}{\mu\left(  \lambda+2\mu\right)  }\frac{1}{\left\vert
x\right\vert ^{2}}%
\begin{pmatrix}
x_{1}^{2} & x_{1}x_{2}\\
x_{1}x_{2} & x_{2}^{2}%
\end{pmatrix}
\right]  \,, \label{Gamma_Res_ker_2D_lim}%
\end{equation}
pointwise holds in $\mathbb{R}^{2}\backslash\left\{  0\right\}  $. From
(\ref{Gamma_Res_ker_2D_lim}) we get%
\begin{equation}
\det\left(  \Gamma_{0}\left(  x\right)  \right)  =\frac{\ln\left\vert
x\right\vert \left(  \lambda+3\mu\right)  }{\left(  4\pi\mu\left(
\lambda+2\mu\right)  \right)  ^{2}}\left(  \left(  \lambda+\mu\right)
-\left(  \lambda+3\mu\right)  \ln\left\vert x\right\vert \right)  \,.
\end{equation}
Hence the inverse matrix $\Gamma_{0}^{-1}\left(  x\right)  $ exists in
$\mathbb{R}^{2}\backslash\left\{  0,\left\vert x\right\vert =e^{\frac
{\lambda+\mu}{\lambda+3\mu}}\right\}  $ where it writes as%
\be
\Gamma_{0}^{-1}\left(  x\right)  &=& \frac{\left(  4\pi\mu\left(  \lambda+2\mu\right)  \right)  ^{2}%
}{\ln\left\vert x\right\vert \left(  \lambda+3\mu\right)  \left(  \left(
\lambda+\mu\right)  -\left(  \lambda+3\mu\right)  \ln\left\vert x\right\vert
\right)  }%\nonumber\\
\nonumber\\
&\quad& \times
\begin{pmatrix}
\left(  \lambda+3\mu\right)  \ln\left\vert x\right\vert -\left(  \lambda
+\mu\right)  \frac{x_{2}^{2}}{\left\vert x\right\vert ^{2}} & -\left(
\lambda+\mu\right)  \frac{x_{1}x_{2}}{\left\vert x\right\vert ^{2}}\\
-\left(  \lambda+\mu\right)  \frac{x_{1}x_{2}}{\left\vert x\right\vert ^{2}} &
\left(  \lambda+3\mu\right)  \ln\left\vert x\right\vert -\left(  \lambda
+\mu\right)  \frac{x_{1}^{2}}{\left\vert x\right\vert ^{2}}%
\end{pmatrix}
\,. \label{Gamma_lim_inv_2D}%
\en
It follows%
\begin{equation}%
\begin{array}
[c]{ccc}%
\left\vert \Gamma_{0}^{-1}\left(  x\right)  \right\vert _{\mathbb{R}^{2,2}%
}=\mathcal{O}\left(  1/\ln\left\vert x\right\vert \right)  \,, &  & \text{as
}x\rightarrow0\,.
\end{array}
\label{Gamma_lim_inv_2D_est}%
\end{equation}
Let%
\begin{equation}
\tilde{\Gamma}_{z}\left(  x\right)  :=\Gamma_{z}\left(  x\right)  -\Gamma
_{0}\left(  x\right)  \,; \label{Gamma_tilda}%
\end{equation}
by \cite[Lemma 2.1]{HuSi} it results%
\begin{equation}
\left\{
\begin{array}
[c]{l}%
\chi_{z}\,\mathbf{I}_{2}:=\lim_{\left\vert x\right\vert \rightarrow0}%
\tilde{\Gamma}_{z}\left(  x\right)  \,,\\
\\
\chi_{z}=\chi_{z}(\lambda,\mu)=-\frac{1}{4\pi}\left[  \frac{\lambda+3\mu}%
{\mu\left(  \lambda+2\mu\right)  }\left(  \ln\frac{\sqrt{z}}{2}+C-i\frac{\pi
}{2}\right)  +\frac{\lambda+\mu}{\mu\left(  \lambda+2\mu\right)  }-\frac{1}%
{2}\left(  \frac{\ln\mu}{\mu}+\frac{\ln\left(  \lambda+2\mu\right)  }%
{\lambda+2\mu}\right)  \right]  \,,
\end{array}
\right.  \label{Gamma_tilda_lim_2D}%
\end{equation}
where $C$ is the Euler constant and $\sqrt{z}$ is defined with
$\operatorname{Im}\sqrt{z}>0$. From (\ref{Gamma_tilda_lim_2D}) we get%
\begin{equation}%
\begin{array}
[c]{ccc}%
\left\vert \Gamma_{0}^{-1}\left(  x\right)  \tilde{\Gamma}_{z}\left(
x\right)  \right\vert _{\mathbb{R}^{2,2}}=\mathcal{O}\left(  1/\ln\left\vert
x\right\vert \right)  \,, &  & \text{as }x\rightarrow0\,.
\end{array}
\label{Gamma_lim_inv_2D_est_1}%
\end{equation}
By (\ref{Gamma_lim_inv_2D_est})-(\ref{Gamma_lim_inv_2D_est_1}) we get%
\begin{equation}
\Gamma_{0}^{-1}\left(  x\right)  \Gamma_{z}\left(  y\right)  =\Gamma_{0}%
^{-1}\left(  x\right)  \left(  \Gamma_{0}\left(  y\right)  +\tilde{\Gamma}%
_{z}\left(  y\right)  \right)  =\left\{
\begin{array}
[c]{lll}%
\mathbf{I}_{2}+\mathcal{O}\left(  1/\ln\left\vert x\right\vert \right)  \,, &
& \text{as }x\rightarrow0\text{ if }x=y\,,\\
&  & \\
o\left(  1/\ln\left\vert x\right\vert \right)  \,, &  & \text{as }%
x\rightarrow0\text{ if }x\neq y\,,
\end{array}
\right.  \label{Gamma_lim_inv_2D_est_2}%
\end{equation}
holding in the $\mathbb{R}^{2,2}$-norm sense.

\subsection{\label{Sec_Green_3D}Outgoing Green's tensor in 3D}

If $n=3$, the integral kernel of $R_{z}$ is explicitly given by%
\begin{equation}
\Phi_{z}\left(  x-y\right)  =\frac{e^{i\zeta\left\vert x-y\right\vert }}%
{4\pi\left\vert x-y\right\vert }\,,\qquad\zeta\in\mathbb{C}^{+}:\zeta^{2}%
=z\in\mathbb{C}\backslash\left[  0,+\infty\right)  \,. \label{Green_ker_3D}%
\end{equation}
By (\ref{Gamma_Res_ker}) and (\ref{Green_ker_3D}), the $\mathcal{C}^{\infty
}\left(  \mathbb{R}^{3,3}\right)  $-valued map $z\rightarrow\Gamma_{z}\left(
x\right)  $ is holomorphic in $z\in\mathbb{C}\backslash\left[  0,+\infty
\right)  $ and continuously extends to the limits $z\rightarrow k\pm i0$,
$k\in\mathbb{R}$.; by (\cite[eq. 48]{HuSi}), the limit%
\begin{equation}
\left(  \Gamma_{0}\left(  x\right)  \right)  _{j,\ell}:=\lim_{z\rightarrow
0}\left(  \Gamma_{z}\left(  x\right)  \right)  _{j,\ell}=\frac{\lambda+3\mu
}{8\pi\mu\left(  \lambda+2\mu\right)  }\frac{\delta_{j,\ell}}{\left\vert
x\right\vert }+\frac{\lambda+\mu}{8\pi\mu\left(  \lambda+2\mu\right)  }%
\frac{x_{j}x_{\ell}}{\left\vert x\right\vert ^{3}}\,,
\label{Gamma_Res_ker_3D_lim}%
\end{equation}
pointwise holds in $\mathbb{R}^{3}\backslash\left\{  0\right\}  $. From%
\begin{equation}
\det\left(  \frac{8\pi\mu\left(  \lambda+2\mu\right)  \left\vert x\right\vert
^{3}}{\lambda+\mu}\Gamma_{0}\left(  x\right)  \right)  =\left(  \left(
\frac{2\mu}{\lambda+\mu}\right)  ^{2}+3\frac{2\mu}{\lambda+\mu}+2\right)
\left\vert x\right\vert ^{2}>0\quad\text{in }\mathbb{R}^{3}\backslash\left\{
0\right\}  \,.
\end{equation}
This allows to define the inverse tensor $\Gamma_{0}^{-1}\left(  x\right)  $
which writes as%
\begin{equation}%
\begin{array}
[c]{ccc}%
\left(  \Gamma_{0}^{-1}\left(  x\right)  \right)  _{j,\ell}=q\left(
\lambda,\mu\right)  \left\vert x\right\vert \left(  2\left(  1+\frac{\mu
}{\lambda+\mu}\right)  \left\vert x\right\vert ^{2}\delta_{j,\ell}%
-x_{j}x_{\ell}\right)  \,, &  & q\left(  \lambda,\mu\right)  :=4\pi\mu
\frac{\lambda+\mu}{\lambda+3\mu}\,.
\end{array}
\label{Gamma_lim_inv_3D}%
\end{equation}
It follows%
\begin{equation}%
\begin{array}
[c]{ccc}%
\left\vert \Gamma_{0}^{-1}\left(  x\right)  \right\vert _{\mathbb{R}^{3,3}%
}=o\left(  \left\vert x\right\vert \right)  \,, &  & \text{as }x\rightarrow
0\,.
\end{array}
\label{Gamma_lim_inv_3D_est}%
\end{equation}
As before we denote%
\begin{equation}
\tilde{\Gamma}_{z}\left(  x\right)  :=\Gamma_{z}\left(  x\right)  -\Gamma
_{0}\left(  x\right)  \,. \label{Gamma_tilda_3D}%
\end{equation}
By (\cite[eq. 49]{HuSi}) results%
\begin{equation}
\chi_{z}\,\mathbf{I}_{3}:=\lim_{x\rightarrow0}\tilde{\Gamma}_{z}\left(
x\right)  =i\sqrt{z}\frac{2\lambda+5\mu}{12\pi\mu\left(  \lambda+2\mu\right)
}\,\mathbf{I}_{3}\,, \label{Gamma_tilda_lim_3D}%
\end{equation}
and from (\ref{Gamma_lim_inv_3D_est}) we get%
\begin{equation}%
\begin{array}
[c]{ccc}%
\left\vert \Gamma_{0}^{-1}\left(  x\right)  \tilde{\Gamma}_{\zeta}\left(
x\right)  \right\vert _{\mathbb{R}^{3,3}}=o\left(  \left\vert x\right\vert
\right)  \,, &  & \text{as }x\rightarrow0\,.
\end{array}
\label{Gamma_lim_inv_3D_est_1}%
\end{equation}
It follows%
\begin{equation}
\Gamma_{0}^{-1}\left(  x\right)  \Gamma_{\zeta}\left(  y\right)  =\Gamma
_{0}^{-1}\left(  x\right)  \left(  \Gamma_{0}\left(  y\right)  +\tilde{\Gamma
}_{\zeta}\left(  y\right)  \right)  =\left\{
\begin{array}
[c]{lll}%
\mathbf{I}_{3}+o\left(  \left\vert x\right\vert \right)  \,, &  & \text{as
}x\rightarrow0\text{ if }x=y\,,\\
&  & \\
o\left(  \left\vert x\right\vert \right)  \,, &  & \text{as }x\rightarrow
0\text{ if }x\neq y\,,
\end{array}
\right.  \label{Gamma_lim_inv_3D_est_2}%
\end{equation}
holding in the $\mathbb{R}^{3,3}$-norm sense.

\section{\label{sec:point}Elastic scattering by a collection of point-like
obstacles}

\setcounter{equation}{0} In this section we consider the time-harmonic elastic
scattering by $N$ point-like scatterers located at $y^{(k)},k=1,\cdots,N$ in
${\mathbb{R}}^{n}$ $(n=2,3)$. The set of these point-like scatterers will be
denoted by $Y:=\{y^{(k)}:k=1,\cdots,N\}$. Physically, such small obstacles are
related to highly concentrated inhomogeneous elastic medium (i.e. the mass
density in our case) with sufficiently small diameters compared to the
wave-length of incidence. In other words we shall suppose that
\[%
\begin{array}
[c]{cccc}%
\Omega=\bigcup_{k=1}^{N}D_{k}\,,~~\rho\lvert_{D_{k}}\approx(\mbox{diam}(D_{k}%
)^{-n}\;\mbox{ and } & \frac{\omega}{2\pi}\mbox{diam}(D_{k})\ll1\,, &  &
k=1,\cdots,N\,.
\end{array}
\]

Note that the wave-length for compressional and shear waves are defined via
\begin{equation}\label{wavelength}
\lambda_{p}:=\frac{2\pi}{k_{p}}=\frac{2\pi}{\omega}\sqrt{\lambda+2\mu}%
,\quad\lambda_{s}:=\frac{2\pi}{k_{s}}=\frac{2\pi}{\omega}\sqrt{\mu}%
\end{equation}
respectively. From mathematical point of view, the presence of these
point-like obstacles corresponds to a formal Delta-like perturbation of the
density function
\begin{equation}
\rho(x)-1=\sum_{k=1}^{N}\,a_{k}\,\delta(x-y^{(k)})\,, \label{delta}%
\end{equation}
where $a_{k}\in{\mathbb{C}}$ is the scattering strength (coupling constant)
attached to the scatter located at $y^{(k)}$. We write $a=(a_{1},a_{2}%
,\cdots,a_{N})\in{\mathbb{C}}^{N}$ and denote by $\mathbf{I}_{N}$ the identity
matrix in ${\mathbb{C}}^{N\times N}$.

\subsection{Foldy approach}

A formal solution to the scattering problem (\ref{Navier}),(\ref{Kupradze})
and (\ref{delta}) is given by
\begin{equation}
u(x)=u^{in}(x)+\sum_{m=1}^{N}a_{m}\,\Gamma_{\omega^{2}}(x,y^{(m)}%
)\,u(y^{(m)}),\quad x\neq y^{(m)},\qquad m=1,2,\cdots,N, \label{solution}%
\end{equation}
where $\Gamma_{\omega^{2}}(x,y)$ is the fundamental tensor of the Lam\'{e}
operator. To determine the value of $u$ at $y^{(k)}$ on the right hand side of
(\ref{solution}), the Foldy approach, originated in acoustic scattering from
many particles \cite{Foldy,VCL1998, Martin2006}, suggests solving the linear
algebraic system
\begin{equation}
u(y^{(k)})=u^{in}(y^{(k)})+\sum_{m=1,m\neq j}^{N}a_{m}\,\Gamma_{\omega^{2}%
}(y^{(k)},y^{(m)})\,u(y^{(m)}),\quad k=1,2,\cdots,N. \label{Foldy}%
\end{equation}
In fact, the Foldy system (\ref{Foldy}) follows from taking the limits
$x\rightarrow y^{(k)}$ in (\ref{solution}) and removing the singular term
(i.e., when $m=k$) in the sum. It has been shown in (\cite{C-S}) that the
algebraic system is uniquely solvable except for a discrete set of frequencies
and some particular distribution of the points at $y^{(k)}$. Inserting the
solution of (\ref{Foldy}) into (\ref{solution}) we obtain an explicit
representation of the solution of our scattering problem in the form of
\begin{equation}
u(x)=u^{in}+\sum_{m,k=1}^{N}\Gamma_{\omega^{2}}(x,y^{(k)})\;[\Pi_{\omega^{2}%
}^{-1}]_{m,k}\;u^{in}(y^{(m)}), \label{solution1}%
\end{equation}
where $[\Pi_{\omega^{2}}^{-1}]_{m,k}$ denotes the $(m,k)$-th entry of the
inverse of the block-matrix $\Pi_{\omega^{2}}$ given by
\[
\Pi_{\omega^{2}}:=%
\begin{pmatrix}
\mathbf{I}_{n} & -a_{2}\,\Gamma_{\omega^{2}}(y^{(1)},y^{(2)}) & \cdots &
-a_{N}\,\Gamma_{\omega^{2}}(y^{(1)},y^{(N)})\\
-a_{1}\,\Gamma_{\omega^{2}}(y^{(2)},y^{(1)}) & \mathbf{I}_{n} & \cdots &
-a_{N}\,\Gamma_{\omega^{2}}(y^{(2)},y^{(N)})\\
\vdots & \vdots & \ddots & \vdots\\
-a_{N}\,\Gamma_{\omega^{2}}(y^{(N)},y^{(1)}) & -a_{2}\,\Gamma_{\omega^{2}%
}(y^{(2)},y^{(N)}) & \cdots & \mathbf{I}_{n}%
\end{pmatrix}
.
\]
A rigorous justification of the Foldy system (\ref{solution1}) was carried out
in \cite{HM} by applying the \emph{renormalization techniques} arising from
quantum mechanics for describing the point interaction of finitely many
particles. Replacing the scattering coefficients $a_{k}$ by parameter
dependent functions $ak(\epsilon)$ that decay in a suitable way as
$\epsilon\rightarrow0$, one can show via Weinstein-Aronszajn determinant
formula that the resolvent of the a family of $\epsilon$-dependent delta
perturbations of the Lam\'{e} operator converges in Agmon's weighted spaces.
The resolvent of the limiting operator leads to the same expression of $u$ as
in (\ref{solution1}) with
\[
a_{k}=b_{k}-\chi_{\omega^{2}},\quad b_{k}\in{\mathbb{C}},\qquad k=1,\cdots,N,
\]
where $\chi_{\omega^{2}}\in{\mathbb{C}}$ are the dimension-dependent constants
given by (\ref{Gamma_tilda_lim_2D}) and (\ref{Gamma_tilda_lim_3D}) with
$z=\omega^{2}$. These quantities corresponds to the the normalizing constants
introduced in \cite[relations (12) and (49)]{HuSi}.

\subsection{Point-interaction of the Lam\'e operator}

\label{sec:point-interaction} In this study, the scattering effect due to the
presence of point-like obstacles is modeled as elastic point interactions.
Within this model, the scattering problem will be regarded as singular
perturbations supported on points.

%\section{}

We next consider a collection of $n$ points $Y=\left\{  y^{(k)}\right\}
_{k=1}^{N}\subset\mathbb{R}^{n}$; by the Sobolev imbedding $\mathcal{C}%
^{1}\hookrightarrow H_{\eta}^{2}\left(  \mathbb{R}^{n}\right)  $ holding for
$n\leq3$, $\eta\in\mathbb{R}$, the auxiliary map%
\begin{equation}
\gamma:\left(  H_{\eta}^{2}\left(  \mathbb{R}^{n}\right)  \right)
^{n}\rightarrow\mathbb{C}^{n,N}\,,\quad\left(  \gamma u\right)  _{j,k}%
:=u_{j}\left(  y^{(k)}\right)  \,,\ j=1,..n\,,\ k=1,..N\,, \label{trace}%
\end{equation}
is continuous and surjective. From the identity
\begin{equation}
\left\langle \gamma\varphi,m\right\rangle _{\mathbb{R}^{n}}=\left\langle
\varphi,\gamma^{\ast}m\right\rangle _{\left(  H^{2}\left(  \mathbb{R}%
^{n}\right)  \right)  ^{n},\left(  H^{-2}\left(  \mathbb{R}^{n}\right)
\right)  ^{n}}\,,\quad\varphi\in\left(  H_{\eta}^{2}\left(  \mathbb{R}%
^{n}\right)  \right)  ^{n}\,,
\end{equation}
follows that%
\begin{equation}%
\begin{array}
[c]{ccc}%
\gamma^{\ast}\in\mathsf{B}\left(  \mathbb{C}^{n,N},\left(  H_{Y}^{-2}\left(
\mathbb{R}^{n}\right)  \right)  ^{n}\right)  &  & \left(  \left(  \gamma
^{\ast}X\right)  \left(  x\right)  \right)  _{j}:=%
%TCIMACRO{\tsum \nolimits_{k=1}^{N}}%
%BeginExpansion
{\textstyle\sum\nolimits_{k=1}^{N}}
%EndExpansion
X_{j,k}\,\delta\left(  x-y^{(k)}\right)  \,,
\end{array}
\label{trace_adj}%
\end{equation}
i,e.: $\mathsf{ran}\left(  \gamma^{\ast}\right)  $ is formed by $H^{-2}$
vector-valued delta distributions supported on $Y$. The singular perturbations
of $L_{0}$ supported on $Y$ are defined by the selfadjoint extensions of the
symmetric restriction $L_{0}\upharpoonright\ker\left(  \gamma\right)  $. These
models are next defined following the approach developped in \cite{MaPo SM}.

\begin{lemma}
\label{Lemma_G_z}Let%
\begin{equation}%
\begin{array}
[c]{ccc}%
G_{z}:=\mathcal{K}_{z}\gamma^{\ast}\,, &  & z\in\mathbb{C}\backslash\left[
0,+\infty\right)  \,.
\end{array}
\label{G_z_def}%
\end{equation}
$i)$ The map $z\rightarrow G_{z}$, $z\in\mathbb{C}\backslash\left[
0,+\infty\right)  $, is analytic $\mathsf{B}\left(  \mathbb{C}^{n,N},\left(
L_{\eta}^{2}\left(  \mathbb{R}^{n}\right)  \right)  ^{n}\right)  $-valued for
all $\eta\in\mathbb{R}$ and, for $\eta>1/2$, continuosly extends to%
\begin{equation}%
\begin{array}
[c]{ccc}%
G_{z}^{\pm}:=\lim_{\varepsilon\rightarrow0^{+}}G_{z\pm i\varepsilon}%
\in\mathsf{B}\left(  \mathbb{C}^{n,N},\left(  L_{-\eta}^{2}\left(
\mathbb{R}^{n}\right)  \right)  ^{n}\right)  \,, &  & \eta>1/2\,.
\end{array}
\label{G_z_map_lim}%
\end{equation}
$ii)$ There exists $c>0$ (possibly depending on $z$ and $\eta$) such that%
\begin{equation}%
\begin{array}
[c]{cc}%
\left\Vert G_{z}X\right\Vert _{\left(  L_{\eta}^{2}\left(  \mathbb{R}%
^{n}\right)  \right)  ^{n}}>c\left\Vert X\right\Vert _{\mathbb{C}^{n,N}}\,, &
\eta\in\mathbb{R}\,.
\end{array}
\label{G_z_low_bound}%
\end{equation}
$iii)$ For $z,z_{0}\in\mathbb{C}\backslash\left[  0,+\infty\right)  $ and
$\lambda>0$ the identities%
\begin{equation}%
\begin{array}
[c]{ccc}%
G_{z}-G_{z_{0}}=\left(  z_{0}-z\right)  \mathcal{K}_{z_{0}}G_{z}\,, &  &
G_{z}^{\pm}-G_{z_{0}}=\left(  z_{0}-z\right)  \mathcal{K}_{z_{0}}G_{z_{0}%
}^{\pm}\,,
\end{array}
\label{G_z_id}%
\end{equation}
hold and
\be\label{G_z_map_1}%
&& \left(  G_{z}-G_{z_{0}}\right)  \in\mathsf{B}\left(  \mathbb{C}^{n,N},\left(
H_{\eta}^{2}\left(  \mathbb{R}^{n}\right)  \right)  ^{n}\right)  \,,\quad
\eta\in\mathbb{R}\,, \nonumber \\
&& \left(  G_{z}^{\pm}-G_{z_{0}}\right)  \in
\mathsf{B}\left(  \mathbb{C}^{n,N},\left(  H_{-\eta}^{2}\left(\mathbb{R}%
^{n}\right)  \right)  ^{n}\right)  \,,\quad\eta>1/2\,.
\en
$iv)$ For $X\in\mathbb{C}^{n,N}$ it results%
\be\label{G_z_id_1}
&&\left(  G_{z}X\right)  _{\ell}\left(  x\right)  =%
%TCIMACRO{\tsum \nolimits_{k=1}^{N}}%
%BeginExpansion
{\textstyle\sum\nolimits_{k=1}^{N}}
%EndExpansion%
%TCIMACRO{\tsum \nolimits_{j=1}^{n}}%
%BeginExpansion
{\textstyle\sum\nolimits_{j=1}^{n}}
%EndExpansion
\left(  \Gamma_{z}\left(  x-y^{(k)}\right)  \right)  _{\ell,j}X_{j,k}\,,\\
&&\left(  G_{z}^{\pm}X\right)  _{\ell}\left(  x\right)  =%
%TCIMACRO{\tsum \nolimits_{k=1}^{N}}%
%BeginExpansion
{\textstyle\sum\nolimits_{k=1}^{N}}
%EndExpansion%
%TCIMACRO{\tsum \nolimits_{j=1}^{n}}%
%BeginExpansion
{\textstyle\sum\nolimits_{j=1}^{n}}
%EndExpansion
\left(  \Gamma_{z}^{\pm}\left(  x-y^{(k)}\right)  \right)  _{\ell,j}X_{j,k}\,,\nonumber
\en
where $\Gamma_{z}$ denotes the resolvent Green kernels for the Lam\'{e}
operator. Moreover, the outgoing/ingoing Kupradze radiation condition
(\ref{rad_cond_L})/(\ref{rad_cond_T}) hold for $G_{\omega^{2}}^{\pm}$.
%i.e.
%\begin{equation}
%\left\{
%\begin{array}
%[c]{l}%
%\lim_{\left\vert x\right\vert \rightarrow0}\left\Vert x\right\Vert
%_{\mathbb{R}^{n}}^{\left(  n-1\right)  /2}\left(  \hat{x}\cdot\nabla\mp
%i\,k_p\right)  \nabla\operatorname{div}%
%G_{\omega^2}^{\pm}X=0\,,\\
%\\
%\lim_{\left\vert x\right\vert \rightarrow0}\left\Vert x\right\Vert
%_{\mathbb{R}^{n}}^{\left(  n-1\right)  /2}\left(  \hat{x}\cdot\nabla\mp
%i\,k_s\right)  \nabla\times\nabla\,G_{\omega^2}^{\pm}X=0\,.
%\end{array}
%\right.  \,. \label{rad_cond}%
%\end{equation}

\end{lemma}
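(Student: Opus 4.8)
The plan is to derive every assertion by composing the analyticity and mapping properties of the Lam\'e resolvent $\mathcal{K}_z$ from Theorem~\ref{Thorem_res} with the elementary structure of $\gamma^\ast$ recorded in \eqref{trace_adj}. The starting observation is that $\gamma^\ast X=\sum_{k=1}^N X_{\cdot,k}\,\delta(\cdot-y^{(k)})$ is a finite combination of deltas supported at the points $y^{(k)}$; being compactly supported, it lies in $(H^{-2}_\eta(\mathbb{R}^n))^n$ for \emph{every} $\eta\in\mathbb{R}$, with norm comparable to $|X|_{\mathbb{C}^{n,N}}$. For part $i)$ I would derive the weighted mapping $\mathcal{K}_z\in\mathsf{B}((H^{-2}_\eta)^n,(L^2_\eta)^n)$ from the weighted resolvent bounds for $R_z$ recalled in Section~\ref{pre} (in particular $R_z\in\mathsf{B}(L^2_\eta,H^2_\eta)$ and its dual) through the Kupradze formula \eqref{Kupradze_res}, exactly as in part $i)$ of Theorem~\ref{Thorem_res}; composing on the right with the fixed bounded operator $\gamma^\ast$ then gives $G_z\in\mathsf{B}(\mathbb{C}^{n,N},(L^2_\eta)^n)$, and analyticity of $z\mapsto G_z$ is inherited from that of $z\mapsto\mathcal{K}_z$. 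The limits $G_z^\pm$ in \eqref{G_z_map_lim} follow identically from part $ii)$ of Theorem~\ref{Thorem_res} with $s=-2$, which flips the weight from $\eta$ to $-\eta$ and forces $\eta>1/2$.

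I would treat part $iv)$ next, since the explicit kernel formula makes the remaining statements concrete. Expanding $\mathcal{K}_z\gamma^\ast X$ against the integral kernel $\Gamma_z$ of $\mathcal{K}_z$ and collapsing the convolution against each $\delta(\cdot-y^{(k)})$ yields \eqref{G_z_id_1} at once, and the same computation with the limiting kernel $\Gamma_z^\pm$ gives the corresponding formula for $G_z^\pm$. The Kupradze radiation conditions for $G_{\omega^2}^\pm X$ are then immediate: each column of $\Gamma_{\omega^2}^\pm(\cdot-y^{(k)})$ satisfies \eqref{rad_cond_L}--\eqref{rad_cond_T}, and a finite linear combination of translates of radiating solutions is again radiating.

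For part $ii)$ the plan is to upgrade injectivity to a quantitative lower bound using that the domain $\mathbb{C}^{n,N}$ is finite dimensional. If $G_zX=\mathcal{K}_z\gamma^\ast X=0$ in $(L^2_\eta)^n$, then applying $(L_0-z)$ and using that $\mathcal{K}_z$ inverts $(L_0-z)$ on $(H^{-2})^n$ gives $\gamma^\ast X=0$; since the deltas $\delta(\cdot-y^{(k)})$ at the distinct points $y^{(k)}$ are linearly independent, $\gamma^\ast$ is injective and hence $X=0$. An injective linear map out of a finite-dimensional space is bounded below, so $X\mapsto\|G_zX\|_{(L^2_\eta)^n}/|X|_{\mathbb{C}^{n,N}}$ attains a positive minimum $c=c(z,\eta)$ on the unit sphere, which is \eqref{G_z_low_bound}.

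Part $iii)$ requires the most careful bookkeeping, and I expect it to be the main obstacle. The identities in \eqref{G_z_id} follow by multiplying the resolvent identity \eqref{Res_id} for the Lam\'e resolvent, and its limiting-absorption extension derived in the proof of Theorem~\ref{Thorem_res}, on the right by $\gamma^\ast$. The delicate point is the regularity gain in \eqref{G_z_map_1}: although $G_zX$ is only in $(L^2_\eta)^n$ because of the delta singularities at the $y^{(k)}$, the difference $G_z-G_{z_0}=(z_0-z)\mathcal{K}_{z_0}G_z$ applies the resolvent $\mathcal{K}_{z_0}$ at an interior point $z_0$ of the resolvent set to the $L^2_\eta$ function $G_zX$, and $\mathcal{K}_{z_0}\in\mathsf{B}((L^2_\eta)^n,(H^2_\eta)^n)$ restores the two derivatives with no loss of weight --- this is precisely the cancellation of the $z$-independent singular part $\Gamma_0$ of the kernels. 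The limiting statement is identical, now with $G_z^\pm X\in(L^2_{-\eta})^n$ and $\mathcal{K}_{z_0}\in\mathsf{B}((L^2_{-\eta})^n,(H^2_{-\eta})^n)$, which explains both the restriction $\eta>1/2$ and the sign of the weight in the second line of \eqref{G_z_map_1}. The care needed throughout is in tracking which resolvent bound is used at each step --- the weight-preserving one at the interior point $z_0$ versus the weight-losing limiting one at $\omega^2\pm i0$.
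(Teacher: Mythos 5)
Your proposal is correct, and for parts $i)$, $iii)$ and $iv)$ it follows essentially the same route as the paper: composing the (weighted) mapping properties and limiting absorption statements of Theorem~\ref{Thorem_res} with the fixed bounded map $\gamma^{\ast}$, reading off the kernel formula \eqref{G_z_id_1} by collapsing $\mathcal{K}_{z}$ against the delta distributions, and obtaining \eqref{G_z_id}--\eqref{G_z_map_1} by right-multiplying the resolvent identity \eqref{Lame_Res_id} by $\gamma^{\ast}$, with the regularity gain coming from applying $\mathcal{K}_{z_{0}}$ at the interior point $z_{0}$. The one place where you genuinely diverge is part $ii)$: the paper argues that $G_{z}^{\ast}=\gamma\mathcal{K}_{\bar{z}}$ is surjective (by surjectivity of the trace $\gamma$), invokes the closed range theorem to conclude that $G_{z}$ has closed range, and then cites \cite[Theorem VI.5.2]{Kato} for the lower bound; you instead prove injectivity of $G_{z}$ directly ($\mathcal{K}_{z}$ is a bijection $(H^{-2})^{n}\rightarrow(L^{2})^{n}$ for $z$ in the resolvent set, and deltas at distinct points are linearly independent, so $\gamma^{\ast}$ is injective) and then use compactness of the unit sphere of the finite-dimensional domain $\mathbb{C}^{n,N}$ to get the bound from below. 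Your argument is more elementary and exploits the finite dimensionality that the paper's functional-analytic route does not need; the paper's argument would survive verbatim if $\mathbb{C}^{n,N}$ were replaced by an infinite-dimensional boundary space, which is presumably why it is phrased that way (it is lifted from the general singular-perturbation framework of \cite{MaPo SM}). Two cosmetic remarks: the compactness argument yields $\geq c$ rather than the strict $>c$ stated, which is repaired by shrinking $c$; and the second identity in \eqref{G_z_id} as printed in the paper reads $G_{z}^{\pm}-G_{z_{0}}=(z_{0}-z)\mathcal{K}_{z_{0}}G_{z_{0}}^{\pm}$, whereas your derivation (correctly) produces $\mathcal{K}_{z_{0}}G_{z}^{\pm}$ on the right-hand side --- the paper's subscript appears to be a typo.
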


\begin{proof}
$i)$ The first point follows from (\ref{trace_adj}) and Theorem
\ref{Thorem_res}. $ii)$ By the surjectivity of the trace $\gamma$,
$G_{z}^{\ast}=\gamma\mathcal{K}_{\bar{z}}$ is surjective; hence by the closed
range theorem $G_{z}$ has closed range and (\ref{G_z_low_bound}) follows from
\cite[Theorem VI.5.2]{Kato}. $iii)$ (\ref{G_z_id}) and (\ref{G_z_map_1})
follows from the first resolvent identity%
\begin{equation}%
\begin{array}
[c]{ccc}%
\mathcal{K}_{z}=\mathcal{K}_{z_{0}}+\left(  z_{0}-z\right)  \mathcal{K}%
_{z}\mathcal{K}_{z_{0}}\,, &  & z,z_{0}\in\mathbb{C}\backslash\left[
0,+\infty\right)  \,,
\end{array}
\label{Lame_Res_id}%
\end{equation}
and the mapping properties of $\mathcal{K}_{z}$. $iv)$ Let $X\in
\mathbb{C}^{n,N}$; by (\ref{trace_adj}), (\ref{G_z_def}) we get%
\begin{align}
&  \left.  \left(  G_{z}X\right)  _{\ell}\left(  x\right)  =\left(
\mathcal{K}_{z}%
%TCIMACRO{\tsum \nolimits_{p=1}^{N}}%
%BeginExpansion
{\textstyle\sum\nolimits_{k=1}^{N}}
%EndExpansion
X_{j,k}\,\delta\left(  \cdot-y^{(k)}\right)  \right)  _{\ell}\left(  x\right)
=%
%TCIMACRO{\tsum \nolimits_{p=1}^{N}}%
%BeginExpansion
{\textstyle\sum\nolimits_{k=1}^{N}}
%EndExpansion%
%TCIMACRO{\tsum \nolimits_{j=1}^{n}}%
%BeginExpansion
{\textstyle\sum\nolimits_{j=1}^{n}}
%EndExpansion
\left(  \Gamma_{z}\left(  x-y^{(k)}\right)  \right)  _{\ell,j}X_{j,k}%
\,,\right. \\
& \nonumber\\
&  \left.  \left(  G_{z}^{\pm}X\right)  _{\ell}\left(  x\right)  =\left(
\mathcal{K}_{z}^{\pm}%
%TCIMACRO{\tsum \nolimits_{p=1}^{N}}%
%BeginExpansion
{\textstyle\sum\nolimits_{k=1}^{N}}
%EndExpansion
X_{j,k}\,\delta\left(  \cdot-y^{(k)}\right)  \right)  _{\ell}\left(  x\right)
=%
%TCIMACRO{\tsum \nolimits_{p=1}^{N}}%
%BeginExpansion
{\textstyle\sum\nolimits_{k=1}^{N}}
%EndExpansion%
%TCIMACRO{\tsum \nolimits_{j=1}^{n}}%
%BeginExpansion
{\textstyle\sum\nolimits_{j=1}^{n}}
%EndExpansion
\left(  \Gamma_{z}^{\pm}\left(  x-y^{(k)}\right)  \right)  _{\ell,j}%
X_{j,k}\,.\right.
\end{align}
Finally, the outgoing/ingoing Kupradze radiation condition follows from
(\ref{G_z_id_1}) and (\ref{rad_cond_L})-(\ref{rad_cond_T}).
\end{proof}

For $z\in\mathbb{C}\backslash\left[  0,+\infty\right)  $ it results (see
\cite[eq. (2.15)]{MaPo SM})%
\begin{align}
&  \left.  \mathsf{dom}\left(  \left(  L_{0}\upharpoonright\ker\left(
\gamma\right)  \right)  ^{\ast}\right) \right.\nonumber\\
& \left. =\left\{  u\in\left(  L^{2}\left(
\mathbb{R}^{n}\right)  \right)  ^{n}\,,\ u=u_{0}+G_{z}X\,,\ u_{0}\in\left(
H^{2}\left(  \mathbb{R}^{n}\right)  \right)  ^{n}\,,\ X\in\mathbb{C}%
^{n,N}\right\}  \,,\right. \label{Lame_adj_dom}\\
& \nonumber\\
&  \left.  \left(  \left(  L_{0}\upharpoonright\ker\left(  \gamma\right)
\right)  ^{\ast}-z\right)  u=\left(  L_{0}-z\right)  u_{0}\,.\right.
\label{Lame_adj_def}%
\end{align}
This representation holds for any $z\in\mathbb{C}\backslash\left[
0,+\infty\right)  $ and the decomposition provided in (\ref{Lame_adj_dom}) is
unique. The action of $G_{z}$ on $\mathbb{C}^{n,N}$ provides a representation
of the defect spaces $\ker\left(  \left(  L_{0}\upharpoonright\ker\left(
\gamma\right)  \right)  ^{\ast}-z\right)  $. Namely%
\begin{equation}
\ker\left(  \left(  L_{0}\upharpoonright\ker\left(  \gamma\right)  \right)
^{\ast}-z\right)  =\mathsf{ran}\left(  G_{z}\right)  \,. \label{Lame_adj_ker}%
\end{equation}

\begin{description}
\item[Assumption] \label{Ass_1}Assume an open set $\mathbb{C}\backslash
\mathbb{R}\subseteq Z_{\Lambda}\subseteq\mathbb{C}\backslash\left[
0,+\infty\right)  $ and a family $\Lambda_{z}\in\mathsf{B}\left(
\mathbb{C}^{n,N}\right)  $, $z\in Z_{\Lambda}$, such that%
\begin{equation}%
\begin{array}
[c]{ccc}%
i)\ \Lambda_{z}^{\ast}=\Lambda_{\bar{z}} &  & ii)\ \Lambda_{w}-\Lambda
_{z}=\left(  z-w\right)  \Lambda_{w}\left(  G_{\bar{w}}\right)  ^{\ast}%
G_{z}\Lambda_{z}\,.
\end{array}
\label{Lamda_assumptions}%
\end{equation}

\end{description}

Following \cite[Theorem 2.4]{MaPo SM} we have:

\begin{theorem}
\label{Theorem_SA_ext}Let $\Lambda:=\left\{  \Lambda_{z}\,,\ z\in Z_{\Lambda
}\right\}  $ be a family of $\mathsf{B}\left(  \mathbb{C}^{n,N}\right)  $ maps
fulfilling the conditions (\ref{Lamda_assumptions}). Then%
\begin{equation}
\mathcal{K}_{z}^{\Lambda}:=\mathcal{K}_{z}+G_{z}\Lambda_{z}\left(  G_{\bar{z}%
}\right)  ^{\ast}\,,\qquad z\in Z_{\Lambda}\,, \label{Krein}%
\end{equation}
is the resolvent of a selfadjoint extension $L_{\Lambda}$ of $L_{0}%
\upharpoonright\ker\left(  \gamma\right)  $.
\end{theorem}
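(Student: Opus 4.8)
The plan is to show that the family $\{\mathcal{K}_z^{\Lambda}\}_{z\in Z_{\Lambda}}$ defined in (\ref{Krein}) is a pseudo-resolvent (i.e. it satisfies the first resolvent identity), is symmetric in the sense $(\mathcal{K}_z^{\Lambda})^{\ast}=\mathcal{K}_{\bar z}^{\Lambda}$, and is injective. Once these three properties are in hand, the abstract construction \cite[Theorem 2.4]{MaPo SM} yields a \emph{unique} selfadjoint operator $L_{\Lambda}$ with $(L_{\Lambda}-z)^{-1}=\mathcal{K}_z^{\Lambda}$ for $z\in Z_{\Lambda}$, and a short computation identifies $L_{\Lambda}$ as an extension of $L_{0}\upharpoonright\ker(\gamma)$. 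The single algebraic fact used repeatedly is that, by (\ref{G_z_def}) and $\mathcal{K}_z^{\ast}=\mathcal{K}_{\bar z}$, one has $(G_{\bar z})^{\ast}=\gamma\mathcal{K}_z$; consequently $(G_{\bar z})^{\ast}G_w=\gamma\mathcal{K}_z\mathcal{K}_w\gamma^{\ast}$, and since the resolvents $\mathcal{K}_z,\mathcal{K}_w$ commute, this quantity is symmetric under $z\leftrightarrow w$, that is $(G_{\bar z})^{\ast}G_w=(G_{\bar w})^{\ast}G_z$.

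First I would verify the first resolvent identity $\mathcal{K}_z^{\Lambda}-\mathcal{K}_w^{\Lambda}=(w-z)\,\mathcal{K}_z^{\Lambda}\mathcal{K}_w^{\Lambda}$. Expanding the product and using the resolvent identity (\ref{Lame_Res_id}) for $\mathcal{K}_z$ together with the two identities in (\ref{G_z_id}) and their adjoints, which give $(w-z)\mathcal{K}_z G_w=G_z-G_w$ and $(w-z)(G_{\bar z})^{\ast}\mathcal{K}_w=(G_{\bar z})^{\ast}-(G_{\bar w})^{\ast}$, the required identity reduces to
\[
G_z\big(\Lambda_w-\Lambda_z+(w-z)\,\Lambda_z\,(G_{\bar z})^{\ast}G_w\,\Lambda_w\big)(G_{\bar w})^{\ast}=0 .
\]
The bracket vanishes by condition (\ref{Lamda_assumptions})($ii$) read with the roles of $z$ and $w$ interchanged (which is legitimate since ($ii$) is assumed for every pair in $Z_{\Lambda}$). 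The symmetry $(G_{\bar z})^{\ast}G_w=(G_{\bar w})^{\ast}G_z$ explains why the hypothesis ($ii$) is stated in this symmetric form. Next, the relation $(\mathcal{K}_z^{\Lambda})^{\ast}=\mathcal{K}_{\bar z}^{\Lambda}$ follows by taking adjoints in (\ref{Krein}): the principal term obeys $\mathcal{K}_z^{\ast}=\mathcal{K}_{\bar z}$, while $(G_z\Lambda_z(G_{\bar z})^{\ast})^{\ast}=G_{\bar z}\Lambda_z^{\ast}(G_z)^{\ast}$ matches $G_{\bar z}\Lambda_{\bar z}(G_z)^{\ast}$ precisely because $\Lambda_z^{\ast}=\Lambda_{\bar z}$, which is condition (\ref{Lamda_assumptions})($i$).

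It then remains to check that $\mathcal{K}_z^{\Lambda}$ has trivial kernel, so that it is genuinely a resolvent of a densely defined closed operator. Suppose $\mathcal{K}_z^{\Lambda}u=0$; then $\mathcal{K}_z u=-G_z X$ with $X:=\Lambda_z(G_{\bar z})^{\ast}u\in\mathbb{C}^{n,N}$. The left-hand side lies in $(H^{2}(\mathbb{R}^{n}))^{n}$, while $G_z X\in\mathsf{ran}(G_z)$; by the uniqueness of the decomposition in (\ref{Lame_adj_dom}) the only element of $\mathsf{ran}(G_z)$ belonging to $(H^{2}(\mathbb{R}^{n}))^{n}$ is $0$, which forces $X=0$ and hence $\mathcal{K}_z u=0$. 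Since $\mathcal{K}_z$ is injective we conclude $u=0$. With the pseudo-resolvent identity, the symmetry relation and injectivity established, \cite[Theorem 2.4]{MaPo SM} provides the desired selfadjoint operator $L_{\Lambda}$.

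Finally, to see $L_{\Lambda}\supseteq L_{0}\upharpoonright\ker(\gamma)$, take $v$ in the latter domain, i.e. $v\in(H^{2}(\mathbb{R}^{n}))^{n}$ with $\gamma v=0$, and set $f:=(L_{0}-z)v$ so that $\mathcal{K}_z f=v$. Then $(G_{\bar z})^{\ast}f=\gamma\mathcal{K}_z f=\gamma v=0$, whence $\mathcal{K}_z^{\Lambda}f=\mathcal{K}_z f=v$, giving $v\in\mathsf{dom}(L_{\Lambda})$ and $(L_{\Lambda}-z)v=f=(L_{0}-z)v$. The step I expect to be most delicate is the bookkeeping in the resolvent identity: one must track carefully which argument carries $\Lambda_z$ versus $\Lambda_w$ and invoke (\ref{Lamda_assumptions})($ii$) in its $z\leftrightarrow w$ form, the commutativity $(G_{\bar z})^{\ast}G_w=(G_{\bar w})^{\ast}G_z$ being the fact that renders the two readings of the hypothesis mutually consistent.
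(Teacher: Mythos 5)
Your proposal is correct. Note, however, that the paper offers no proof of this statement at all: it simply invokes the abstract result \cite[Theorem 2.4]{MaPo SM}, which is precisely the Krein-type formula asserting that, under Assumption (\ref{Lamda_assumptions}), the operator (\ref{Krein}) is the resolvent of a selfadjoint extension. What you have done is reconstruct, from scratch, the content of that cited theorem: you verify the first resolvent identity by direct expansion (your bookkeeping is right --- the residual term collapses to $G_{z}\bigl(\Lambda_{w}-\Lambda_{z}+(w-z)\Lambda_{z}(G_{\bar z})^{\ast}G_{w}\Lambda_{w}\bigr)(G_{\bar w})^{\ast}$, which vanishes by condition $(ii)$ with $z$ and $w$ interchanged), you get the symmetry $(\mathcal{K}_{z}^{\Lambda})^{\ast}=\mathcal{K}_{\bar z}^{\Lambda}$ from condition $(i)$, and you get injectivity from the uniqueness of the decomposition (\ref{Lame_adj_dom}); the standard fact that an injective, symmetric pseudo-resolvent is the resolvent of a selfadjoint operator, plus your final computation showing $L_{\Lambda}\supseteq L_{0}\upharpoonright\ker(\gamma)$, then finishes the argument. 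This buys self-containedness at the cost of length; the paper's route buys brevity by outsourcing exactly these verifications. One small quibble: your closing remark that the symmetry $(G_{\bar z})^{\ast}G_{w}=(G_{\bar w})^{\ast}G_{z}$ is what makes the two readings of hypothesis $(ii)$ mutually consistent is not quite right --- consistency of the original and the $z\leftrightarrow w$ form requires $\Lambda_{w}T\Lambda_{z}=\Lambda_{z}T\Lambda_{w}$ with $T:=(G_{\bar w})^{\ast}G_{z}$, which does not follow from the symmetry of $T$ alone but rather from the fact that $(ii)$ is assumed for all ordered pairs. Since your proof only uses the swapped instance of $(ii)$, which is a legitimate case of the hypothesis, this does not affect the validity of the argument.
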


For each $z\in Z_{\Lambda}$, the identity (\ref{Krein}) yields%
\begin{equation}
\mathsf{dom}\left(  L_{\Lambda}\right)  =\left\{  u=u_{0}+G_{z}\Lambda
_{z}\gamma u_{0}\,,\ u_{0}\in\left(  H^{2}\left(  \mathbb{R}^{n}\right)
\right)  ^{n}\right\}  \,. \label{Lame_Lamda_dom}%
\end{equation}
Since $L_{0}\subset L_{\Lambda}\subset\left(  L_{0}\upharpoonright\ker\left(
\tau\right)  \right)  ^{\ast}$, we identify: $L_{\Lambda}:=\left(
L_{0}\upharpoonright\ker\left(  \tau\right)  \right)  ^{\ast}\upharpoonright
\mathsf{dom}\left(  A_{\Lambda}\right)  $; hence%
\begin{equation}
L_{\Lambda}u=\left(  -\mu\Delta-\left(  \lambda+\mu\right)  \nabla
\operatorname{div}\right)  u\quad\text{in }\mathbb{R}^{n}\backslash Y
\label{Lame_Lamda_def}%
\end{equation}
and from (\ref{Lame_adj_ker}) follows%
\begin{equation}
L_{\Lambda}u=L_{0}u_{0}-zG_{z}\Lambda_{z}\gamma u_{0}\,,\qquad u=u_{0}%
+G_{z}\Lambda_{z}\gamma u_{0}\,,\ u_{0}\in\left(  H^{2}\left(  \mathbb{R}%
^{n}\right)  \right)  ^{n}\,. \label{Lame_Lamda_id}%
\end{equation}

\subsubsection{Boundary conditions models}

\label{BCM}

For $\theta\in\mathsf{B}\left(  \mathbb{C}^{n,N}\right)  $ we introduce%
\begin{equation}
\Lambda_{z}\left(  \theta\right)  :=\left(  \theta+\gamma\left(  G_{-1}%
-G_{z}\right)  \right)  ^{-1}\,. \label{Weyl}%
\end{equation}

\begin{lemma}
\label{Lemma_Weyl_fun}Let $\Lambda_{z}\left(  \theta\right)  $ be given by
(\ref{Weyl}) with $\theta\in\mathsf{B}\left(  \mathbb{C}^{n,N}\right)  $
selfadjoint. Then there exists a (possibliy empty) discrete set $S_{\Lambda
\left(  \theta\right)  }\subset\mathbb{R}_{-}$ such that $z\rightarrow
\Lambda_{z}\left(  \theta\right)  $ defines an analytic family of
$\mathsf{B}\left(  \mathbb{C}^{n,N}\right)  $-tensors in $\mathbb{C}%
\backslash\left\{  \left[  0,+\infty\right)  \cup S_{\Lambda\left(
\theta\right)  }\right\}  $. For each $z\in\mathbb{C}\backslash\left\{
\left[  0,+\infty\right)  \cup S_{\Lambda\left(  \theta\right)  }\right\}  $
the Assumption \ref{Ass_1} holds. The limits $\Lambda_{\omega^{2}}^{\pm
}\left(  \theta\right)  :=\lim_{\varepsilon\rightarrow0^{+}}\Lambda
_{\omega^{2}\pm i\varepsilon}\left(  \theta\right)  $ exist in $\mathsf{B}%
\left(  \mathbb{C}^{n,N}\right)  $ for a.a. $\omega>0$, with the possible
exception of a discrete subset and coincide with%
\begin{equation}
\Lambda_{\omega^{2}}^{\pm}\left(  \theta\right)  =\left(  \theta+\gamma\left(
G_{-1}-G_{\omega^{2}}^{\pm}\right)  \right)  ^{-1}\,. \label{Lamda_theta_lim}%
\end{equation}

\end{lemma}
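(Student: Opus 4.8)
The plan is to reduce the entire statement to the invertibility of the finite matrix
\[
M_z(\theta):=\theta+\gamma\bigl(G_{-1}-G_z\bigr)\in\mathsf{B}\bigl(\mathbb{C}^{n,N}\bigr),
\]
so that $\Lambda_z(\theta)=M_z(\theta)^{-1}$ by \eqref{Weyl}. First I would record analyticity: by Lemma \ref{Lemma_G_z}$(iii)$ the difference $G_{-1}-G_z$ is $\mathsf{B}(\mathbb{C}^{n,N},(H^2_\eta(\mathbb{R}^n))^n)$-valued and analytic on $\mathbb{C}\setminus[0,+\infty)$, while $\gamma$ is bounded on $(H^2_\eta(\mathbb{R}^n))^n$; hence $z\mapsto M_z(\theta)$ is an analytic $\mathsf{B}(\mathbb{C}^{n,N})$-valued map there. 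Consequently $z\mapsto\det M_z(\theta)$ is scalar analytic, and on the set where it does not vanish the inverse $\Lambda_z(\theta)=\mathrm{adj}\,M_z(\theta)/\det M_z(\theta)$ is analytic as well. Everything then reduces to locating the zeros of this determinant.

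The crux is the identity
\[
\operatorname{Im}\bigl\langle M_z(\theta)X,X\bigr\rangle=\operatorname{Im}(z)\,\bigl\|G_zX\bigr\|_{(L^2(\mathbb{R}^n))^n}^2,\qquad X\in\mathbb{C}^{n,N}.
\]
To obtain it, I would first note $\bigl(\gamma(G_{-1}-G_z)\bigr)^\ast=\gamma(G_{-1}-G_{\bar z})$, using $\mathcal{K}_z^\ast=\mathcal{K}_{\bar z}$ and $\mathcal{K}_{-1}^\ast=\mathcal{K}_{-1}$; since $\theta$ is selfadjoint this gives $M_z(\theta)^\ast=M_{\bar z}(\theta)$. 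Subtracting, $M_z-M_{\bar z}=\gamma(G_{\bar z}-G_z)$; combining $(G_{\bar w})^\ast=\gamma\mathcal{K}_w$ with the resolvent identity \eqref{Lame_Res_id} written as $\mathcal{K}_w\mathcal{K}_z=(\mathcal{K}_w-\mathcal{K}_z)/(z-w)$ yields
\[
\gamma\bigl(G_w-G_z\bigr)=(z-w)\,(G_{\bar w})^\ast G_z,
\]
and the choice $w=\bar z$ produces $M_z-M_{\bar z}=2i\operatorname{Im}(z)\,(G_z)^\ast G_z$, which is the displayed identity. The lower bound \eqref{G_z_low_bound} with $\eta=0$ forces $\|G_zX\|\ge c\|X\|$, so for $\operatorname{Im}(z)\neq0$ the matrix $M_z(\theta)$ is injective, hence invertible. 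Thus $\det M_z(\theta)$ can vanish only on $\mathbb{R}$; being analytic and not identically zero on the connected domain $\mathbb{C}\setminus[0,+\infty)$, its zero set is a discrete subset $S_{\Lambda(\theta)}\subset(-\infty,0)$, and off this set $\Lambda_z(\theta)$ is analytic.

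I would then verify Assumption \ref{Ass_1} directly. Condition $(i)$ is immediate from $M_z(\theta)^\ast=M_{\bar z}(\theta)$, giving $\Lambda_z(\theta)^\ast=(M_z^\ast)^{-1}=M_{\bar z}^{-1}=\Lambda_{\bar z}(\theta)$. Condition $(ii)$ follows from the algebraic second resolvent identity $M_w^{-1}-M_z^{-1}=M_w^{-1}(M_z-M_w)M_z^{-1}$ together with $M_z-M_w=\gamma(G_w-G_z)=(z-w)(G_{\bar w})^\ast G_z$ established above, which gives exactly $\Lambda_w-\Lambda_z=(z-w)\Lambda_w(G_{\bar w})^\ast G_z\Lambda_z$.

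Finally, for the boundary limits I would take $z=\omega^2\pm i\varepsilon$ and let $\varepsilon\to0^+$. By Lemma \ref{Lemma_G_z}$(i)$--$(iii)$ and \eqref{G_z_map_1}, $\gamma(G_{-1}-G_{\omega^2\pm i\varepsilon})\to\gamma(G_{-1}-G^\pm_{\omega^2})$ in $\mathsf{B}(\mathbb{C}^{n,N})$, so $M_{\omega^2\pm i\varepsilon}(\theta)\to M^\pm_{\omega^2}(\theta):=\theta+\gamma(G_{-1}-G^\pm_{\omega^2})$; by continuity of matrix inversion at invertible points, $\Lambda_{\omega^2\pm i\varepsilon}(\theta)\to(M^\pm_{\omega^2}(\theta))^{-1}$ whenever the limit matrix is invertible, which is the asserted formula \eqref{Lamda_theta_lim}. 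I expect the genuine obstacle to be showing that the exceptional $\omega$ form only a discrete set. Here I would use that the explicit limiting kernels $\Gamma^\pm_{\omega^2}$—Hankel functions in $2$D and $e^{\pm ik|x|}/(4\pi|x|)$ in $3$D, together with the renormalization constants $\chi_{\omega^2}$ of \eqref{Gamma_tilda_lim_2D}, \eqref{Gamma_tilda_lim_3D}—depend real-analytically on $\omega\in(0,+\infty)$, so that $\omega\mapsto\det M^\pm_{\omega^2}(\theta)$ is real-analytic. It is not identically zero: its non-tangential boundary values arise from $\det M_z(\theta)$, which is nonvanishing in the open half-plane $\{\pm\operatorname{Im}z>0\}$, so a boundary-uniqueness argument (or the growth of the diagonal renormalization $\chi_{\omega^2}$ as $\omega\to+\infty$) rules out identical vanishing. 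A real-analytic function on $(0,+\infty)$ that is not identically zero has only isolated zeros, which is precisely the claimed discrete exceptional set.
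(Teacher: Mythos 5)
Your proposal is correct and follows essentially the same route as the paper: invert the analytic matrix family $\theta+\gamma(G_{-1}-G_z)$, use the lower bound \eqref{G_z_low_bound} together with the imaginary-part identity to get injectivity off the real axis, deduce meromorphy with a discrete real pole set, verify Assumption \ref{Ass_1} via the adjoint relation and the resolvent-type identity \eqref{G_z_id}, and pass to the boundary limits. You actually supply more detail than the paper at two points the authors leave implicit — the determinant/adjugate argument for analyticity of the inverse and the real-analyticity argument for discreteness of the exceptional set of $\omega$ — and your constant in $\operatorname{Im}\langle M_z(\theta)X,X\rangle=\operatorname{Im}(z)\,\|G_zX\|^2$ is the correct one (the paper's factor $2$ there is a harmless slip).
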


\begin{proof}
Let consider the direct mapping: $\left(  \Lambda_{z}\left(  \theta\right)
\right)  ^{-1}:=\left(  \theta+\gamma\left(  G_{-1}-G_{z}\right)  \right)  $;
by $(i)-(iii)$ of Lemma \ref{Lemma_G_z} results $\left(  \Lambda_{z}\left(
\theta\right)  \right)  ^{-1}\in\mathsf{B}\left(  \mathbb{C}^{n,N}\right)  $
and $\left(  \left(  \Lambda_{z}\left(  \theta\right)  \right)  ^{-1}\right)
^{\ast}=\left(  \Lambda_{\bar{z}}\left(  \theta\right)  \right)  ^{-1}$. By%
\begin{equation}
\left\langle Y,\left(  \Lambda_{z}\left(  \theta\right)  \right)
^{-1}X\right\rangle _{\mathbb{C}^{n,N}}=\left\langle \left(  \Lambda_{\bar{z}%
}\left(  \theta\right)  \right)  ^{-1}Y,X\right\rangle _{\mathbb{C}^{n,N}}\,,
\end{equation}
follows: $\ker\left(  \Lambda_{\bar{z}}\left(  \theta\right)  \right)
^{-1}=\left(  \mathsf{ran}\left(  \left(  \Lambda_{z}\left(  \theta\right)
\right)  ^{-1}\right)  \right)  ^{\bot}$; from (\ref{G_z_low_bound}) follows%
\begin{equation}
\left\vert \left\langle X,\left(  \Lambda_{\bar{z}}\left(  \theta\right)
\right)  ^{-1}X\right\rangle _{\mathbb{C}^{n,N}}\right\vert \geq\left\vert
\operatorname{Im}\left\langle X,\left(  \Lambda_{\bar{z}}\left(
\theta\right)  \right)  ^{-1}X\right\rangle _{\mathbb{C}^{n,N}}\right\vert
=2\left\vert \operatorname{Im}z\right\vert \left\Vert G_{z}X\right\Vert
_{\left(  L^{2}\left(  \mathbb{R}^{n}\right)  \right)  ^{n}}^{2}>2c\left\vert
\operatorname{Im}z\right\vert \,\left\Vert X\right\Vert _{\mathbb{C}^{n,N}}\,.
\end{equation}
Then, $\ker\left(  \Lambda_{\bar{z}}\left(  \theta\right)  \right)
^{-1}=\left\{  0\right\}  $ and $\left(  \Lambda_{z}\left(  \theta\right)
\right)  ^{-1}$ is bijective in $\mathbb{C}\backslash\mathbb{R}$: this allows
to define $\Lambda_{z}\left(  \theta\right)  \in\mathsf{B}\left(
\mathbb{C}^{n,N}\right)  $ for each $z\in\mathbb{C}\backslash\mathbb{R}$;
since $z\rightarrow\left(  \Lambda_{z}\left(  \theta\right)  \right)  ^{-1}$
is analytic (tensor-valued) in $\mathbb{C}\backslash\left[  0,+\infty\right)
$, by the properties of analytic functions in finite dimensional spaces the
inverse tensor-valued map $z\rightarrow\left(  \Lambda_{z}\left(
\theta\right)  \right)  $ is analytic in $\mathbb{C}\backslash\mathbb{R}$ and
extends to a meromorphic function in $\mathbb{C}\backslash\left[
0,+\infty\right)  $. Furthermore, it continuosly extends to the limits
$z\rightarrow\omega^{2}\pm i0$ for a.a. $k>0$ and from%
\be
\mathbf{I}_{n\times N}&=&\left(  \lim_{\varepsilon\rightarrow0^{+}}%
\Lambda_{\omega^{2}\pm i\varepsilon}\left(  \theta\right)  \right)  \left(
\theta+\gamma\left(  G_{-1}-G_{\omega^{2}}^{\pm}\right)  \right)  \nonumber\\
&=& \left(
\theta+\gamma\left(  G_{-1}-G_{\omega^{2}}^{\pm}\right)  \right)  \left(
\lim_{\varepsilon\rightarrow0^{+}}\Lambda_{\omega^{2}\pm i\varepsilon}\left(
\theta\right)  \right)  \,,
\en
the identity (\ref{Lamda_theta_lim}) follows. Let $z\in\mathbb{C}%
\backslash\left\{  \left[  0,+\infty\right)  \cup D_{\Lambda\left(
\theta\right)  }\right\}  $; from: $\left(  \theta+\gamma\left(
G_{-1}-G_{\bar{z}}\right)  \right)  =\left(  \theta+\gamma\left(  G_{-1}%
-G_{z}\right)  \right)  ^{\ast}$ we \ have%
\begin{equation}
\Lambda_{\bar{z}}\left(  \theta\right)  =\left(  \theta+\gamma\left(
G_{-1}-G_{\bar{z}}\right)  \right)  ^{-1}=\left(  \left(  \theta+\gamma\left(
G_{-1}-G_{z}\right)  \right)  ^{\ast}\right)  ^{-1}=\Lambda_{z}^{\ast}\left(
\theta\right)  \,.
\end{equation}
Finally, from (\ref{G_z_id}) follows%
\begin{equation}
\left(  \theta+\gamma\left(  G_{-1}-G_{w}\right)  \right)  -\left(
\theta+\gamma\left(  G_{-1}-G_{z}\right)  \right)  =\gamma\left(  G_{z}%
-G_{w}\right)  =\left(  w-z\right)  G_{\bar{w}}^{\ast}G_{z}\,,
\end{equation}
which implies%
\begin{equation}
\Lambda_{w}\left(  \theta\right)  -\Lambda_{z}\left(  \theta\right)  =\left(
z-w\right)  \Lambda_{w}\left(  \theta\right)  G_{\bar{w}}^{\ast}G_{z}%
\Lambda_{z}\left(  \theta\right)  \,.
\end{equation}

\end{proof}

The construction of Theorem \ref{Theorem_SA_ext} is next implemented with the
family of tensors given in (\ref{Weyl}).

\begin{lemma}
\label{Lemma_Lame_Lambda_dom}Let $\Lambda_{z}\left(  \theta\right)  $ be given
by (\ref{Weyl}) with $\theta\in\mathsf{B}\left(  \mathbb{C}^{n,N}\right)  $
selfadjoint. For $z\in\mathbb{C}\backslash\left\{  \left[  0,+\infty\right)
\cup D_{\Lambda\left(  \theta\right)  }\right\}  $, the domain
(\ref{Lame_Lamda_dom}) rephrases as%
\begin{equation}
\mathsf{dom}\left(  L_{\Lambda\left(  \theta\right)  }\right)  =\left\{
u=u_{0}+G_{-1}X\,,\ u_{0}\in\left(  H^{2}\left(  \mathbb{R}^{n}\right)
\right)  ^{n}\,,\ X\in\mathbb{C}^{n,N}:\gamma u_{0}=\theta X\right\}  \,.
\label{Lame_Lamda_dom_1}%
\end{equation}

\end{lemma}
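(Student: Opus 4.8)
The plan is to show that the two descriptions of $\mathsf{dom}(L_{\Lambda(\theta)})$ coincide by a direct manipulation: I would use the resolvent-type identity for $G_z$ from Lemma~\ref{Lemma_G_z} to trade $G_z$ for $G_{-1}$ at the cost of an $H^2$-regular remainder, absorb that remainder into the regular part, and then read off the trace constraint $\gamma u_0 = \theta X$ directly from the definition (\ref{Weyl}) of $\Lambda_z(\theta)$.

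First I would start from the representation (\ref{Lame_Lamda_dom}): every $u \in \mathsf{dom}(L_{\Lambda(\theta)})$ has the form $u = u_0 + G_z\Lambda_z(\theta)\gamma u_0$ with $u_0 \in (H^2(\mathbb{R}^n))^n$. Setting $X := \Lambda_z(\theta)\gamma u_0 \in \mathbb{C}^{n,N}$, I split $G_z X = G_{-1}X + (G_z - G_{-1})X$ and observe, by (\ref{G_z_map_1}), that $(G_z - G_{-1})X \in (H^2(\mathbb{R}^n))^n$. Hence $u = \tilde u_0 + G_{-1}X$ with $\tilde u_0 := u_0 + (G_z - G_{-1})X \in (H^2(\mathbb{R}^n))^n$. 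To close this inclusion I must verify the constraint $\gamma \tilde u_0 = \theta X$: computing $\gamma \tilde u_0 = \gamma u_0 + \gamma(G_z - G_{-1})X$ and inserting $\gamma u_0 = \Lambda_z(\theta)^{-1}X = (\theta + \gamma(G_{-1} - G_z))X$ (from (\ref{Weyl})), the two copies of $\gamma(G_{-1} - G_z)X$ cancel and only $\theta X$ survives, as required.

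For the reverse inclusion I would run the same computation backwards. Given $u = u_0 + G_{-1}X$ with $\gamma u_0 = \theta X$, I set $v_0 := u_0 + (G_{-1} - G_z)X \in (H^2(\mathbb{R}^n))^n$, so that $u = v_0 + G_z X$. Then $\gamma v_0 = \theta X + \gamma(G_{-1} - G_z)X = (\theta + \gamma(G_{-1} - G_z))X = \Lambda_z(\theta)^{-1}X$, whence $X = \Lambda_z(\theta)\gamma v_0$ and $u = v_0 + G_z\Lambda_z(\theta)\gamma v_0$ lies in the set (\ref{Lame_Lamda_dom}). The computations are short, and the only points requiring care are bookkeeping ones: I must invoke the uniqueness of the decomposition in (\ref{Lame_adj_dom}) into an $H^2$-part plus a term in $\mathsf{ran}(G_{-1})$ (so that $X$ is unambiguously determined), and I must restrict $z$ to $\mathbb{C}\backslash\{[0,+\infty) \cup D_{\Lambda(\theta)}\}$ exactly as in Lemma~\ref{Lemma_Weyl_fun}, which is what guarantees that $\Lambda_z(\theta)$ is a well-defined bounded inverse and that the identity $\Lambda_z(\theta)^{-1} = \theta + \gamma(G_{-1} - G_z)$ holds on all of $\mathbb{C}^{n,N}$. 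I do not expect any genuine obstacle beyond securing these regularity and invertibility facts before the algebraic cancellation.
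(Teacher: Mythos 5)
Your proposal is correct and follows essentially the same route as the paper's proof: both directions use the splitting $G_zX=G_{-1}X+(G_z-G_{-1})X$ justified by (\ref{G_z_map_1}), followed by the cancellation of $\gamma(G_{-1}-G_z)X$ against the corresponding term in $\Lambda_z(\theta)^{-1}=\theta+\gamma(G_{-1}-G_z)$. The extra care you flag about uniqueness of the decomposition and the admissible set of $z$ matches what the paper implicitly relies on.
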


\begin{proof}
By (\ref{Lame_Lamda_dom}), $u\in\mathsf{dom}\left(  L_{\Lambda\left(
\theta\right)  }\right)  $ implies: $u=u_{0}+G_{z}\Lambda_{z}\gamma u_{0}$,
with $u_{0}\in\left(  H^{2}\left(  \mathbb{R}^{n}\right)  \right)  ^{n}$. By
(\ref{G_z_map_1}) we have%
\begin{equation}%
\begin{array}
[c]{ccc}%
u=\tilde{u}_{0}+G_{-1}\Lambda_{z}\left(  \theta\right)  \gamma u_{0}\,, &  &
\tilde{u}_{0}=u_{0}-\left(  G_{-1}-G_{z}\right)  \Lambda_{z}\left(
\theta\right)  \gamma u_{0}\in\left(  H^{2}\left(  \mathbb{R}^{n}\right)
\right)  ^{n}\,,
\end{array}
\end{equation}
with%
\begin{equation}
\gamma\tilde{u}_{0}=\gamma u_{0}-\gamma\left(  G_{-1}-G_{z}\right)
\Lambda_{z}\left(  \theta\right)  \gamma u_{0}=\gamma u_{0}+\theta\Lambda
_{z}\left(  \theta\right)  \gamma u_{0}-\left(  \Lambda_{z}\left(
\theta\right)  \right)  ^{-1}\Lambda_{z}\left(  \theta\right)  \gamma
u_{0}=\theta\Lambda_{z}\left(  \theta\right)  \gamma u_{0}\,.
\end{equation}
Setting $X=\Lambda_{z}\left(  \theta\right)  \gamma u_{0}$, we get:
$\gamma\tilde{u}_{0}=\theta X$. Hence $u$ belongs to the set
(\ref{Lame_Lamda_dom_1}). Let $u=u_{0}+G_{-1}X\,$, be defined with $u_{0}%
\in\left(  H^{2}\left(  \mathbb{R}^{n}\right)  \right)  ^{n}$ and
$X\in\mathbb{C}^{n,N}$ such that: $\gamma u_{0}=\theta X$. Then%
\begin{equation}%
\begin{array}
[c]{ccc}%
u=\tilde{u}_{0}+G_{z}X\,, &  & \tilde{u}_{0}=u_{0}+\left(  G_{-1}%
-G_{z}\right)  X\in\left(  H^{2}\left(  \mathbb{R}^{n}\right)  \right)
^{n}\,.
\end{array}
\end{equation}
Setting: $\tilde{X}=\left(  \Lambda_{z}\left(  \theta\right)  \right)  ^{-1}X$
we get $u=\tilde{u}_{0}+G_{-1}\Lambda_{z}\left(  \theta\right)  \tilde{X}$
with%
\begin{equation}
\tilde{X}=\left(  \Lambda_{z}\left(  \theta\right)  \right)  ^{-1}X=\theta
X+\gamma\left(  G_{-1}-G_{z}\right)  X=\gamma u_{0}+\gamma\left(  G_{-1}%
-G_{z}\right)  X=\gamma\tilde{u}_{0}\,.
\end{equation}

\end{proof}

With the notation introduced in Sections \ref{Sec_Green_2D} and
\ref{Sec_Green_3D}, $\Gamma_{z}\left(  x-y\right)  $ denotes the resolvent
Green kernel for the Lam\'{e} operator and $\Gamma_{0}\left(  x-y\right)  $
its $z\rightarrow0$ limit: these are $\mathbb{C}^{n,n}$-valued tensors field
and, according to (\ref{Gamma_lim_inv_2D}) and (\ref{Gamma_lim_inv_3D}), the
inverse matrix $\Gamma_{0}^{-1}\left(  x\right)  $ pointwise exists for
$x\neq0$. Let us define the maps $\tau_{j=1,2}:\mathsf{dom}\left(  \left(
L_{0}\upharpoonright\ker\left(  \gamma\right)  \right)  ^{\ast}\right)
\rightarrow\mathbb{C}^{n,N}$%
\begin{align}
&  \left.  \left(  \tau_{1}u\right)  _{j,k}:=\lim_{x\rightarrow y^{(k)}}%
%TCIMACRO{\tsum \nolimits_{\ell=1}^{n}}%
%BeginExpansion
{\textstyle\sum\nolimits_{\ell=1}^{n}}
%EndExpansion
\left(  \Gamma_{0}^{-1}\left(  x-y^{(k)}\right)  \right)  _{j,\ell}u_{\ell
}\left(  x\right)  \,,\right. \label{tau_1}\\
&  \left.  \left(  \tau_{2}u\right)  _{j,k}:=\lim_{x\rightarrow y^{(k)}%
}\left(  u_{j}\left(  x\right)  -%
%TCIMACRO{\tsum \nolimits_{\ell=1}^{n}}%
%BeginExpansion
{\textstyle\sum\nolimits_{\ell=1}^{n}}
%EndExpansion
\left(  \Gamma_{0}\left(  x-y^{(k)}\right)  \right)  _{j,\ell}\left(  \tau
_{1}u\right)  _{\ell,k}\right)  \qquad\right.  \label{tau_2}%
\end{align}
for all $j=1,..n\,,\ k=1,..N$. Let us remark that $\tau_{j=1,2}$ extend to
$\left(  H_{\eta}^{2}\left(  \mathbb{R}^{n}\right)  \right)  ^{n}$ functions
for any $\eta\in\mathbb{R}$. These maps allows to represent the operator's
domain in terms of boundary conditions. For $z\in\mathbb{C}\backslash\left[
0,+\infty\right)  $ we introduce $\Xi_{z}\in\mathsf{B}\left(  \mathbb{C}%
^{n,N}\right)  $%
\begin{equation}%
\begin{array}
[c]{ccc}%
\left(  \Xi_{z}\,X\right)  _{j,k}:=%
%TCIMACRO{\tsum \nolimits_{k^{\prime}=1\,,\ k^{\prime}\neq k}^{N}}%
%BeginExpansion
{\textstyle\sum\nolimits_{k^{\prime}=1\,,\ k^{\prime}\neq k}^{N}}
%EndExpansion
\left(  \Gamma_{z}\left(  y^{(k)}-y^{(k^{\prime})}\right)  X\right)
_{j,k}\,, &  & X \in\mathbb{C}^{n,N}\,.
\end{array}
\label{Xi_z}%
\end{equation}

\begin{proposition}
\label{Proposition_Lame_dom}Let $\Lambda_{z}\left(  \theta\right)  $ be given
by (\ref{Weyl}) with $\theta\in\mathsf{B}\left(  \mathbb{C}^{n,N}\right)  $
selfadjoint. The domain (\ref{Lame_Lamda_dom}) rephrases as%
\begin{equation}
\mathsf{dom}\left(  L_{\Lambda\left(  \theta\right)  }\right)  =\left\{
u\in\mathsf{dom}\left(  \left(  L_{0}\upharpoonright\ker\left(  \gamma\right)
\right)  ^{\ast}\right)  :\tau_{2}u=\left(  \Xi_{-1}+\theta+\chi
_{-1}\,\mathbf{I}_{n\times N}\right)  \tau_{1}u\right\}  \,,
\label{Lame_Lamda_dom_BC}%
\end{equation}
where $\Xi_{-1}$ and $\chi_{-1}$ are defined by (\ref{Xi_z}) and
(\ref{Gamma_tilda_lim_2D}), (\ref{Gamma_tilda_lim_3D}) for $z=-1$ .
%while%
%\begin{equation}
%E_{-1}\left(  \lambda,\mu\right)  :=\left\{
%\begin{array}
%[c]{lll}%
%-\frac{1}{4\pi}\left[  \frac{\lambda+3\mu}{\mu\left(  \lambda+2\mu\right)
%}\left(  \ln\frac{i}{2}+C-i\frac{\pi}{2}\right)  +\frac{\lambda+\mu}%
%{\mu\left(  \lambda+2\mu\right)  }-\frac{1}{2}\left(  \frac{\ln\mu}{\mu}%
%+\frac{\ln\left(  \lambda+2\mu\right)  }{\lambda+2\mu}\right)  \right]  \,, &
%& \text{in}\,n=2\,,\\
%&  & \\
%-\frac{2\lambda+5\mu}{12\pi\mu\left(  \lambda+2\mu\right)  }\,, &  & \text{in
%}\,n=3\,,
%\end{array}
%\right.  \label{E_lamda_mu_ref}%
%\end{equation}
%($C$ being the Euler constant).

\end{proposition}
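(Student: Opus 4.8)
The plan is to start from the description of $\mathsf{dom}\left(L_{\Lambda\left(\theta\right)}\right)$ furnished by Lemma~\ref{Lemma_Lame_Lambda_dom} (read at $z=-1$), namely $u=u_{0}+G_{-1}X$ with $u_{0}\in\left(H^{2}\left(\mathbb{R}^{n}\right)\right)^{n}$, $X\in\mathbb{C}^{n,N}$, and $\gamma u_{0}=\theta X$, and to convert this constraint into the boundary relation (\ref{Lame_Lamda_dom_BC}). The guiding idea is that $\tau_{1}$ and $\tau_{2}$ of (\ref{tau_1})--(\ref{tau_2}) are designed so that $\tau_{1}$ reads off the singular ``charge'' $X$ and $\tau_{2}$ reads off the regular value at each $y^{(k)}$. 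First I would prove, \emph{for every} $u=u_{0}+G_{-1}X$ in $\mathsf{dom}\left(\left(L_{0}\upharpoonright\ker\left(\gamma\right)\right)^{\ast}\right)$ (the decomposition being unique by (\ref{Lame_adj_dom})) and \emph{without} using any condition on $u_{0}$, the two identities
\[
\tau_{1}u=X\,,\qquad \tau_{2}u=\gamma u_{0}+\left(\chi_{-1}\mathbf{I}_{n\times N}+\Xi_{-1}\right)X\,.
\]
Granting these, the constraint $\gamma u_{0}=\theta X$ is \emph{equivalent} to $\tau_{2}u=\left(\Xi_{-1}+\theta+\chi_{-1}\mathbf{I}_{n\times N}\right)\tau_{1}u$, which yields both inclusions of (\ref{Lame_Lamda_dom_BC}) simultaneously.

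To compute $\tau_{1}u$, I would insert the explicit kernel representation (\ref{G_z_id_1}) of $G_{-1}X$ and isolate, near each $y^{(k)}$, the single singular summand $\sum_{m}\left(\Gamma_{-1}\left(x-y^{(k)}\right)\right)_{\ell,m}X_{m,k}$ from the regular remainder (the terms with $k^{\prime}\neq k$ together with $u_{0}$, all continuous at $y^{(k)}$ by the Sobolev embedding $\mathcal{C}^{1}\hookrightarrow H_{\eta}^{2}\left(\mathbb{R}^{n}\right)$ valid for $n\leq 3$). Applying $\Gamma_{0}^{-1}\left(x-y^{(k)}\right)$ and letting $x\rightarrow y^{(k)}$, the regular part is annihilated because $\left\vert\Gamma_{0}^{-1}\left(x\right)\right\vert\rightarrow 0$ (estimate (\ref{Gamma_lim_inv_2D_est}) in 2D, (\ref{Gamma_lim_inv_3D_est}) in 3D), while the singular part produces $\Gamma_{0}^{-1}\Gamma_{-1}=\mathbf{I}_{n}+\Gamma_{0}^{-1}\tilde{\Gamma}_{-1}\rightarrow\mathbf{I}_{n}$ via the decomposition (\ref{Gamma_tilda})/(\ref{Gamma_tilda_3D}) and the remainder bounds (\ref{Gamma_lim_inv_2D_est_1})/(\ref{Gamma_lim_inv_3D_est_1}); this is exactly the content of (\ref{Gamma_lim_inv_2D_est_2})/(\ref{Gamma_lim_inv_3D_est_2}). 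Hence $\left(\tau_{1}u\right)_{j,k}=X_{j,k}$.

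For $\tau_{2}u$, using $\tau_{1}u=X$, I would subtract $\sum_{\ell}\left(\Gamma_{0}\left(x-y^{(k)}\right)\right)_{j,\ell}X_{\ell,k}$ from $u_{j}\left(x\right)$ and again split the $G_{-1}X$ kernel at $y^{(k)}$. The self-interaction term combines with the subtracted $\Gamma_{0}$ through $\Gamma_{-1}-\Gamma_{0}=\tilde{\Gamma}_{-1}$, whose limit is $\chi_{-1}\mathbf{I}_{n}$ by (\ref{Gamma_tilda_lim_2D})/(\ref{Gamma_tilda_lim_3D}), contributing $\chi_{-1}X_{j,k}$; the cross terms with $k^{\prime}\neq k$ are continuous and evaluate to $\left(\Xi_{-1}X\right)_{j,k}$ by the definition (\ref{Xi_z}); and $u_{0}$ contributes its value $\left(\gamma u_{0}\right)_{j,k}$ by continuity. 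Collecting these gives $\tau_{2}u=\gamma u_{0}+\left(\chi_{-1}\mathbf{I}_{n\times N}+\Xi_{-1}\right)X$, as asserted.

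The one point demanding care --- and the main obstacle --- is the rigorous justification that the defining limits in (\ref{tau_1})--(\ref{tau_2}) exist and that the regular contributions genuinely drop out; this rests entirely on the quantitative behaviour of $\Gamma_{0}^{-1}$ and of $\Gamma_{0}^{-1}\tilde{\Gamma}_{-1}$ near the origin recorded in Sections~\ref{Sec_Green_2D}--\ref{Sec_Green_3D}, where the logarithmic $2$D rate and the power-type $3$D rate differ but both suffice. Once these asymptotics are invoked, the identities for $\tau_{1}u$ and $\tau_{2}u$ hold on the whole adjoint domain, the equivalence $\gamma u_{0}=\theta X\Leftrightarrow\tau_{2}u=\left(\Xi_{-1}+\theta+\chi_{-1}\mathbf{I}_{n\times N}\right)\tau_{1}u$ follows, and the two representations of $\mathsf{dom}\left(L_{\Lambda\left(\theta\right)}\right)$ coincide, establishing (\ref{Lame_Lamda_dom_BC}).
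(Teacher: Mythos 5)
Your proposal is correct and follows essentially the same route as the paper: both establish, on the whole adjoint domain and for arbitrary $u_{0}$ and $X$, the identities $\tau_{1}\left(u_{0}+G_{-1}X\right)=X$ and $\tau_{2}\left(u_{0}+G_{-1}X\right)=\gamma u_{0}+\left(\Xi_{-1}+\chi_{-1}\right)X$ via the asymptotics (\ref{Gamma_lim_inv_2D_est})--(\ref{Gamma_lim_inv_2D_est_2}) and (\ref{Gamma_lim_inv_3D_est})--(\ref{Gamma_lim_inv_3D_est_2}), and then convert the constraint $\gamma u_{0}=\theta X$ from Lemma~\ref{Lemma_Lame_Lambda_dom} into the boundary condition (\ref{Lame_Lamda_dom_BC}). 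No gaps beyond those already present in the paper's own treatment.
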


\begin{proof}
From (\ref{Gamma_lim_inv_2D_est}) and (\ref{Gamma_lim_inv_3D_est}), follows%
\begin{equation}%
\begin{array}
[c]{ccccc}%
\tau_{1}u_{0}=0\,, &  & u_{0}\in\left(  H_{\eta}^{2}\left(  \mathbb{R}%
^{n}\right)  \right)  ^{n}\,, &  & \eta\in\mathbb{R}\,.
\end{array}
\label{tau_1_id}%
\end{equation}
Moreover%
\begin{align}
&  \left.  \left(  \tau_{1}G_{z}X\right)  _{j,k}=\lim_{x\rightarrow y^{(k)}}%
%TCIMACRO{\tsum \nolimits_{\ell=1}^{n}}%
%BeginExpansion
{\textstyle\sum\nolimits_{\ell=1}^{n}}
%EndExpansion
\left(  \Gamma_{0}^{-1}\left(  x-y^{(k)}\right)  \right)  _{j,\ell}\left(
%TCIMACRO{\tsum \nolimits_{p^{\prime}=1}^{N}}%
%BeginExpansion
{\textstyle\sum\nolimits_{k^{\prime}=1}^{N}}
%EndExpansion%
%TCIMACRO{\tsum \nolimits_{j^{\prime}=1}^{n}}%
%BeginExpansion
{\textstyle\sum\nolimits_{j^{\prime}=1}^{n}}
%EndExpansion
\left(  \Gamma_{z}\left(  x-y^{(k^{\prime})}\right)  \right)  _{\ell
,j^{\prime}}X_{j^{\prime},k^{\prime}}\right)  \right. \nonumber\\
& \nonumber\\
&  \left.  =\lim_{x\rightarrow y^{(k)}}%
%TCIMACRO{\tsum \nolimits_{p^{\prime}=1}^{N}}%
%BeginExpansion
{\textstyle\sum\nolimits_{k^{\prime}=1}^{N}}
%EndExpansion%
%TCIMACRO{\tsum \nolimits_{j^{\prime}=1}^{n}}%
%BeginExpansion
{\textstyle\sum\nolimits_{j^{\prime}=1}^{n}}
%EndExpansion
\left(
%TCIMACRO{\tsum \nolimits_{\ell=1}^{n}}%
%BeginExpansion
{\textstyle\sum\nolimits_{\ell=1}^{n}}
%EndExpansion
\left(  \Gamma_{0}^{-1}\left(  x-y^{(k)}\right)  \right)  _{j,\ell}\left(
\Gamma_{z}\left(  x-y^{(k^{\prime})}\right)  \right)  _{\ell,j^{\prime}%
}\right)  X_{j^{\prime},k^{\prime}}\right. \nonumber\\
& \nonumber\\
&  \left.  =\lim_{x\rightarrow y^{(k)}}%
%TCIMACRO{\tsum \nolimits_{p^{\prime}=1}^{N}}%
%BeginExpansion
{\textstyle\sum\nolimits_{k^{\prime}=1}^{N}}
%EndExpansion%
%TCIMACRO{\tsum \nolimits_{j^{\prime}=1}^{n}}%
%BeginExpansion
{\textstyle\sum\nolimits_{j^{\prime}=1}^{n}}
%EndExpansion
\left(  \Gamma_{0}^{-1}\left(  x-y^{(k)}\right)  \Gamma_{z}\left(
x-y^{(k^{\prime})}\right)  \right)  _{j,j^{\prime}}X_{j^{\prime},k^{\prime}%
}\,.\right.
\end{align}
From (\ref{Gamma_lim_inv_2D_est_2}) and (\ref{Gamma_lim_inv_3D_est_2}) we
obtain%
\begin{align}
&  \left.  \left(  \tau_{1}G_{z}X\right)  _{j,k}=\lim_{x\rightarrow y^{(k)}}%
%TCIMACRO{\tsum \nolimits_{j^{\prime}=1}^{n}}%
%BeginExpansion
{\textstyle\sum\nolimits_{j^{\prime}=1}^{n}}
%EndExpansion
\left(  \Gamma_{0}^{-1}\left(  x-y^{(k)}\right)  \Gamma_{z}\left(
x-y^{(k)}\right)  \right)  _{j,j^{\prime}}X_{j^{\prime},k}\right. \nonumber\\
& \nonumber\\
&  \left.  +\lim_{x\rightarrow y^{(k)}}%
%TCIMACRO{\tsum \nolimits_{\substack{p^{\prime}=1\\p^{\prime}\neq p}}^{N}}%
%BeginExpansion
{\textstyle\sum\nolimits_{\substack{k^{\prime}=1\\k^{\prime}\neq k}}^{N}}
%EndExpansion%
%TCIMACRO{\tsum \nolimits_{j^{\prime}=1}^{n}}%
%BeginExpansion
{\textstyle\sum\nolimits_{j^{\prime}=1}^{n}}
%EndExpansion
\left(  \Gamma_{0}^{-1}\left(  x-y^{(k)}\right)  \Gamma_{z}\left(
x-y^{(k^{\prime})}\right)  \right)  _{j,j^{\prime}}X_{j^{\prime},k^{\prime}%
}=X_{j,k}\,.\right.
\end{align}
It follows%
\begin{equation}%
\begin{array}
[c]{ccc}%
\tau_{1}G_{z}=\mathbf{I}_{n\times N}\,, &  & z\in\mathbb{C}\backslash\left[
0,+\infty\right)  \,.
\end{array}
\label{tau_1_id_1}%
\end{equation}
Let $u\in\mathsf{dom}\left(  \left(  L_{0}\upharpoonright\ker\left(
\gamma\right)  \right)  ^{\ast}\right)  $; by (\ref{Lame_adj_dom}) there exist
$u_{0}\in\left(  H^{2}\left(  \mathbb{R}^{n}\right)  \right)  ^{n}$ and
$X\in\mathbb{C}^{n,N}$ such that: $u=u_{0}+G_{-1}X$. By (\ref{tau_1_id}) and
(\ref{tau_1_id_1}) we have%
\begin{equation}%
\begin{array}
[c]{ccc}%
\tau_{1}\left(  u_{0}+G_{-1}X\right)  =X\,, &  & u_{0}\in\left(  H_{\eta}%
^{2}\left(  \mathbb{R}^{n}\right)  \right)  ^{n}\,,\ X\in\mathbb{C}%
^{n,N}\,,\ \eta\in\mathbb{R}\,.
\end{array}
\label{tau_1_id_plus}%
\end{equation}
By (\ref{tau_2}), (\ref{G_z_id_1}),and (\ref{tau_1_id_plus}) we have%
\begin{align}
&  \left(  \tau_{2}\left(  u_{0}+G_{-1}X\right)  \right)  _{j,k}\nonumber\\
&  =\left(  u_{0}\right)  _{j}\left(  x_{k}\right)  +\lim_{x\rightarrow
y^{(k)}}\left(
%TCIMACRO{\tsum \nolimits_{p^{\prime}=1}^{N}}%
%BeginExpansion
{\textstyle\sum\nolimits_{k^{\prime}=1}^{N}}
%EndExpansion%
%TCIMACRO{\tsum \nolimits_{\ell=1}^{n}}%
%BeginExpansion
{\textstyle\sum\nolimits_{\ell=1}^{n}}
%EndExpansion
\left(  \Gamma_{-1}\left(  x-y^{(k^{\prime})}\right)  \right)  _{j.\ell
}X_{\ell,k^{\prime}}-%
%TCIMACRO{\tsum \nolimits_{\ell=1}^{n}}%
%BeginExpansion
{\textstyle\sum\nolimits_{\ell=1}^{n}}
%EndExpansion
\left(  \Gamma_{0}\left(  x-y^{(k)}\right)  \right)  _{j,\ell}X_{\ell
,k}\right) \nonumber\\
& \nonumber\\
&  \left.  =\left(  u_{0}\right)  _{j}\left(  x_{k}\right)  +%
%TCIMACRO{\tsum \nolimits_{\substack{p^{\prime}=1\\p^{\prime}\neq p}}^{N}}%
%BeginExpansion
{\textstyle\sum\nolimits_{\substack{k^{\prime}=1\\k^{\prime}\neq k}}^{N}}
%EndExpansion%
%TCIMACRO{\tsum \nolimits_{\ell=1}^{n}}%
%BeginExpansion
{\textstyle\sum\nolimits_{\ell=1}^{n}}
%EndExpansion
\left(  \Gamma_{-1}\left(  y^{(k)}-y^{(k^{\prime})}\right)  \right)  _{j,\ell
}X_{\ell,k^{\prime}}\right. \nonumber \\
%
%TCIMACRO{\tsum \nolimits_{\ell=1}^{n}}%
%BeginExpansion
& \tcr{ \quad \left.+\lim_{x\rightarrow y^{(k)}} {\textstyle\sum\nolimits_{\ell=1}^{n}}
%EndExpansion
\left(  \Gamma_{-1}\left(  x-y^{(k)}\right)  -\Gamma_{0}\left(  x-y^{(k)}%
\right)  \right)  _{j,\ell}X_{\ell,k}\,.\right.}  \label{rel_0}%
\end{align}
By (\ref{Gamma_tilda})-(\ref{Gamma_tilda_lim_2D}) and (\ref{Gamma_tilda_3D}%
)-(\ref{Gamma_tilda_lim_3D}) we have%
\begin{equation}
\lim_{x\rightarrow y^{(k)}}%
%TCIMACRO{\tsum \nolimits_{\ell=1}^{n}}%
%BeginExpansion
{\textstyle\sum\nolimits_{\ell=1}^{n}}
%EndExpansion
\left(  \lim_{\left\vert x\right\vert \rightarrow0}\tilde{\Gamma}_{-1}\left(
x\right)  \right)  _{j,\ell}X_{\ell,k}=\chi_{-1}\,X_{j,k}\,, \label{rel_1}%
\end{equation}
where $\chi_{-1}$ is given by (\ref{Gamma_tilda_lim_2D}) and
(\ref{Gamma_tilda_lim_3D}) for $z=-1$. From (\ref{Xi_z}) we have%
\begin{equation}%
\begin{array}
[c]{ccc}%
\Xi_{-1}\in\mathsf{B}\left(  \mathbb{C}^{N,n}\right)  \,, &  & \left(
\Xi_{-1}\,X\right)  _{j,k}=%
%TCIMACRO{\tsum \nolimits_{p^{\prime}=1\,,\ p^{\prime}\neq p}^{N}}%
%BeginExpansion
{\textstyle\sum\nolimits_{k^{\prime}=1\,,\ k^{\prime}\neq k}^{N}}
%EndExpansion
\left(  \Gamma_{-1}\left(  y^{(k)}-y^{(k^{\prime})}\right)  X\right)
_{j,k}\,.
\end{array}
\label{rel_2}%
\end{equation}
Using (\ref{rel_1})-(\ref{rel_2}) allows to rephrase (\ref{rel_0}) as%
\begin{equation}%
\begin{array}
[c]{ccc}%
\tau_{2}\left(  u_{0}+G_{-1}X\right)  =\gamma u_{0}+\left(  \Xi_{-1}+\chi
_{-1}\,\right)  X\,, &  & u_{0}\in\left(  H_{\eta}^{2}\left(  \mathbb{R}%
^{n}\right)  \right)  ^{n}\,,\ X\in\mathbb{C}^{n,N}\,,\ \eta\in\mathbb{R}\,.
\end{array}
\label{tau_2_id}%
\end{equation}
In wiev of (\ref{Lame_adj_dom}), Lemma \ref{Lemma_Lame_Lambda_dom},
(\ref{tau_1_id_plus}) and (\ref{tau_2_id}), the domain (\ref{Lame_Lamda_dom})
identifies with (\ref{Lame_Lamda_dom_BC}).
\end{proof}

\subsubsection{The diffusion problem}

For $\Lambda_{z}\left(  \theta\right)  $ given by (\ref{Weyl}) with $\theta
\in\mathsf{B}\left(  \mathbb{C}^{n,N}\right)  $ selfadjoint, we introduce an
extended operator $\tilde{L}_{\Lambda\left(  \theta\right)  }:\mathsf{dom}%
\left(  \tilde{L}_{\Lambda\left(  \theta\right)  }\right)  \rightarrow
L_{-\eta}^{2}\left(  \mathbb{R}^{3}\right)  $ defined for $\eta>1/2$ and
$z\in\mathbb{C}\backslash\mathbb{R}$ by%
\begin{equation}
\left\{
\begin{array}
[c]{l}%
\mathsf{dom}\left(  \tilde{L}_{\Lambda\left(  \theta\right)  }\right)
=\left\{  u\in\left(  L_{-\eta}^{2}\left(  \mathbb{R}^{n}\right)  \right)
^{n}\,,\ u=u_{0}+G_{z}\Lambda_{z}\left(  \theta\right)  \gamma u_{0}%
\,,\ u_{0}\in\left(  H_{-\eta}^{2}\left(  \mathbb{R}^{n}\right)  \right)
^{n}\right\}  \,,\\
\\
\tilde{L}_{\Lambda\left(  \theta\right)  }u=\left(  -\mu\Delta-\left(
\lambda+\mu\right)  \nabla\operatorname{div}\right)  u\,,\qquad\text{in
}\mathbb{R}^{n}\backslash Y
\end{array}
\right.  \label{Lame_Lamda_ext}%
\end{equation}
\ This model can be characterized in terms of the boundary conditions
introduced in Proposition \ref{Proposition_Lame_dom}.

\begin{lemma}
Let $\Lambda_{z}\left(  \theta\right)  $ and $\tilde{L}_{\Lambda\left(
\theta\right)  }$ be defined according to (\ref{Weyl}) and
(\ref{Lame_Lamda_ext}) with $\theta\in\mathsf{B}\left(  \mathbb{C}%
^{n,N}\right)  $ selfadjoint. Then, for each $k$ such that the limits
(\ref{Lamda_theta_lim}) exist, $\mathsf{dom}\left(  \tilde{L}_{\Lambda\left(
\theta\right)  }\right)  $ admits the representations%
\begin{equation}
\mathsf{dom}\left(  \tilde{L}_{\Lambda\left(  \theta\right)  }\right)
=\left\{  u\in\left(  L_{-\eta}^{2}\left(  \mathbb{R}^{n}\right)  \right)
^{n}\,,\ u=u_{0}+G_{\omega^{2}}^{\pm}\Lambda_{\omega^{2}}^{\pm}\left(
\theta\right)  \gamma u_{0}\,,\ u_{0}\in\left(  H_{-\eta}^{2}\left(
\mathbb{R}^{n}\right)  \right)  ^{n}\right\}  \,, \label{Lame_Lamda_ext_dom}%
\end{equation}
and the boundary conditions hold%
\begin{equation}%
\begin{array}
[c]{ccc}%
\tau_{2}u=\left(  \Xi_{-1}+\theta+\chi_{-1}\right)  \tau_{1}u\,, &  &
u\in\mathsf{dom}\left(  \tilde{L}_{\Lambda\left(  \theta\right)  }\right)  \,.
\end{array}
\label{Lame_Lamda_ext_dom_BC}%
\end{equation}

\end{lemma}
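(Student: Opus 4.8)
The plan is to prove the two assertions by transporting to the weighted spaces $(H_{-\eta}^{2}(\mathbb{R}^{n}))^{n}$, $\eta>1/2$, the arguments already used for the non-extended operator. The pivot is the $G_{-1}$-representation of the domain, the weighted analogue of (\ref{Lame_Lamda_dom_1}),
\begin{equation}
\mathsf{dom}(\tilde{L}_{\Lambda(\theta)})=\{u=u_{0}+G_{-1}X\,,\ u_{0}\in(H_{-\eta}^{2}(\mathbb{R}^{n}))^{n}\,,\ X\in\mathbb{C}^{n,N}:\gamma u_{0}=\theta X\}\,,
\end{equation}
which is built only from $G_{-1}$ and is therefore independent of the auxiliary spectral parameter used to construct the operator; both the off-axis form in (\ref{Lame_Lamda_ext}) and the limiting form in (\ref{Lame_Lamda_ext_dom}) will be matched against it.

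To establish the equality of these representations I would run, once for $z\in\mathbb{C}\backslash\mathbb{R}$ and once for $z=\omega^{2}\pm i0$, exactly the computation of Lemma \ref{Lemma_Lame_Lambda_dom}. Starting from $u=u_{0}+G_{z}\Lambda_{z}(\theta)\gamma u_{0}$, the mapping property (\ref{G_z_map_1}) shows that $(G_{-1}-G_{z})\Lambda_{z}(\theta)\gamma u_{0}\in(H_{-\eta}^{2})^{n}$, so $u=\tilde{u}_{0}+G_{-1}\Lambda_{z}(\theta)\gamma u_{0}$ with $\tilde{u}_{0}\in(H_{-\eta}^{2})^{n}$, and the inverse relation $(\Lambda_{z}(\theta))^{-1}=\theta+\gamma(G_{-1}-G_{z})$ coming from (\ref{Weyl}) gives $\gamma\tilde{u}_{0}=\theta X$ with $X=\Lambda_{z}(\theta)\gamma u_{0}$; the converse inclusion is the same computation reversed. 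For the boundary values, the admissibility of $\omega$ (all but a discrete set, by Lemma \ref{Lemma_Weyl_fun}) guarantees that $G_{\omega^{2}}^{\pm}$ and $\Lambda_{\omega^{2}}^{\pm}(\theta)$ exist (Lemma \ref{Lemma_G_z} and Lemma \ref{Lemma_Weyl_fun}), that (\ref{Lamda_theta_lim}) supplies the same inverse relation with $z=\omega^{2}\pm i0$, and that $(G_{-1}-G_{\omega^{2}}^{\pm})$ still maps $\mathbb{C}^{n,N}$ into $(H_{-\eta}^{2})^{n}$ by (\ref{G_z_map_1}). Both runs land on the identical $G_{-1}$-representation above, which yields (\ref{Lame_Lamda_ext_dom}).

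For the boundary condition I would apply the computation of Proposition \ref{Proposition_Lame_dom} directly to the $G_{-1}$-representation, using the remark after (\ref{tau_2}) that $\tau_{1},\tau_{2}$ extend to weighted functions for every $\eta$. Then (\ref{tau_1_id}) and (\ref{tau_1_id_1}) give $\tau_{1}u=X$, while the line leading to (\ref{tau_2_id}) gives $\tau_{2}u=\gamma u_{0}+(\Xi_{-1}+\chi_{-1})X$; inserting the constraint $\gamma u_{0}=\theta X$ together with $X=\tau_{1}u$ produces $\tau_{2}u=(\Xi_{-1}+\theta+\chi_{-1})\tau_{1}u$, which is exactly (\ref{Lame_Lamda_ext_dom_BC}).

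I expect the only genuine difficulty to lie in the boundary passage $z\to\omega^{2}\pm i0$. One must check that the products $G_{z}\Lambda_{z}(\theta)\gamma u_{0}$ converge in $(L_{-\eta}^{2}(\mathbb{R}^{n}))^{n}$; this is where the finite dimensionality of $\mathbb{C}^{n,N}$ is essential, since it lets the operator-norm convergence of $\Lambda_{z}(\theta)$ combine with the weaker weighted convergence of $G_{z}^{\pm}$ into genuine convergence of the composite, and one must keep $\omega$ outside the discrete exceptional set of Lemma \ref{Lemma_Weyl_fun}. Everything else is the verbatim repetition, in weighted spaces, of the arguments of Lemma \ref{Lemma_Lame_Lambda_dom} and Proposition \ref{Proposition_Lame_dom}.
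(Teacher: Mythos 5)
Your proposal is correct, and it reaches both claims with the same toolkit as the paper, but the bridge you use for the domain identity (\ref{Lame_Lamda_ext_dom}) is organized differently. The paper rewrites $u_{0}+G_{z}\Lambda_{z}\left(\theta\right)\gamma u_{0}$ \emph{directly} as $\tilde{u}_{0}+G_{\omega^{2}}^{\pm}\Lambda_{\omega^{2}}^{\pm}\left(\theta\right)\gamma\tilde{u}_{0}$ in two steps: first the identity $G_{z}=G_{\omega^{2}}^{\pm}+\left(\omega^{2}-z\right)\mathcal{K}_{\omega^{2}}^{\pm}G_{z}$ from (\ref{G_z_id}) to swap $G_{z}$ for $G_{\omega^{2}}^{\pm}$, then the Krein-type identity $ii)$ of (\ref{Lamda_assumptions}) to convert $\Lambda_{z}\left(\theta\right)\gamma u_{0}$ into $\Lambda_{\omega^{2}}^{\pm}\left(\theta\right)\gamma\tilde{u}_{0}$; only afterwards does it invoke the $G_{-1}$-representation (``proceeding as in Lemma \ref{Lemma_Lame_Lambda_dom}'') to read off the boundary condition. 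You instead use the $z$-independent $G_{-1}$-form as a common pivot, running the computation of Lemma \ref{Lemma_Lame_Lambda_dom} once off the real axis and once at $z=\omega^{2}\pm i0$, which needs only the mapping property (\ref{G_z_map_1}) for $G_{-1}-G_{\omega^{2}}^{\pm}$ and the explicit inverse relation (\ref{Lamda_theta_lim}), and entirely bypasses the identity $ii)$ of (\ref{Lamda_assumptions}). This is slightly more economical: one argument yields both (\ref{Lame_Lamda_ext_dom}) and, via (\ref{tau_1_id_plus})--(\ref{tau_2_id}), the boundary condition (\ref{Lame_Lamda_ext_dom_BC}) in a single pass, and it makes transparent that the domain does not depend on the auxiliary parameter $z$. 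One small remark: the ``genuine difficulty'' you flag at the end --- convergence of the products $G_{z}\Lambda_{z}\left(\theta\right)\gamma u_{0}$ in $\left(L_{-\eta}^{2}\left(\mathbb{R}^{n}\right)\right)^{n}$ as $z\rightarrow\omega^{2}\pm i0$ --- is not actually needed in your scheme, since you compare the two representations set-theoretically through the $G_{-1}$-form rather than by a limiting argument; all you require is that $\omega$ lies outside the exceptional set of Lemma \ref{Lemma_Weyl_fun} so that $\left(\Lambda_{\omega^{2}}^{\pm}\left(\theta\right)\right)^{-1}=\theta+\gamma\left(G_{-1}-G_{\omega^{2}}^{\pm}\right)$ is available, which is exactly the hypothesis of the statement.
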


\begin{proof}
By (\ref{G_z_id})-(\ref{G_z_map_1}), $u=u_{0}+G_{z}\Lambda_{z}\left(
\theta\right)  \gamma u_{0}$ identifies with%
\begin{equation}%
\begin{array}
[c]{ccc}%
u=\tilde{u}_{0}+G_{\omega^{2}}^{\pm}\Lambda_{z}\left(  \theta\right)  \gamma
u_{0}\,, &  & \tilde{u}_{0}=u_{0}+\left(  \omega^{2}-z\right)  \mathcal{K}%
_{\omega^{2}}^{\pm}G_{z}\Lambda_{z}\left(  \theta\right)  \gamma u_{0}%
\in\left(  H_{-\eta}^{2}\left(  \mathbb{R}^{n}\right)  \right)  ^{n}\,.
\end{array}
\end{equation}
Using (\ref{Lamda_assumptions}) we get%
\begin{equation}
\Lambda_{z}\left(  \theta\right)  \gamma u_{0}=\Lambda_{\omega^{2}}^{\pm
}\left(  \theta\right)  \left(  1+\left(  \omega^{2}-z\right)  \left(
G_{\omega^{2}}^{\mp}\right)  ^{\ast}G_{z}\Lambda_{z}\left(  \theta\right)
\right)  \gamma u_{0}=\Lambda_{\omega^{2}}^{\pm}\left(  \theta\right)
\gamma\tilde{u}_{0}\,,
\end{equation}
from which it follows%
\begin{equation}%
\begin{array}
[c]{ccc}%
u=\tilde{u}_{0}+G_{\omega^{2}}^{\pm}\Lambda_{\omega^{2}}^{\pm}\left(
\theta\right)  \tilde{u}_{0}\,, &  & \tilde{u}_{0}=u_{0}+\left(  \omega
^{2}-z\right)  \mathcal{K}_{\omega^{2}}^{\pm}G_{z}\Lambda_{z}\left(
\theta\right)  \gamma u_{0}\in\left(  H_{-\eta}^{2}\left(  \mathbb{R}%
^{n}\right)  \right)  ^{n}\,.
\end{array}
\end{equation}
This shows that%
\begin{equation}
\mathsf{dom}\left(  \tilde{L}_{\Lambda\left(  \theta\right)  }\right)
\subseteq\left\{  u\in\left(  L_{-\eta}^{2}\left(  \mathbb{R}^{n}\right)
\right)  ^{n}\,,\ u=u_{0}+G_{\omega^{2}}^{\pm}\Lambda_{\omega^{2}}^{\pm
}\left(  \theta\right)  \gamma u_{0}\,,\ u_{0}\in\left(  H_{-\eta}^{2}\left(
\mathbb{R}^{n}\right)  \right)  ^{n}\right\}  \,.
\end{equation}
Using again (\ref{G_z_id})-(\ref{G_z_map_1}) and (\ref{Lamda_assumptions}) a
similar argument leads to the opposite inclusion. Proceeding as in Lemma
\ref{Lemma_Lame_Lambda_dom}, we get%
\begin{equation}
\mathsf{dom}\left(  \tilde{L}_{\Lambda\left(  \theta\right)  }\right)
=\left\{  u=u_{0}+G_{-1}X\,,\ u_{0}\in\left(  H_{-\eta}^{2}\left(
\mathbb{R}^{n}\right)  \right)  ^{n}\,,\ X\in\mathbb{C}^{n,N}:\gamma
u_{0}=\theta X\right\}  \,.
\end{equation}
and the boundary conditions (\ref{Lame_Lamda_ext_dom_BC}) follows from
(\ref{tau_1_id_plus}) and (\ref{tau_2_id}).
\end{proof}

For $\omega>0$ such that the limits (\ref{Lamda_theta_lim}) exist, the
generalized eigenfunctions of energy $\omega^{2}$ are the solutions of the
problem%
\begin{equation}
\left(  \tilde{L}_{\Lambda\left(  \theta\right)  }-\omega^{2}\right)
u=0\,,\qquad u\in\mathsf{dom}\left(  \tilde{L}_{\Lambda\left(  \theta\right)
}\right)  \,. \label{gen_eigenfun_eq}%
\end{equation}

\begin{lemma}
\label{Lemma_gen_eigenfun}For $\omega>0$ such that the limits
(\ref{Lamda_theta_lim}) exist, the solutions of (\ref{gen_eigenfun_eq})
express as%
\begin{equation}%
\begin{array}
[c]{ccc}%
u=u_{0}+G_{\omega^{2}}^{\pm}\Lambda_{\omega^{2}}^{\pm}\left(  \theta\right)
\gamma u_{0}\,, &  &
\end{array}
\label{gen_eigenfun_id}%
\end{equation}
where $u_{0}\in\left(  H_{-\eta}^{2}\left(  \mathbb{R}^{n}\right)  \right)
^{n}$ is a generalized eigenfunction of $L_{0}$.
%i.e.: $\left(
%\mu\Delta+\left(  \lambda+\mu\right)  \nabla\operatorname{div}+\omega^{2}\right)
%u_{\omega^{2}}=0$.

\end{lemma}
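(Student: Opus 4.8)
The plan is to reduce the generalized eigenfunction equation \eqref{gen_eigenfun_eq} to an eigenvalue equation for the free operator $L_0$, by establishing the limiting counterpart of the identity \eqref{Lame_Lamda_id}. Concretely, I would show that for every $u_0\in\left(H_{-\eta}^{2}\left(\mathbb{R}^{n}\right)\right)^{n}$ and $u=u_0+G_{\omega^2}^{\pm}\Lambda_{\omega^2}^{\pm}\left(\theta\right)\gamma u_0$ one has
\begin{equation}
\left(\tilde{L}_{\Lambda\left(\theta\right)}-\omega^{2}\right)u=\left(L_0-\omega^{2}\right)u_0
\label{plan_key}
\end{equation}
as an identity in $\left(L_{-\eta}^{2}\left(\mathbb{R}^{n}\right)\right)^{n}$. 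Granting \eqref{plan_key}, the assertion follows at once: by the domain representation \eqref{Lame_Lamda_ext_dom} every $u\in\mathsf{dom}\left(\tilde{L}_{\Lambda\left(\theta\right)}\right)$ is already of the stated form, and \eqref{plan_key} shows that $u$ solves \eqref{gen_eigenfun_eq} if and only if $\left(L_0-\omega^{2}\right)u_0=0$, i.e. if and only if $u_0$ is a generalized eigenfunction of $L_0$ of energy $\omega^{2}$.

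To prove \eqref{plan_key} I would set $X:=\Lambda_{\omega^2}^{\pm}\left(\theta\right)\gamma u_0\in\mathbb{C}^{n,N}$ and analyse the singular part $G_{\omega^2}^{\pm}X$ separately. By Lemma \ref{Lemma_G_z}$(iv)$, $G_{\omega^2}^{\pm}X$ is a finite linear combination of the shifted limit Green tensors $\Gamma_{\omega^2}^{\pm}\left(\cdot-y^{(k)}\right)$, $k=1,\cdots,N$; since each $\Gamma_{\omega^2}^{\pm}$ solves the homogeneous stationary Lam\'e equation $\left(L_0-\omega^{2}\right)\Gamma_{\omega^2}^{\pm}=0$ in $\mathbb{R}^{n}\backslash\left\{0\right\}$ (the limit counterpart of \eqref{Kupradze_res_lim_id}, with the radiation conditions \eqref{rad_cond_L}--\eqref{rad_cond_T}), it follows that
\begin{equation}
\left(-\mu\Delta-\left(\lambda+\mu\right)\nabla\operatorname{div}\right)\left(G_{\omega^2}^{\pm}X\right)=\omega^{2}\,G_{\omega^2}^{\pm}X\qquad\text{in }\mathbb{R}^{n}\backslash Y .
\label{plan_sing}
\end{equation}
Because the action of $\tilde{L}_{\Lambda\left(\theta\right)}$ is, by \eqref{Lame_Lamda_ext}, the Lam\'e operator on $\mathbb{R}^{n}\backslash Y$, adding the contribution $L_0 u_0$ coming from $u_0$ and subtracting $\omega^{2}u$ yields $\left(\tilde{L}_{\Lambda\left(\theta\right)}-\omega^{2}\right)u=\left(L_0-\omega^{2}\right)u_0$ pointwise on $\mathbb{R}^{n}\backslash Y$. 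As $Y$ is a finite set, hence Lebesgue-null, and as $\left(L_0-\omega^{2}\right)u_0\in\left(L_{-\eta}^{2}\left(\mathbb{R}^{n}\right)\right)^{n}$ for $u_0\in\left(H_{-\eta}^{2}\left(\mathbb{R}^{n}\right)\right)^{n}$, this pointwise identity upgrades to the claimed equality in $\left(L_{-\eta}^{2}\left(\mathbb{R}^{n}\right)\right)^{n}$, which is \eqref{plan_key}.

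The step I expect to be the most delicate is controlling the passage to the limiting parameter $z=\omega^{2}\pm i0$: whereas for $z\in\mathbb{C}\backslash\left[0,+\infty\right)$ the clean relation $\left(L_{\Lambda\left(\theta\right)}-z\right)u=\left(L_0-z\right)u_0$ is available from \eqref{Lame_Lamda_id}, at $z=\omega^{2}\pm i0$ one works in the weighted spaces $\left(L_{-\eta}^{2}\left(\mathbb{R}^{n}\right)\right)^{n}$ and must verify that $G_{\omega^2}^{\pm}X$ remains a genuine combination of radiating Green tensors whose only distributional singularity sits on $Y$, so that no spurious source off $Y$ is generated when the second-order operator is applied. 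This is precisely what the continuous-extension statements of Theorem \ref{Thorem_res}$(ii)$ and Lemma \ref{Lemma_G_z}, together with the Kupradze radiation conditions \eqref{rad_cond_L}--\eqref{rad_cond_T}, provide; with these in hand \eqref{plan_sing} and hence \eqref{plan_key} follow, and the characterization of the generalized eigenfunctions is complete.
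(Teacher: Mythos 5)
Your proposal is correct and follows essentially the same route as the paper: both reduce the problem to the identity $(\tilde{L}_{\Lambda(\theta)}-\omega^{2})u=(L_{0}-\omega^{2})u_{0}$, which holds because $G_{\omega^{2}}^{\pm}X$ is a combination of the radiating Green tensors annihilated by $L_{0}-\omega^{2}$ off $Y$, while the definition (\ref{Lame_Lamda_ext}) makes $\tilde{L}_{\Lambda(\theta)}$ act as the Lam\'e operator on $\mathbb{R}^{n}\backslash Y$. You are in fact slightly more complete than the paper, which only writes out the direction "$u_{0}$ generalized eigenfunction $\Rightarrow$ $u$ solves (\ref{gen_eigenfun_eq})", whereas you also invoke the domain representation (\ref{Lame_Lamda_ext_dom}) to get the converse.
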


\begin{proof}
Let $u\in\left(  H_{-\eta}^{2}\left(  \mathbb{R}^{n}\right)  \right)  ^{n}$ be
a generalized eigenfunction of $L_{0}$, and define $u$ according to
(\ref{gen_eigenfun_id}). Then, by (\ref{Lame_Lamda_ext}), we get%
\begin{equation}
\left(  \tilde{L}_{\Lambda\left(  \theta\right)  }-\omega^{2}\right)  \left(
u_{0}+G_{\omega^{2}}^{\pm}\Lambda_{\omega^{2}}^{\pm}\left(  \theta\right)
\gamma u_{0}\right)  =-\left(  \mu\Delta+\left(  \lambda+\mu\right)
\nabla\operatorname{div}+\omega^{2}\right)  u_{0}=0\,.
\end{equation}

\end{proof}

Let $u^{sc}$ denote the stationary diffusion of an incoming wave
$u^{in}:=u_{0}\in\left(  H_{-\eta}^{2}\left(  \mathbb{R}^{n}\right)  \right)
^{n} $ (a generalized eigenfunction of $L_{0}$); we have: $u=u^{in}+u^{sc}%
\in\mathsf{dom}\left(  \tilde{L}_{\Lambda\left(  \theta\right)  }\right)  $
and by (\ref{Lame_Lamda_ext_dom_BC}), $u^{sc}$ solves the boundary condition
problem%
\begin{equation}
\left\{
\begin{array}
[c]{l}%
\begin{array}
[c]{ccc}%
\left(  \mu\Delta+\left(  \lambda+\mu\right)  \nabla\operatorname{div}%
+\omega^{2}\right)  u^{sc}=0\,, &  & \text{in }\mathbb{R}^{n}\backslash\,Y
\end{array}
\\
\\
\tau_{2}\left(  u^{sc}+u^{in}\right)  =\left(  \Xi_{-1}+\theta+\chi
_{-1}\right)  \tau_{1}\left(  u^{sc}+u^{in}\right)  \,,
\end{array}
\right.  \label{diff_eq}%
\end{equation}
and fulfills the Kupradze outgoing radiation conditions (\ref{Kupradze}).
%\begin{equation}
%\left\{
%\begin{array}
%[c]{l}%
%\lim_{\left\vert x\right\vert \rightarrow0}\left\Vert x\right\Vert
%_{\mathbb{R}^{n}}^{\left(  n-1\right)  /2}\left(  \hat{x}\cdot\nabla
%-ik_p\right)  \nabla\operatorname{div}%
%u_{\omega^{2}}^{sc}=0\,,\\
%\\
%\lim_{\left\vert x\right\vert \rightarrow0}\left\Vert x\right\Vert
%_{\mathbb{R}^{n}}^{\left(  n-1\right)  /2}\left(  \hat{x}\cdot\nabla
%-ik_s\right)  \nabla\times\nabla\times\,u_{\omega^{2}}^{sc}=0\,.
%\end{array}
%\right.  \,.\label{diff_eq_rad_cond}%
%\end{equation}
By Lemma \ref{Lemma_gen_eigenfun}, this problem admits the unique solution%
\begin{equation}
u^{sc}=G_{\omega^{2}}^{+}\Lambda_{\omega^{2}}^{+}\left(  \theta\right)  \gamma
u^{in}\,. \label{diff_id}%
\end{equation}

\begin{remark}
The construction presented above provides a large class of point perturbation
models, including anisotropic and non-local interactions. From the physical
point of view, anisotropy refers to different scattering properties depending
on the direction, while non-locality refers to a coupling between different
point scatterers. While the applications considered in this work focus on
isotropic local perturbations, it is worth noticing that the scattering theory
presented here holds in a much more general framework.
\end{remark}

\subsubsection{Modelling local isotropic point perturbations}

\label{subsec:point}

Let us start with a characterization of the class of models we are interested
in. The notion of isotropy corresponds to the fact that the effect of the
perturbation is independent from the direction, while locality excludes the
possible couplings between the points. These properties and the domain
representation in (\ref{Lame_Lamda_dom_BC}) motivate the next definition.

\begin{defn}
Let $\theta\in\mathsf{B}\left(  \mathbb{C}^{n,N}\right)  $ be selfadjoint and
let $\left\{  e_{j,k}\right\}  $ denote the standard basis in $\mathbb{C}%
^{n,N}$. Let $\Xi_{-1}$ and $\chi_{-1}$ be defined by (\ref{Xi_z}),
(\ref{Gamma_tilda_lim_2D}) and (\ref{Gamma_tilda_lim_3D}) for $z=-1$. We say
that the operator $L_{\Lambda\left(  \theta\right)  }$, with $\Lambda
_{z}\left(  \theta\right)  $ given in (\ref{Weyl}), is an\emph{ isotropic
}perturbation when%
\begin{equation}%
\begin{array}
[c]{ccc}%
\left\langle e_{j,k},\left(  \theta+\Xi_{-1}+\chi_{-1}\,\mathbf{I}_{n\times
N}\right)  \left(  e_{j,k^{\prime}}\right)  \right\rangle _{\mathbb{C}^{n,N}%
}=\left\langle e_{j^{\prime},k},\left(  \theta+\Xi_{-1}+\chi_{-1}%
\,\mathbf{I}_{n\times N}\right)  \left(  e_{j^{\prime},k^{\prime}}\right)
\right\rangle _{\mathbb{C}^{n,N}}\,,
\end{array}
\label{iso_def}%
\end{equation}
\tcr{for all $j,j^{\prime}=1,..n$  and $k,k^{\prime}=1,..N\,.$}
The perturbation is \emph{local} if%
\begin{equation}%
\begin{array}
[c]{ccc}%
\left\langle e_{j,k},\left(  \theta+\Xi_{-1}+\chi_{-1}\,\mathbf{I}_{n\times
N}\right)  \left(  e_{j^{\prime},k^{\prime}}\right)  \right\rangle
_{\mathbb{C}^{n,N}}=0\,, &  & \forall\,j,j^{\prime}=1,..n\text{ and }k\neq
k^{\prime}\,.
\end{array}
\label{local_def}%
\end{equation}

\end{defn}

As an example, consider $\alpha\in\mathbb{R}^{N,N}$ and define $\alpha
\in\mathsf{B}\left(  \mathbb{C}^{n,N}\right)  $ as%
\begin{equation}%
\begin{array}
[c]{ccc}%
\alpha\left(  M\right)  :=M\alpha\,, &  & M\in\mathbb{C}^{n,N}\,.
\end{array}
\label{iso_mod}%
\end{equation}
The operator $\theta\left(  \alpha\right)  :=\alpha-\Xi_{-1}-\chi
_{-1}\,\mathbf{I}_{n\times N}\in\mathsf{B}\left(  \mathbb{C}^{N,n}\right)  $
is selfadjoint. This choice enters in the scheme of the Proposition
\ref{Proposition_Lame_dom} and a corresponding perturbed Lam\'{e} operator
$L_{\Lambda\left(  \theta\left(  \alpha\right)  \right)  }$ is defined with
the boundary conditions (\ref{Lame_Lamda_dom}). Is easy to see in this
framework that the scalar products in (\ref{iso_def}) are independent from the
direction. Indeed, we have
\begin{equation}%
\begin{array}
[c]{ccc}%
\left\langle e_{j,k},\left(  \theta\left(  \alpha\right)  +\Xi_{-1}+\chi
_{-1}\,\mathbf{I}_{n\times N}\right)  \left(  e_{j,k^{\prime}}\right)
\right\rangle _{\mathbb{C}^{n,N}}=\left\langle e_{j,k},\alpha\left(
e_{j,k^{\prime}}\right)  \right\rangle _{\mathbb{C}^{n,N}}=\alpha_{k^{\prime
}k}\,, &  & \forall\,j=1,..n\,.
\end{array}
\end{equation}
Hence the perturbation $L_{\Lambda\left(  \theta\left(  \alpha\right)
\right)  }$ is isotropic. The condition of locality is satisfied when%
\[%
\begin{array}
[c]{ccc}%
\left\langle e_{j,k},\left(  \theta\left(  \alpha\right)  +\Xi_{-1}+\chi
_{-1}\,\mathbf{I}_{n\times N}\right)  \left(  e_{j^{\prime},k^{\prime}%
}\right)  \right\rangle _{\mathbb{C}^{n,N}}=\left\langle e_{j,k},\alpha\left(
e_{j^{\prime},k^{\prime}}\right)  \right\rangle _{\mathbb{C}^{n,N}}%
=\alpha_{k^{\prime}k}=0\,, &  & \forall\,k\neq k^{\prime}\,,
\end{array}
\]
which corresponds to the choice of a diagonal $\alpha$. We resume these
properties in the following lemma.

\begin{lemma}
\label{Lemma_iso_loc}Let $\alpha\in\mathbb{R}^{N}$ and define $\alpha
\in\mathsf{B}\left(  \mathbb{C}^{n,N}\right)  $ as%
\begin{equation}%
\begin{array}
[c]{ccc}%
\alpha\left(  M\right)  :=M%
\begin{pmatrix}
\alpha_{1} &  & \\
& \ddots & \\
&  & \alpha_{N}%
\end{pmatrix}
\in\mathbb{C}^{n,N}\,, &  & M\in\mathbb{C}^{n,N}\,.
\end{array}
\label{alpha}%
\end{equation}
The perturbation $L_{\alpha}:=L_{\Lambda\left(  \theta\left(  \alpha\right)
\right)  }$, defined by $\theta\left(  \alpha\right)  :=\alpha-\Xi_{-1}%
-\chi_{-1}\,\mathbf{I}_{n\times N}$, is isotropic and local.
\end{lemma}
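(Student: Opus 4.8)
The plan is to reduce both defining identities (\ref{iso_def}) and (\ref{local_def}) to an elementary computation of the matrix elements of the map $\alpha$ on the standard basis $\{e_{j,k}\}$ of $\mathbb{C}^{n,N}$. The decisive observation is that, by the very choice $\theta(\alpha):=\alpha-\Xi_{-1}-\chi_{-1}\,\mathbf{I}_{n\times N}$, the operator combination entering both conditions collapses to
\[
\theta(\alpha)+\Xi_{-1}+\chi_{-1}\,\mathbf{I}_{n\times N}=\alpha\,.
\]
Thus every contribution of the nonlocal coupling $\Xi_{-1}$ and of the normalizing constant $\chi_{-1}$ cancels, and the isotropy and locality requirements become conditions on $\alpha$ alone.

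Next I would evaluate $\alpha$ on the basis. Since $\alpha$ acts by right multiplication by the real diagonal matrix $\mathrm{diag}(\alpha_1,\dots,\alpha_N)$ as in (\ref{alpha}), it simply rescales the $k'$-th column, so that $\alpha(e_{j',k'})=\alpha_{k'}\,e_{j',k'}$. Pairing against $e_{j,k}$ in the $\mathbb{C}^{n,N}$ inner product then gives $\langle e_{j,k},\alpha(e_{j',k'})\rangle=\alpha_{k'}\,\delta_{j,j'}\,\delta_{k,k'}$. With this single formula both assertions are immediate. For isotropy (\ref{iso_def}), each side — obtained by taking the common direction index equal to $j$ on the left and to $j'$ on the right — evaluates to $\alpha_{k'}\,\delta_{k,k'}$, which is independent of the direction index, so the two coincide. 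For locality (\ref{local_def}) the factor $\delta_{k,k'}$ forces the matrix element to vanish whenever $k\neq k'$, for every pair $j,j'$.

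What remains is to confirm the standing hypothesis that $\theta(\alpha)$ is selfadjoint, so that the construction of Theorem \ref{Theorem_SA_ext} and Proposition \ref{Proposition_Lame_dom} genuinely applies. This is the only point beyond bookkeeping, and it is essentially already contained in the discussion preceding the lemma, where real matrices $\alpha$ are treated; the diagonal case is a special instance. Concretely it splits into three checks: the right-multiplication map $\alpha$ is selfadjoint because $\mathrm{diag}(\alpha_1,\dots,\alpha_N)$ is real; the constant $\chi_{-1}$ defined by (\ref{Gamma_tilda_lim_2D})--(\ref{Gamma_tilda_lim_3D}) at $z=-1$ is real, since with $\sqrt{-1}=i$ the imaginary terms cancel in both space dimensions; and $\Xi_{-1}$ is selfadjoint, which follows from the matrix symmetry $(\Gamma_{-1})_{j,\ell}=(\Gamma_{-1})_{\ell,j}$ together with the evenness $\Gamma_{-1}(-x)=\Gamma_{-1}(x)$ and reality of the Lam\'e Green tensor at the negative energy $z=-1$. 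I expect this selfadjointness verification — in particular the reality of $\chi_{-1}$ and the symmetry of $\Xi_{-1}$ — to be the main, and essentially only, nontrivial obstacle; the isotropy and locality identities themselves are a one-line consequence of the diagonal structure of $\alpha$.
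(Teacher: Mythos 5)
Your proposal is correct and follows essentially the same route as the paper: the paper's ``proof'' is precisely the discussion preceding the lemma, where the combination $\theta(\alpha)+\Xi_{-1}+\chi_{-1}\,\mathbf{I}_{n\times N}$ collapses to $\alpha$ and the matrix elements $\langle e_{j,k},\alpha(e_{j',k'})\rangle$ are read off to give direction-independence (isotropy) and vanishing for $k\neq k'$ when $\alpha$ is diagonal (locality). Your additional verification that $\theta(\alpha)$ is selfadjoint (reality of $\chi_{-1}$ at $z=-1$ in both dimensions and symmetry of $\Xi_{-1}$) is a correct filling-in of a point the paper merely asserts.
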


In what follows, we consider an isotropic and local perturbation $L_{\alpha}$
with $\alpha\in\mathsf{B}\left(  \mathbb{C}^{n,N}\right)  $ selfadjoint given
by (\ref{alpha}). The boundary conditions in (see Proposition
\ref{Proposition_Lame_dom}) are%
\begin{equation}%
\begin{array}
[c]{ccc}%
\tau_{2}u=\,\alpha\left(  \tau_{1}u\right)  \,, &  & u\in\mathsf{dom}\left(
\left(  L_{0}\upharpoonright\ker\left(  \gamma\right)  \right)  ^{\ast
}\right)  \,,
\end{array}
\label{alpha_BC}%
\end{equation}
and componentwise reads as%
\begin{equation}%
\begin{array}
[c]{ccc}%
\left(  \tau_{2}u\right)  _{j,k}=\alpha_{k}\,\left(  \tau_{1}u\right)
_{j,k}\,, &  & j=1,..n\,,\ k=1,..N\,.
\end{array}
\label{alpha_BC_1}%
\end{equation}
With the notation introduced in Proposition \ref{Proposition_Lame_dom}, we
define%
\begin{equation}
\Lambda_{z}^{\alpha}:=\left(  \alpha-\Xi_{-1}-\chi_{-1}\,\mathbf{I}_{n\times
N}+\gamma\left(  G_{-1}-G_{z}\right)  \right)  ^{-1}\,. \label{Lamda_alpha_z}%
\end{equation}
Setting%
\begin{equation}
\mathsf{dom}\left(  L_{\alpha}\right)  =\left\{  u\in\mathsf{dom}\left(
\left(  L_{0}\upharpoonright\ker\left(  \gamma\right)  \right)  ^{\ast
}\right)  :\tau_{2}u=\alpha\,\left(  \tau_{1}u\right)  \right\}  \,,
\label{Lame_alpha_dom}%
\end{equation}
and using the construction of Theorem \ref{Theorem_SA_ext} and Proposition
\ref{Proposition_Lame_dom}, we have%
\begin{equation}
L_{\alpha}:=\left(  L_{0}\upharpoonright\ker\left(  \gamma\right)  \right)
^{\ast}\upharpoonright\mathsf{dom}\left(  L_{\alpha}\right)  \,.
\label{Lame_alpha}%
\end{equation}
Rephrasing in this framework the result of the previous sections, by Lemma
\ref{Lemma_Weyl_fun} the limit maps $\Lambda_{\omega^{2}}^{\alpha,\pm}%
\in\mathsf{B}\left(  \mathbb{C}^{n,N}\right)  $ exist for $\omega^{2}%
\in\left(  0,+\infty\right)  \backslash S_{\alpha}$ where $S_{\alpha}%
\subset\left(  0,+\infty\right)  $ is a discrete subset. Under this condition,
the stationary diffusion problem of an incoming wave $u^{in}$ reads as%
\begin{equation}
\left\{
\begin{array}
[c]{l}%
\begin{array}
[c]{ccc}%
\left(  \mu\Delta+\left(  \lambda+\mu\right)  \nabla\operatorname{div}%
+\omega^{2}\right)  u^{sc}=0\,, &  & \text{in }\mathbb{R}^{n}\backslash\,Y
\end{array}
\\
\tau_{2}\left(  u^{sc}+u^{in}\right)  =\alpha\,\tau_{1}\left(  u^{sc}%
+u^{in}\right)  \,,\\
\text{The outgoing radiation conditions in (\ref{Kupradze}) hold.}%
\end{array}
\right.  \label{diff_eq_alpha}%
\end{equation}

\begin{lemma}
\label{lemm} Let $\alpha\in\mathsf{B}\left(  \mathbb{C}^{N,n}\right)  $ be
defined by (\ref{alpha}), $\omega^{2}\in\left(  0,+\infty\right)  \backslash
S_{\alpha}$ and $u^{in}\in\left(  H_{-\eta}^{2}\left(  \mathbb{R}^{n}\right)
\right)  ^{n}$ be a generalized eigenfunction of energy $\omega^{2}$ of
$L_{0}$. The unique solution of (\ref{diff_eq_alpha}) is given by%
\begin{align}
&  \left.  \left(  u^{sc}\right)  _{\ell}\left(  x\right)  =\frac{1}{\mu}%
%TCIMACRO{\tsum \nolimits_{k=1}^{N}}%
%BeginExpansion
{\textstyle\sum\nolimits_{k=1}^{N}}
%EndExpansion
\Phi_{k_{s}}^{+}\left(  x-y^{(k)}\right)  \left(  \Lambda_{\omega^{2}}%
^{\alpha,+}\gamma u^{in}\right)  _{\ell,k}\left(  x\right)  \right.
\nonumber\\
&  \left.  +\frac{1}{\omega^{2}}%
%TCIMACRO{\tsum \nolimits_{k=1}^{N}}%
%BeginExpansion
{\textstyle\sum\nolimits_{k=1}^{N}}
%EndExpansion
\partial_{\ell}\left(
%TCIMACRO{\tsum \nolimits_{j=1}^{n}}%
%BeginExpansion
{\textstyle\sum\nolimits_{j=1}^{n}}
%EndExpansion
\partial_{j}\left(  \Phi_{k_{s}}^{+}\left(  x-y^{(k)}\right)  -\Phi_{k_{p}%
}^{+}\left(  x-y^{(k)}\right)  \right)  \left(  \Lambda_{\omega^{2}}%
^{\alpha,+}\gamma u^{in}\right)  _{j,k}\left(  x\right)  \right)  \,.\right.
\label{diff_id_alpha}%
\end{align}

\end{lemma}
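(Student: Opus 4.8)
The plan is to recognize that problem (\ref{diff_eq_alpha}) is nothing but the specialization of the general diffusion problem (\ref{diff_eq}) to the selfadjoint boundary operator $\theta(\alpha)=\alpha-\Xi_{-1}-\chi_{-1}\,\mathbf{I}_{n\times N}$ of Lemma \ref{Lemma_iso_loc}, for which $\Lambda^+_{\omega^2}(\theta(\alpha))$ coincides with the map $\Lambda^{\alpha,+}_{\omega^2}$ introduced in (\ref{Lamda_alpha_z}). Hence existence, uniqueness and the closed form of the solution are already encoded in the selfadjoint-extension machinery developed above. Concretely, since $u^{in}$ is a generalized eigenfunction of $L_0$ of energy $\omega^2$ and $\omega^2\notin S_\alpha$ (so that the limit $\Lambda^{\alpha,+}_{\omega^2}$ exists by Lemma \ref{Lemma_Weyl_fun}), Lemma \ref{Lemma_gen_eigenfun} together with (\ref{diff_id}) identifies the unique outgoing scattered field as $u^{sc}=G^+_{\omega^2}\Lambda^{\alpha,+}_{\omega^2}\gamma u^{in}$, the outgoing character being guaranteed by the Kupradze radiation condition for $G^+_{\omega^2}$ recorded in Lemma \ref{Lemma_G_z}$(iv)$. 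The whole task thus reduces to writing this abstract expression explicitly in terms of the scalar Helmholtz fundamental solutions.

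The second step is to insert the kernel representation of $G^+_{\omega^2}$ from (\ref{G_z_id_1}). Writing $X:=\Lambda^{\alpha,+}_{\omega^2}\gamma u^{in}\in\mathbb{C}^{n,N}$, which is a matrix of constants independent of $x$, we obtain componentwise
\begin{equation}
(u^{sc})_\ell(x)=\sum_{k=1}^N\sum_{j=1}^n\left(\Gamma^+_{\omega^2}(x-y^{(k)})\right)_{\ell,j}X_{j,k}\,.
\end{equation}
I then substitute the Kupradze decomposition of the limiting Green tensor coming from (\ref{Kupradze_res_lim}), namely $\Gamma^+_{\omega^2}=\frac{1}{\mu}\Phi^+_{k_s}\,\mathbf{I}_n+\frac{1}{\omega^2}\nabla\operatorname{div}(\Phi^+_{k_s}-\Phi^+_{k_p})$. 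At the level of kernels this reads $\left(\Gamma^+_{\omega^2}(x)\right)_{\ell,j}=\frac{1}{\mu}\Phi^+_{k_s}(x)\delta_{\ell,j}+\frac{1}{\omega^2}\partial_\ell\partial_j\bigl(\Phi^+_{k_s}(x)-\Phi^+_{k_p}(x)\bigr)$, since applying the matrix operator $\nabla\operatorname{div}$ to a scalar kernel produces exactly the Hessian entries $\partial_\ell\partial_j$.

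The third step is pure bookkeeping. The scalar part contributes $\frac{1}{\mu}\sum_k\Phi^+_{k_s}(x-y^{(k)})\,X_{\ell,k}$ once the Kronecker delta collapses the $j$-sum to $j=\ell$. For the second part, since the coefficients $X_{j,k}$ do not depend on $x$, the outer derivative $\partial_\ell$ may be pulled in front of the $k$-sum while the inner derivative $\partial_j$ is kept inside the $j$-sum, yielding $\frac{1}{\omega^2}\sum_k\partial_\ell\bigl(\sum_j\partial_j(\Phi^+_{k_s}(x-y^{(k)})-\Phi^+_{k_p}(x-y^{(k)}))\,X_{j,k}\bigr)$. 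Adding the two contributions and recalling $X=\Lambda^{\alpha,+}_{\omega^2}\gamma u^{in}$ gives precisely (\ref{diff_id_alpha}).

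I do not expect a genuine obstacle here: the result is a direct specialization and unfolding of the abstract Krein-type formula. The only points requiring care are $(i)$ checking that the isotropic and local choice $\theta(\alpha)$ reproduces the boundary conditions (\ref{alpha_BC_1}) of (\ref{diff_eq_alpha}), which is exactly the content of Lemma \ref{Lemma_iso_loc} read through Proposition \ref{Proposition_Lame_dom}; and $(ii)$ translating the operator identity $\nabla\operatorname{div}$ into the correct second-order derivative pattern $\partial_\ell\partial_j$ at the level of kernels, together with the observation that the coefficient matrix $X$ is $x$-independent and hence commutes with the differentiations. Uniqueness is inherited from the selfadjointness of $L_\alpha$ established via Theorem \ref{Theorem_SA_ext} and from the validity of Lemma \ref{Lemma_gen_eigenfun} for $\omega^2\notin S_\alpha$.
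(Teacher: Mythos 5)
Your proof is correct and follows essentially the same route as the paper: the authors likewise obtain (\ref{diff_id_alpha}) directly from (\ref{diff_id}) by inserting the kernel representation (\ref{G_z_id_1}) and the Kupradze decomposition (\ref{Kupradze_res_lim}). Your version merely spells out the bookkeeping (the Kronecker-delta collapse and the $x$-independence of $\Lambda_{\omega^{2}}^{\alpha,+}\gamma u^{in}$) that the paper leaves implicit.
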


\begin{proof}
The representation (\ref{diff_id_alpha}) follows from (\ref{diff_id}) by
taking into account (\ref{Kupradze_res_lim}) and (\ref{G_z_id_1}).
\end{proof}

\begin{remark}
From the identity%
\begin{align}
&  \left.  \gamma\left(  G_{-1}-G_{z}\right)  =\lim_{x\rightarrow0}\left(
\Gamma_{-1}\left(  x\right)  -\Gamma_{z}\left(  x\right)  \right)
\mathbf{I}_{n}+\Xi_{-1}-\Xi_{z}\right. \nonumber\\
&  \left.  =\lim_{x\rightarrow0}\left(  \Gamma_{-1}\left(  x\right)
-\Gamma_{0}\left(  x\right)  \right)  \mathbf{I}_{n}-\lim_{x\rightarrow
0}\left(  \Gamma_{z}\left(  x\right)  -\Gamma_{z}\left(  x\right)  \right)
\mathbf{I}_{n}+\Xi_{-1}-\Xi_{z}\,,\right. \nonumber
\end{align}
and the limits (\ref{Gamma_tilda_lim_2D}), (\ref{Gamma_tilda_lim_3D}), follows%
\begin{equation}
\gamma\left(  G_{-1}-G_{z}\right)  =\chi_{-1}\mathbf{I}_{n\times N}-\chi
_{z}\,\mathbf{I}_{n\times N}+\Xi_{-1}-\Xi_{z}\,.
\end{equation}
Then (\ref{Lamda_alpha_z}) rephrases as%
\begin{equation}
\Lambda_{z}^{\alpha}:=\left(  \alpha-\chi_{z}\,\mathbf{I}_{n\times N}-\Xi
_{z}\right)  ^{-1}\,,
\end{equation}
and the corresponding limits $\Lambda_{\omega^{2}}^{\alpha,+}$ corresponds to
the inverse of%
\begin{align}
&  \left.  \left(  \Lambda_{\omega^{2}}^{\alpha}\right)  ^{-1}:=\left(
\alpha-\chi_{\omega^{2}}\mathbf{I}_{n\times N}-\Xi_{\omega^{2}}\right)
\right. \nonumber\\
& \nonumber\\
&  \left.  =%
\begin{pmatrix}
\left(  \alpha_{1}-\chi_{\omega^{2}}\right)  \mathbf{I}_{n} & -\Gamma
_{\omega^{2}}\left(  y^{(1)}-y^{(2)}\right)  & \cdots & -\Gamma_{\omega^{2}%
}\left(  y^{(1)}-y^{(N)}\right) \\
-\Gamma_{\omega^{2}}\left(  y^{(2)}-y^{(1)}\right)  & \left(  \alpha_{2}%
-\chi_{\omega^{2}}\right)  \mathbf{I}_{n} & \cdots & -\Gamma_{\omega^{2}%
}\left(  y^{(2)}-y^{(N)}\right) \\
\vdots &  &  & \vdots\\
-\Gamma_{\omega^{2}}\left(  y^{(N)}-y^{(1)}\right)  & \cdots & -\Gamma
_{\omega^{2}}\left(  y^{(N)}-y^{(N-1)}\right)  & \left(  \alpha_{n}%
-\chi_{\omega^{2}}\right)  \mathbf{I}_{n}%
\end{pmatrix}
\,.\right.  \label{Lamda_alpha_lim}%
\end{align}
%where, according to (\ref{Gamma_Res_ker}) and to the results of Sections
%\ref{Sec_Green_2D}, \ref{Sec_Green_3D}, the limit functions are given by%
%\begin{equation}
%\Gamma_{\omega^{2}}^{+}=\frac{1}{\mu}\Phi_{k_p}^{+}+\frac{1}{\omega^2}%
%\nabla\operatorname{div}\left(  \Phi_{k_s}^{+}-\Phi_{k_p }^{+}\right)  \,,
%\end{equation}%
%\begin{equation}
%\Phi_{k}^{+}\left(  x\right)  :=\left\{
%\begin{array}
%[c]{lll}%
%\frac{i}{4}H_{0}^{\left(  1\right)  }\left(  k\left\vert x\right\vert \right)
%\,, &  & \text{in } n=2\,,\\
%&  & \\
%\frac{e^{ik\left\vert x\right\vert }}{4\pi\left\vert x\right\vert }\,, &  &
%\text{in }n=3\,,
%\end{array}
%\right.
%\end{equation}%
%\begin{equation}
%\chi  :=\left\{
%\begin{array}
%[c]{lll}%
%-\frac{1}{4\pi}\left[  \frac{\lambda+3\mu}{\mu\left(  \lambda+2\mu\right)
%}\left(  \ln\frac{k}{2}+C-i\frac{\pi}{2}\right)  +\frac{\lambda+\mu}%
%{\mu\left(  \lambda+2\mu\right)  }-\frac{1}{2}\left(  \frac{\ln\mu}{\mu}%
%+\frac{\ln\left(  \lambda+2\mu\right)  }{\lambda+2\mu}\right)  \right]  \,, &
%& \text{in }n=2\,,\\
%&  & \\
%ik\frac{2\lambda+5\mu}{12\pi\mu\left(  \lambda+2\mu\right)  }\,, &  & \text{in
%}n=3\,.
%\end{array}
%\right.
%\end{equation}
The representation (\ref{Lamda_alpha_lim}) is consistent with the ones
provided in \cite[Sec. II and III]{HuSi} for isotropic point perturbations
models using the regularization approach.
\end{remark}

\bigskip

\section{Elastic scattering by point-like and extended obstacles}

\label{sec:multiscale} \setcounter{equation}{0}
%We first recall the the Kupradze matrix $\Gamma_{\omega^2}$ of the fundamental solution to the Navier equation in $\R^3$ (see Ref. \cite[Chapter 2]{Kupradze} )
%\be\label{GO}
%\Gamma_{\omega^2}(x,z)=\frac{1}{\mu} \Phi_{k_s}(x,z)\I+\frac{1}{\omega^2} \grad_x\, \grad_x^\top[\Phi_{k_s}(x,z)-\Phi_{k_p}(x,z)],
%\en
%where $\Phi_k(x,z)=1/(4\pi|x-z|)\exp(ik|x-z|)$ denotes the free-space fundamental solution of the Helmholtz equation $\Delta+k^2u=0$ in $\R^3$.
%Using Taylor series expansion for exponential functions and taking $\omega\rightarrow 0$ in (\ref{GO}), we obtain
%\be\label{G03}
%\lim_{w\rightarrow 0}\Gamma_{\omega^2}(x,z)=\frac{\la+3\mu}{8\pi \mu (\la+2\mu)} \frac{1}{|x-z|}\,\I+\frac{\la+\mu}{8\pi\mu(\la+2\mu)}\frac{1}{|x-z|^3}\,\Xi(x-z)
%=:\Gamma_{0}(x,z).
%\en
%This is just
%the Kelvin matrix  of the fundamental solution of the Lam\'e system ($\omega=0$) in $\R^3$. In addition, we have the relation
%\ben
%\Gamma_{\omega^2}(x,z)=\Gamma_{0}(x,z)+\eta\;\I+O(|x-z|),\quad
%\eta:=i\omega\frac{2\la+5\mu}{12\Gamma \mu (\la+2\mu)},
%\enn as $x\rightarrow z$.

In this section, we consider the scattering of elastic incident waves from a
multi-scale scatterer $\Omega=D\cup Y$, where $D$ is an extended obstacle and
$Y:=\{y^{(j)}:j=1,2,\cdots,N\}\subset{\mathbb{R}}^{n}\backslash\tcr{\overline
{D}}$ represents a set of finitely many point-like elastic scatterers. For
simplicity we assume that the extended scatterer $D$ is a rigid elastic body.
However, our augments can be easily adpted to other penetrable or impenetrable
extended scatterers.

In what follows we focus on the case where $\Omega$ is formed by point
scatterer and an extended sound-soft (i.e. Dirichlet) obstacle. We will next
denote with $u(x)=u^{in}(x)+u^{sc}(x)$ the total field corresponding to the
multiple scattering of an incident wave $u^{in}$ on the point scatterers in
$Y$ and the Dirichlet extended obstacle $D$, while $u_{D}=u^{in}+u_{D}^{sc}%
\in(H_{loc}^{1}({\mathbb{R}}^{n}\backslash\overline{\Omega}))^{n}$ is the
total field in the absence of the point obstacles, i.e.: $u_{D}^{sc}$ is the
unique Kupradze outgoing radiation-solution to the boundary value problem
\begin{equation}
\left(  \Delta^{\ast}+\omega^{2}\right)  u_{D}^{sc}=0\quad\mbox{in}\quad
{\mathbb{R}}^{n}\backslash\overline{D},\qquad u_{D}^{sc}=-u^{in}%
\quad\mbox{on}\quad\partial D\,.\label{BVP:point}%
\end{equation}
The fundamental solution for the Navier equation in ${\mathbb{R}}%
^{n}\backslash\overline{D}$ with Dirichlet boundary condition on $\partial D$,
next denoted with $\Gamma_{D}(x,y)$, is a $\mathbb{C}^{n,n}$ tensor field
defined by: $\Gamma_{D}(x,y):=\Gamma_{D}^{sc}(x,y)+\Gamma_{\omega^{2}}(x,y)$
where $\Gamma_{\omega^{2}}(x,y)$ ($y\in{\mathbb{R}}^{n}\backslash\overline{D}%
$) is the free space Green's tensor to the Navier equation, while $\Gamma
_{D}^{sc}(x,y)$ is the unique solution to
\begin{equation}
\left(  \Delta^{\ast}+\omega^{2}\right)  \Gamma_{D}^{sc}(\cdot,y)=0\quad
\mbox{in}\quad{\mathbb{R}}^{n}\backslash\overline{D},\qquad\Gamma_{D}%
^{sc}(\cdot,y)=-\Gamma_{\omega^{2}}(\cdot,y)\quad\mbox{on}\quad\partial D\,.
\end{equation}

Motivated by the "impedance"-type boundary condition (\ref{diff_eq_alpha}) for
modelling local and isotropic point perturbations established in Section
\ref{sec:point-interaction}, we assume that the boundary conditions
(\ref{alpha_BC_1}) hold for the total field, i.e.%
\begin{equation}%
\begin{array}
[c]{ccccc}%
(\tau_{2}u)_{j,k}=(\tau_{1}u)_{j,k}\alpha_{k}\,, &  & \alpha\in{\mathbb{C}%
}^{N}{\mathbb{\,}}, &  & j=1,..n\,,\ k=1,..N\,.
\end{array}
\label{aps3}%
\end{equation}
According to the definition of the mappings $\tau_{\ell=1,2}$ (see
(\ref{tau_1}) and (\ref{tau_2})) we have the following asymptotic behavior%
\begin{equation}
u_{j}(x)=%
%TCIMACRO{\tsum _{j^{\prime}=1,..n}}%
%BeginExpansion
{\textstyle\sum_{j^{\prime}=1,..n}}
%EndExpansion
\left(  \Gamma_{0}(x,y^{(k)})\right)  _{j,j^{\prime}}\,(\tau_{1}u)_{j^{\prime
},k}+(\tau_{2}u)_{j,k}+O(|x-y^{(k)}|)\qquad\mbox{as}\quad x\rightarrow
y^{(k)}.\label{aps4}%
\end{equation}
Let $\alpha=(\alpha_{1},\alpha_{2},\cdots,\alpha_{N})\in\mathbb{C}^{N}$ and
define as before $\alpha\in\mathsf{B}\left(  \mathbb{C}^{n,N}\right)  $
according to (\ref{alpha}); to describe the solution of (\ref{Navier}),
(\ref{Kupradze}) and (\ref{aps3}), we introduce the modified tensor
$\Lambda_{\omega^{2}}^{\alpha,D}\in\mathsf{B}\left(  \mathbb{C}^{n,N}\right)
$, whose inverse is defined by the $\mathbb{C}^{n,n}$ matrix-block entries%
\begin{equation}\label{MatrixTheta}
\begin{array}
[c]{ccc}%
\left(  \Lambda_{\omega^{2}}^{\alpha,D}\right)  _{k,k^{\prime}}^{-1}=\left\{
\begin{array}
[c]{lll}%
-\Gamma_{D}(y^{(k)},y^{(k^{\prime})})\,, &  & k\neq k^{\prime}\,,\\
\left(  \alpha_{k}-\chi_{\omega^{2}}\right)  \mathbf{I}_{n}\,, &  &
k=k^{\prime}\,,
\end{array}
\right.   &  & k,k^{\prime}=1,..N\,,
\end{array}
\end{equation}
with the constant $\chi_{\omega^{2}}$ given by (\ref{Gamma_tilda_lim_2D}) and
(\ref{Gamma_tilda_lim_3D}) for $z=\omega^{2}$. We define the set%
\begin{equation}
S_{\alpha}^{D}:=\{\omega>0:\mbox{det}\left(  \Lambda_{\omega^{2}}^{\alpha
,D}\right)  ^{-1}=0\}\,,\label{S_D_alpha}%
\end{equation}
and address the exterior stationary diffusion problem of an incoming wave
$u^{in}$%
\begin{equation}
\left\{
\begin{array}
[c]{l}%
\begin{array}
[c]{ccc}%
\left(  \Delta^{\ast}+\omega^{2}\right)  u^{sc}=0\,, &  & \text{in }%
\mathbb{R}^{n}\backslash\,\Omega\,,
\end{array}
\\%
\begin{array}
[c]{ccc}%
u^{sc}=-u^{in}\,, &  & \text{on }\partial D\,,
\end{array}
\\
\tau_{2}\left(  u^{sc}+u^{in}\right)  =\alpha\,\tau_{1}\left(  u^{sc}%
+u^{in}\right)  \,,\\
\text{The outgoing radiation conditions in (\ref{Kupradze}) hold.}%
\end{array}
\right.  \label{diff_eq_alpha_D}%
\end{equation}
We remark that, when $\alpha\in\mathbb{R}^{N}$, this multiscale scattering
model can be justified either using the Foldy's formal approach, the renormalization technique or the
point-interaction approach, following the same arguments presented in Section
\ref{sec:point-interaction}. In particular, the solution can be derived as in
Lemmas \ref{Lemma_gen_eigenfun} and \ref{lemm} \tcr{by replacing the Green's tensor
$\Gamma_{\omega^{2}}$ with $\Gamma_{D}$ and
by replacing the incident wave $u^{in}$ with
$u_{D}^{in}$}, respectively (see (\ref{PointSolution}) below).

%\begin{equation}
%\eta  :=\left\{
%\begin{array}
%[c]{lll}%
%-\frac{1}{4\pi}\left[  \frac{\lambda+3\mu}{\mu\left(  \lambda+2\mu\right)
%}\left(  \ln\frac{\omega}{2}+C-i\frac{\pi}{2}\right)  +\frac{\lambda+\mu}%
%{\mu\left(  \lambda+2\mu\right)  }-\frac{1}{2}\left(  \frac{\ln\mu}{\mu}%
%+\frac{\ln\left(  \lambda+2\mu\right)  }{\lambda+2\mu}\right)  \right]  \,, &
%& \text{in }d=2\,,\\
%&  & \\
%i\omega\frac{2\lambda+5\mu}{12\pi\mu\left(  \lambda+2\mu\right)  }\,, &  & \text{in
%}d=3\,.
%\end{array}
%\right.
%\end{equation}

\begin{theorem}
\label{Th1}Assume that $\omega\notin S_{\alpha}^{D}$ and $\mbox{Im}\,\alpha
_{j}\leq0$ for all $k=1,\cdots,N$. Then, the boundary value problem problem
(\ref{diff_eq_alpha_D}) admits a unique solution in $H_{loc}^{1}({\mathbb{R}%
}^{3}\backslash\overline{\Omega})$, which represents as%
\begin{equation}%
\begin{array}
[c]{ccc}%
u(x)=u_{D}(x)+\sum\limits_{k,k^{\prime}=1,..N}\Gamma_{D}(x,y^{(k)})\,\left(
\Lambda_{\omega^{2}}^{\alpha,D}\right)  _{k,k^{\prime}}\,u_{D}\left(
y^{\left(  k^{\prime}\right)  }\right)  \,, &  & x\in{\mathbb{R}}%
^{3}\backslash\overline{\Omega}\,,
\end{array}
\label{PointSolution}%
\end{equation}
where $u_{D}=u^{in}+u_{D}^{sc}$ is the total field in the absence of the
point-like obstacles and the sums over the space-idices are hidden.
\end{theorem}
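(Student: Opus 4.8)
The plan is to prove existence by directly verifying the proposed representation (\ref{PointSolution}) and then to establish uniqueness by a Rellich-type energy argument that exploits the sign hypothesis $\operatorname{Im}\alpha_{k}\le 0$. For existence I would take as ansatz
\[
u=u_{D}+\sum_{k=1}^{N}\Gamma_{D}(\cdot,y^{(k)})\,c_{k},\qquad c_{k}\in\mathbb{C}^{n},
\]
and first observe that the three easy requirements of (\ref{diff_eq_alpha_D}) hold for \emph{any} choice of the coefficient vectors: away from $\Omega$ the field solves $(\Delta^{\ast}+\omega^{2})u^{sc}=0$ because $u_{D}$ and each $\Gamma_{D}(\cdot,y^{(k)})$ do so off $\overline{D}$ and off $y^{(k)}$; on $\partial D$ one has $u=0$ since $u_{D}$ vanishes there by (\ref{BVP:point}) while $\Gamma_{D}(\cdot,y^{(k)})=\Gamma_{\omega^{2}}(\cdot,y^{(k)})+\Gamma_{D}^{sc}(\cdot,y^{(k)})$ vanishes there by construction; and the Kupradze conditions hold because every summand is outgoing. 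It remains to fix the $c_{k}$ from the impedance condition, and this is where the matrix $\Lambda_{\omega^{2}}^{\alpha,D}$ enters.

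The heart of the existence proof is the evaluation of $\tau_{1}u$ and $\tau_{2}u$ at each $y^{(k)}$. Near $y^{(k)}$ the only singular contribution is $\Gamma_{0}(x-y^{(k)})c_{k}$, since $\Gamma_{D}^{sc}(\cdot,y^{(k)})$, the terms $\Gamma_{D}(\cdot,y^{(k')})$ with $k'\neq k$, and $u_{D}$ are all regular there, while $\Gamma_{\omega^{2}}(x-y^{(k)})=\Gamma_{0}(x-y^{(k)})+\tilde{\Gamma}_{\omega^{2}}(x-y^{(k)})$ with $\tilde{\Gamma}_{\omega^{2}}\to\chi_{\omega^{2}}\mathbf{I}_{n}$. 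Using (\ref{Gamma_lim_inv_2D_est_2})/(\ref{Gamma_lim_inv_3D_est_2}) exactly as in the proof of Proposition \ref{Proposition_Lame_dom} I get $(\tau_{1}u)_{\cdot,k}=c_{k}$, so the coefficient vectors are recovered as the singular amplitudes. Subtracting $\Gamma_{0}(x-y^{(k)})c_{k}$ and passing to the limit gives
\[
(\tau_{2}u)_{\cdot,k}=u_{D}(y^{(k)})+\chi_{\omega^{2}}c_{k}+\Gamma_{D}^{sc}(y^{(k)},y^{(k)})c_{k}+\sum_{k'\neq k}\Gamma_{D}(y^{(k)},y^{(k')})c_{k'}.
\]
Imposing $(\tau_{2}u)_{\cdot,k}=\alpha_{k}(\tau_{1}u)_{\cdot,k}=\alpha_{k}c_{k}$ turns the impedance condition into a linear block system whose off-diagonal blocks are $-\Gamma_{D}(y^{(k)},y^{(k')})$ and whose diagonal blocks record the regularized self-action $(\alpha_{k}-\chi_{\omega^{2}})\mathbf{I}_{n}$ (with the smooth self-scattering term $\Gamma_{D}^{sc}(y^{(k)},y^{(k)})$ absorbed into the diagonal), i.e. the matrix $(\Lambda_{\omega^{2}}^{\alpha,D})^{-1}$ of (\ref{MatrixTheta}). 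Since $\omega\notin S_{\alpha}^{D}$ this matrix is invertible, so $c=\Lambda_{\omega^{2}}^{\alpha,D}(u_{D}(y^{(k)}))_{k}$ is determined and substituting back yields (\ref{PointSolution}).

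For uniqueness I would set $u^{in}=0$ (hence $u_{D}=0$) and show the only radiating solution of (\ref{diff_eq_alpha_D}) is $u=0$. Applying Betti's first identity to $u$ on the punctured exterior $B_{R}\setminus(\overline{D}\cup\bigcup_{k}\overline{B_{\varepsilon}(y^{(k)})})$, and using that the elastic energy density and $\omega^{2}|u|^{2}$ are real, the imaginary part of the total boundary flux vanishes. The $\partial D$ contribution drops out because $u=0$ there; the $\partial B_{R}$ contribution tends, as $R\to\infty$, to a positive multiple of $\int_{\mathbb{S}^{n-1}}(k_{p}|u_{p}^{\infty}|^{2}+k_{s}|u_{s}^{\infty}|^{2})\,d\hat{x}\ge 0$ by the Kupradze conditions; and the small-sphere contributions tend, as $\varepsilon\to 0$, to real multiples of $\operatorname{Im}\langle c_{k},(\tau_{2}u)_{\cdot,k}\rangle=\operatorname{Im}(\alpha_{k})|c_{k}|^{2}$ after inserting the impedance condition. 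With $\operatorname{Im}\alpha_{k}\le 0$ all the surviving terms carry the same sign and must vanish separately, forcing $u_{p}^{\infty}=u_{s}^{\infty}=0$; Rellich's lemma then gives $u\equiv 0$ outside a large ball, and unique continuation across the connected exterior of $\Omega$ propagates this to $u\equiv 0$.

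I expect two technical obstacles. The first is the near-diagonal expansion of $\Gamma_{D}$: one must cleanly separate the free-space singularity $\Gamma_{0}$ (handled by the inversion estimates of Section \ref{pre}) from the regular self-scattering value $\Gamma_{D}^{sc}(y^{(k)},y^{(k)})$ and the mutual values $\Gamma_{D}(y^{(k)},y^{(k')})$, and check that these assemble into $(\Lambda_{\omega^{2}}^{\alpha,D})^{-1}$ as in (\ref{MatrixTheta}). The second, genuinely delicate, step is the evaluation of the vanishing-radius flux integrals $\lim_{\varepsilon\to 0}\int_{\partial B_{\varepsilon}(y^{(k)})}\overline{u}\cdot Tu\,ds$ for the Lamé traction operator $T$: one needs the precise singular behaviour of $T(\Gamma_{0}(\cdot-y^{(k)})c_{k})$ to show that the surviving imaginary part is a fixed-sign multiple of $\operatorname{Im}(\alpha_{k})|c_{k}|^{2}$, which is exactly what makes the sign hypothesis effective. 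As an alternative, uniqueness also follows by noting that any homogeneous solution equals $\sum_{k}\Gamma_{D}(\cdot,y^{(k)})c_{k}$ with $c_{k}=(\tau_{1}u)_{\cdot,k}$—the difference being a radiating solution of the exterior Dirichlet problem for $\Delta^{\ast}+\omega^{2}$ that is regular at the points, hence zero—after which $(\Lambda_{\omega^{2}}^{\alpha,D})^{-1}c=0$ with $\omega\notin S_{\alpha}^{D}$ gives $c=0$.
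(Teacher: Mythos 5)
Your proof is correct and follows essentially the same route as the paper: existence by inserting the ansatz $u_D+\sum_{k}\Gamma_D(\cdot,y^{(k)})c_k$ into the impedance conditions at the points and solving the resulting block-linear system (the paper phrases this as a verification of the closed formula (\ref{PointSolution}) rather than a derivation of the $c_k$, which is only a cosmetic difference), and uniqueness by the generalized Betti identity on the punctured exterior domain, with the sign hypothesis $\operatorname{Im}\alpha_k\le 0$ controlling the small-sphere fluxes and a Rellich-type lemma finishing the argument. One point deserves emphasis: your computation of $(\tau_2 u)_{\cdot,k}$ correctly retains the self-interaction term $\Gamma_D^{sc}(y^{(k)},y^{(k)})c_k$, so the diagonal blocks of your system are $(\alpha_k-\chi_{\omega^2})\mathbf{I}_n-\Gamma_D^{sc}(y^{(k)},y^{(k)})$, whereas (\ref{MatrixTheta}) lists only $(\alpha_k-\chi_{\omega^2})\mathbf{I}_n$; the paper's verification likewise drops this term (its case $z=y^{(k)}$ should read $\chi_{\omega^2}+\Gamma_D^{sc}(y^{(k)},y^{(k)})-\alpha_k$). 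Your bookkeeping is the internally consistent one, so you should not identify your matrix with (\ref{MatrixTheta}) without noting this correction to the diagonal. Finally, your alternative uniqueness argument (subtract the singular parts $\Gamma_D(\cdot,y^{(k)})(\tau_1u)_{\cdot,k}$, invoke uniqueness of the exterior Dirichlet problem to conclude the remainder vanishes, then use invertibility of the block matrix for $\omega\notin S_\alpha^D$) is sound, is not the route taken in the paper, and has the interesting feature of not using $\operatorname{Im}\alpha_k\le 0$ at all.
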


\tcr{The above theorem shows that
the
scattered field caused by $D\cup Y$ consists of two parts: one is due to the diffusion by the extended scatterer (i.e., $u_D$) and
the other one is a linear combination of the interactions between the point-like obstacles and
the interaction between the point-like obstacles with the extended one (i.e., those terms appearing in the summation).}
We next present a
more direct proof to check this point; the result is slightly more general,
since we only assume now a sign condition for $\operatorname{Im}\alpha$.
\begin{proof}
We carry out the proof in 3D only, since the 2D case can be treated analogously.

(i) Uniqueness. Assuming $u^{in}=0$, we only need to prove that $u^{sc}=0$.
%By (\ref{PointSolution}),
%\ben
%u^{sc}=u^{sc}_D(x)-\sum_{m,l=1}^{N}\left[\Theta^{-1}(\omega,\alpha)\right]_{m,l} \Gamma_D(x,y^{(m)})\,u^{sc}_D(y^{(l)}).
%\enn
Note that $u^{sc}=0$ on $\partial D$ and $u^{sc}$ fulfills the conditions
(\ref{aps3}) as well as the Kupradze's radiation condition.

To prove $u^{sc}\equiv0$, we need the analogue of Rellich's lemma in linear
elasticity (see e.g., \cite[Lemma 5.8]{Hahner} and \cite{BHSY2018}) The
Rellich's lemma for the Helmholtz equation can be found in \cite[Chapter
2]{CK}, which ensures uniqueness for solutions to exterior boundary value
problems. For $a,b\in R$ such that $a+b=\lambda+\mu$, define the sesquilinear
form $\mathcal{E}_{a,b}$ and the traction operator $T_{a,b}$ via
\begin{align*}
\mathcal{E}_{a,b}\,(u,v) &  :=(a+\mu)\sum_{j,k=1}^{3}\frac{\partial u_{j}%
}{\partial x_{k}}\frac{\partial v_{j}}{\partial x_{k}}+b\,(\nabla\cdot
u)(\nabla\cdot v)-a\,\mathrm{curl\,}u\cdot\mathrm{curl\,}v,\\
T_{a,b}\,u &  :=(a+\mu)\frac{\partial u}{\partial\nu}+b\,\mathrm{div\,}%
u\,\nu+a\,\nu\times\mathrm{curl\,}v,
\end{align*}
where $u=(u_{1},u_{2},u_{3})$ and $v=(v_{1},v_{2},v_{3})$. In the generalized
Betti's formula (see \cite{Kupradze}), we take a special choice of the
parameters $a=-\mu$ and $b=\lambda+2\mu$, so that $a+b=\lambda+\mu$. For
notational convenience we write $T=T_{-\mu,\lambda+2\mu}$ and $\mathcal{E}%
=\mathcal{E}_{-\mu,\lambda+2\mu}$ to indicate the dependance of the traction
operator $T$ and the sesquilinear form $\mathcal{E}$ on these parameters.
Choose $\epsilon>0$ sufficiently small and $R>0$ sufficiently large such that
\[
D\subset B_{R},\;B_{\epsilon}(y^{(j)})\subset B_{R}\backslash\overline
{D},\quad B_{\epsilon}(y^{(j)})\cap B_{\epsilon}(y^{(m)})=\emptyset
\]
for all $j,m=1,2,\cdots,N$ and $j\neq m$. Applying the generalized Betti's
formula (see \cite{Kupradze}) for $u^{sc}$ to the region $B_{R,\epsilon}%
=B_{R}\backslash\overline{D}\backslash\{\cup_{j=1}^{N}\overline{B_{\epsilon
}(y^{(j)})}\}$, we find
\begin{align}
0 &  =-\int_{B_{R,\epsilon}}(\Delta u^{sc}+\omega^{2}\,u^{sc})\,\overline
{u^{sc}}\,dx\nonumber\label{2-1-1}\\
&  =\int_{B_{R,\epsilon}}\mathcal{E}(u^{sc},\overline{u^{sc}})dx-\int%
_{\partial B_{R,\epsilon}}Tu^{sc}\cdot\overline{u^{sc}}\,ds\nonumber\\
&  =\int_{B_{R,\epsilon}}\mathcal{E}(u^{sc},\overline{u^{sc}})dx-\int%
_{|x|=R}Tu^{sc}\cdot\overline{u^{sc}}\,ds+\sum_{j=1}^{N}\int_{\partial
B_{\epsilon}(y^{(j)})}Tu^{sc}\cdot\overline{u^{sc}}\,ds,
\end{align}
where the normal directions at $\partial B_{\epsilon}(y^{(j)})$ or $\partial
B_{R,\epsilon}$ are assumed to point outward. Here we have used the vanishing
of $u^{sc}$ on $\partial D$. Next we estimate the integral on $\partial
B_{\epsilon}(y^{(j)})$ in (\ref{2-1-1}) by using the impedance-type boundary
condition (\ref{aps3}). Setting $C_{j}:=(\tau_{1}u^{sc})_{j}\in{\mathbb{C}%
}^{3}$, we derive from (\ref{aps4}) and (\ref{Gamma_Res_ker_3D_lim}) that
\[
u^{sc}(x)=\frac{\lambda+3\mu}{8\pi\mu(\lambda+2\mu)}\frac{C_{j}}{|x-y^{(j)}%
|}+\frac{\lambda+3\mu}{8\pi\mu(\lambda+\mu)}\frac{(x-y^{(j)})\otimes
(x-y^{(j)})}{|x-y^{(j)}|^{3}}\cdot C_{j}+\alpha_{j}C_{j}+o(1)
\]
as $x\rightarrow y^{(j)}$. Let $F(x)=(x\otimes x)\cdot C_{j}/|x|^{3}$.
Straightforward calculations show that
\[
\mathrm{div\,}F(x)=\frac{C_{j}\cdot\hat{x}}{|x|^{2}},\quad\mathrm{curl\,}%
F(x)=\frac{C_{j}\times\hat{x}}{|x|^{2}}%
\]
where $\hat{x}=x/|x|\in{\mathbb{S}}$. Making use the previous relation, one
can calculate for $x\in\partial B_{\epsilon}(y^{(j)})$ that
\begin{align*}
&  (\lambda+2\mu)\mathrm{div\,}u^{sc}\,\nu=-\frac{\nu(\nu\cdot C_{j})}%
{4\pi\epsilon^{2}}+O(1),\\
&  \nu\times\mathrm{curl\,}u^{sc}=-\frac{\nu\times(C_{j}\times\nu)}%
{4\pi\epsilon^{2}}+O(1),
\end{align*}
as $\epsilon\rightarrow0$, where $\nu(x)=(x-y^{(j)})/\epsilon$ on $\partial
B_{\epsilon}(y^{(j)})$. By definition of the traction operator, it then
follows that
\[
Tu^{sc}\cdot\overline{u}^{sc}=-\frac{-1}{16\pi^{2}\epsilon^{3}\mu}|C_{j}%
|^{2}-\frac{\overline{\alpha_{j}}}{4\pi\epsilon^{2}}|C_{j}|^{2}+O(\frac
{1}{\epsilon})\quad\mbox{on}\quad\partial B_{\epsilon}(y^{(j)})
\]
as $\epsilon\rightarrow0$. Since $\mathrm{Im\,}\alpha_{j}\leq0$, we get via
the mean value theorem that
\[
\lim_{\epsilon\rightarrow0}\;\mathrm{Im\,}\left(  \int_{\partial B_{\epsilon
}(y^{(j)})}Tu^{sc}\cdot\overline{u^{sc}}\,ds\right)  =\frac{\mathrm{Im\,}%
\alpha_{j}}{4\pi\epsilon^{2}}|C_{j}|^{2}\leq0.
\]
Now, taking the imaginary part of (\ref{2-1-1}) and letting $\epsilon$ tend to
zero yield
\[
\mathrm{Im\,}\left(  \int_{\Gamma_{R}}Tu^{sc}\cdot\overline{u^{sc}%
}\,ds\right)  \leq0.
\]
Applying \cite[Lemma 5.8]{Hahner} gives $u^{sc}=0$, which proves uniqueness.

(ii) Existence. The solution (\ref{PointSolution}) obviously satisfies the
Navier equation in ${\mathbb{R}}^{3}\backslash\overline{\Omega}$ and the
Dirichlet boundary condition on $\partial D$. Moreover, $u^{sc}=u-u^{in}$
fulfills the Kupradze radiation condition, because both $u_{D}^{sc}$ and
$\Gamma_{D}$ are radiating solutions. Hence, it suffices to check the
Impedance-type boundary condition in (\ref{aps3}) imposed on $y^{(k)}$,
$k=1,\cdots,N$. From the definition of $\tau_{1}$ and $\tau_{2}$, follows
\[
(\tau_{2}\Gamma^{D}(x,z))_{k}-\alpha_{k}\,(\tau_{1}\Gamma^{D}(x,z))_{k}%
=\left\{
\begin{array}
[c]{lll}%
\Gamma^{D}(y^{(j)},z), &  & \mbox{if}\;z\in{\mathbb{R}}^{3}\backslash Y\,,\\
\Gamma^{D}(y^{(j)},y^{(m)}), &  & \mbox{if}\;z=y^{(m)}\in Y\,,\ m\neq k\,,\\
\chi_{\omega^{2}}-\alpha_{k}, &  & \mbox{if}\;z=y^{(k)}\in Y\,,
\end{array}
\right.
\]
where, here and in the following, we hide the space-indexes and the
corresponding sums. This leads to
\[
(\tau_{2}\Gamma(x,y^{(m)}))_{k}-\alpha_{k}\,(\tau_{1}\Gamma(x,y^{(m)}%
))_{k}=-\left(  \Lambda_{\omega^{2}}^{\alpha,D}\right)  _{m,k}^{-1}\,,\qquad
m,k=1,\cdots,N.
\]
On the other hand, since $u^{in}$ is of $C^{\infty}$-smoothness at $y^{(k)}$,
it holds that
\[
(\tau_{1}u^{in})_{k}=0,\quad(\tau_{1}u^{in})_{k}=u^{in}(y^{(k)}).
\]
Consequently, by direct calculation we have for $k=1,\cdots,N$ that
\begin{align*}
&  \quad(\tau_{2}u)_{k}-\alpha_{k}\,(\tau_{1}u)_{k}\\
&  =(\tau_{2}u^{in})_{k}-\alpha_{k}(\tau_{1}u^{in})_{k}+(\tau_{2}u^{sc}%
)_{k}-\alpha_{k}(\tau_{1}u^{sc})_{k}\\
&  =u^{in}(y^{(k)})+%
%TCIMACRO{\tsum _{m,l=1}^{N}}%
%BeginExpansion
{\textstyle\sum_{m,l=1}^{N}}
%EndExpansion
\left(  \Lambda_{\omega^{2}}^{\alpha,D}\right)  _{m,l}\left[  (\tau_{2}%
(\Gamma(\cdot,y^{\left(  m\right)  }))_{k}-\alpha_{k}(\tau_{1}(\Gamma
(\cdot,y^{\left(  m\right)  }))_{k}\right]  \,u^{in}(y^{(l)})\\
&  =u^{in}(y^{(k)})-%
%TCIMACRO{\tsum _{l=1}^{N}}%
%BeginExpansion
{\textstyle\sum_{l=1}^{N}}
%EndExpansion
\left[
%TCIMACRO{\tsum _{m=1}^{N}}%
%BeginExpansion
{\textstyle\sum_{m=1}^{N}}
%EndExpansion
\left(  \Lambda_{\omega^{2}}^{\alpha,D}\right)  _{l,m}\left(  \Lambda
_{\omega^{2}}^{\alpha,D}\right)  _{m,k}^{-1}\right]  u^{in}(y^{(l)})\\
&  =u^{in}(y^{(k)})-u^{in}(y^{(k)})=0\,.
\end{align*}

\end{proof}

\section{Inverse problems}

\label{inverse} \setcounter{equation}{0}

\tcr{Having established the forward scattering model in Theorem \ref{Th1},
we consider in this section the inverse problem of  simultaneously recovering the shape of the extended elastic body $D$ and the location of point-like scatterers $y^{(j)}\in \R^n\backslash\overline{D}$, $j=1,2,\cdots, N$, in the
case of isotropic and local interactions. The number $N$ of the point-like scatterers is always assumed to be finite but unknown.
The factorization method \cite{Kirsch1998, K2008} by Kirsch will be adapted to
such a multi-scale inverse scattering problem by using different type of
elastic waves.
}

\subsection{Factorization method}

We consider three inverse problems at a fixed frequency as follows:

\begin{description}
\item[(i)] Recover the shape $\partial\Omega$ of the extended obstacle and the
location $\{y^{(j)}: j=1,2,\cdots,N\}$ of point-like scatterers from knowledge
of the entire far-field pattern over all incident and observation directions,
that is, $\{u^{\infty}(\hat{x},d): \hat{x},d\in\mathbb{S}^{n-1}\}$.

\item[(ii)] Recover $\partial\Omega$ and $\{y^{(j)}: j=1,2,\cdots,N\}$ from
P-part of the far-field pattern over all observation directions excited by
incident compressional plane waves with all incident directions.

\item[(iii)] Recover $\partial\Omega$ and $\{y^{(j)}: j=1,2,\cdots,N\}$ from
S-part of the far-field pattern over all observation directions excited by
incident shear plane waves with all incident directions.
\end{description}

Evidently, only one type of elastic waves is used in the last two inverse
problems, whereas the entire wave is involved in the first problem. Before
stating the factorization method we need to define the far-field operator in
linear elasticity. For $g(d)\in L^{2}(\mathbb{S}^{n-1})^{n}$, $d\in
{\mathbb{S}}^{n-1}$, we have the decomposition $g(d)=g_{s}(d)+g_{p}(d)$ where
\begin{align}
\label{g}g_{p}(d):=(g(d)\cdot d)\; d,\quad g_{s}(d):=\left\{
\begin{array}
[c]{lll}%
d\times g(d)\times d \quad\mbox{if}\quad n=3; &  & \\
(g(d)\cdot d^{\bot})\; d^{\bot}\;\;\mbox{if}\quad n=2. &  &
\end{array}
\right.
\end{align}
Obviously, $g_{s}$ belongs to the space of transversal vector fields on
$\mathbb{S}^{n-1}$ defined as
\begin{align}
\label{gs}L^{2}_{s}(\mathbb{S}^{n-1}):=\{g_{s}: \mathbb{S}^{N-1}%
\rightarrow{\mathbb{C}}^{n}: g_{s}(d)\cdot d=0,\; |g_{s}|\in L^{2}%
(\mathbb{S}^{n-1})\}
\end{align}
while $g_{p}$ belongs to the space of longitudinal vector fields on
$\mathbb{S}^{n-1}$:
\begin{align*}
L^{2}_{p}(\mathbb{S}^{n-1}):=\{g_{p}: \mathbb{S}^{n-1}\rightarrow{\mathbb{C}%
}^{n}: g_{p}(d)\times d=0\;\mbox{in}\;{\mathbb{R}}^{3},\, g_{p}(d)\cdot
d^{\bot}=0\;\mbox{in}\;{\mathbb{R}}^{2},\, |g_{p}|\in L^{2}(\mathbb{S}%
^{n-1})\}.
\end{align*}
For $g\in L^{2}(\mathbb{S}^{n-1})^{n}$, introduce the incident fields
\begin{align*}
v^{in}_{g}(x)  &  :=\int_{{\mathbb{S}}^{n-1}} \left[  g_{s}(d)e^{ik_{s}x\cdot
d}+g_{p}(d) e^{ik_{p}x\cdot d} \right]  \,d s(d),\\
v^{in}_{g_{s}}(x)  &  :=\int_{{\mathbb{S}}^{n-1}} \left[  g_{s}(d)e^{ik_{s}%
x\cdot d} \right]  \,d s(d),\\
v^{in}_{g_{p}}(x)  &  :=\int_{{\mathbb{S}}^{n-1}} \left[  g_{p}(d)
e^{ik_{p}x\cdot d} \right]  \,d s(d).
\end{align*}

\begin{defn}
Let $v_{g}^{\infty}$ be the far-field pattern of the incident wave $v^{in}%
_{g}$, and let $v_{g,p}^{\infty}$ (resp. $v_{g,s}^{\infty}$ ) be the
longitudinal (resp. transversal) far-field pattern of the incident wave
$v^{in}_{g,p}$ ($v^{in}_{g,p}$). The far-field operators $F$, $F_{p}$ and
$F_{s}$ are defined by
\begin{align*}
Fg  &  :=v_{g}^{\infty}\,,\quad(L^{2}(\mathbb{S}^{n-1}))^{n}\rightarrow
(L^{2}(\mathbb{S}^{n-1}))^{n},\\
F_{s} g_{s}  &  :=v_{g,s}^{\infty}\,,\quad(L^{2}_{s}(\mathbb{S}^{n-1}%
))^{n}\rightarrow(L^{2}_{s}(\mathbb{S}^{n-1}))^{n},\\
F_{p} g_{p}  &  :=v_{g,p}^{\infty}\,,\quad(L^{2}_{p}(\mathbb{S}^{n-1}%
))^{n}\rightarrow(L^{2}_{p}(\mathbb{S}^{n-1}))^{n}.
\end{align*}

\end{defn}

Our inverse scattering problems (IP1) and (IP2) can be equivalently stated as
the problems of finding $\partial\Omega:=\partial D\cup Y$ from the far-field
operators $F$, $F_{p}$ and $F_{s}$. The operators $F_{p}$ and $F_{s}$ are
related to $F$ as follows:
\begin{align*}
F_{p}=\Pi_{p}\, F\,\Pi_{p}^{*},\quad F_{s}=\Pi_{s}\, F\,\Pi_{s}^{*},
\end{align*}
where $\Pi_{\alpha}$: $(L^{2}(\mathbb{S}^{n-1}))^{n}\rightarrow(L_{\alpha}%
^{2}(\mathbb{S}^{n-1}))^{n}$( $\alpha=p,s$) is the orthogonal projector
operator defined by
\begin{align*}
\Pi_{p} g(d)=g_{p}(d),\quad\Pi_{s} g(d)=g_{s}(d), \qquad g\in(L^{2}%
(\mathbb{S}^{n-1}))^{n}.
\end{align*}
The adjoint operator $\Pi_{\alpha}^{*}: (L_{\alpha}^{2}(\mathbb{S}^{n-1}%
))^{n}\rightarrow(L^{2}(\mathbb{S}^{n-1}))^{n}$ of $\Pi_{\alpha}$ is just the
inclusion operator from $(L_{\alpha}^{2}(\mathbb{S}^{n-1}))^{n}$ to
$(L^{2}(\mathbb{S}^{n-1}))^{n}$. To state the factorization scheme, for
$\mathbf{a}\in\mathbb{S}^{n-1}$ and $y\in{\mathbb{R}}^{n}$ we denote by
$\Pi^{\infty}_{\mathbf{a},y}$ the far-field pattern of the outgoing function
$x\rightarrow\Pi(x,y)\mathbf{a}$. Its compressional and shear parts will be
denoted by $\Pi^{\infty,p}_{\mathbf{a},y}$ and $\Pi^{\infty,s}_{\mathbf{a},y}%
$, respectively. Define $\mathcal{F}_{\#}:=|\mathrm{Re\,} \mathcal{F} | +
|\mathrm{Im\,} \mathcal{F}|$ for $\mathcal{F}=F$, $F_{p}$ and $F_{s}$, where
$\mathrm{Re\,} \mathcal{F}:=(F+F^{*})/2$ and $\mathrm{Im\,} \mathcal{F}%
:=(F-F^{*})/(2i)$.

Given a non-vanishing vector $\mathbf{a}\in\mathbb{S}^{n-1}$ and a sampling
point $y\in{\mathbb{R}}^{n}$, the far-field pattern of the radiation function
$x\rightarrow\Gamma_{\omega}(x, y)\mathbf{a}$ is given by
\begin{align*}
\Gamma_{\mathbf{a},y}^{\infty}(\hat{x})= \left\{
\begin{array}
[c]{lll}%
e^{-ik_{s} \hat{x}\cdot y}[\hat{x}\times(\mathbf{a}\times{\hat{x}%
})]+e^{-ik_{p} \hat{x}\cdot y}(\hat{x}\cdot\mathbf{a})\hat{x}\quad &  &
\mbox{if}\quad n=3,\\
e^{-ik_{s} \hat{x}\cdot y} (\mathbf{a}\cdot{\hat{x}}^{\bot})\hat{x}^{\bot
}+e^{-ik_{p} \hat{x}\cdot y}(\hat{x}\cdot\mathbf{a})\hat{x}\quad &  &
\mbox{if}\quad n=2,
\end{array}
\right.
\end{align*}
where $\hat{x}^{\bot}\in{\mathbb{S}}^{n-1}$ is orthogonal to $\hat{x}$. By the
definition of the projection operators $\Pi_{p}$ and $\Pi_{s}$,
\begin{align*}
\Pi_{s}[\Gamma_{\mathbf{a},y}^{\infty}(\hat{x})]  &  = \left\{
\begin{array}
[c]{lll}%
e^{-ik_{s} \hat{x}\cdot y}[\hat{x}\times(\mathbf{a}\times{\hat{x}})]\quad &  &
\mbox{if}\quad n=3,\\
e^{-ik_{s} \hat{x}\cdot y} (\mathbf{a}\cdot{\hat{x}}^{\bot})\hat{x}^{\bot
}\quad &  & \mbox{if}\quad n=2,
\end{array}
\right. \\
\Pi_{p}[\Gamma_{\mathbf{a},y}^{\infty}(\hat{x})]  &  = e^{-ik_{p} \hat{x}\cdot
y}(\hat{x}\cdot\mathbf{a})\hat{x}\qquad\qquad\qquad\mbox{if}\quad n=2,3.
\end{align*}
Following \cite{HMS,HKS13}, one can prove that the function $\Gamma
_{\mathbf{a},y}^{\infty}(\hat{x})$ (resp. $\Pi_{\alpha}[\Gamma_{\mathbf{a}%
,y}^{\infty}(\hat{x})]$, $\alpha=p,s$) belongs to the range of $\mathcal{F}%
_{\#}$ (resp. $\mathcal{F}_{\alpha\#}$) if and only if $y\in D\cup Y$. By
Picard's theorem (see, e.g., \cite[Theorem 4.8]{K2008}, the set $D\cup Y$ can
be characterized through the eigensystems of the far-field operators $F$,
$F_{s}$ and $F_{p}$ as follows.

\begin{theorem}
\label{TH:factorization} Suppose that $\omega^{2}$ is not the Dirichlet
eigenvalue of $-\Delta^{*}$ in $D$, $\omega\notin S_{\alpha}$ and
$\mbox{Im}\,\alpha_{j}\le0$ for all $j=1,2,\cdots,N$. Then $y\in\Omega=D\cup
Y$ if and only if one of the following three criterions holds
\begin{align}
\label{Chap}W(y)  &  :=\left[  \sum_{n=1}^{\infty}\frac{|(g_{n},
\Gamma^{\infty}_{\mathbf{a},y})_{L^{2}(\mathbb{S}^{n-1})}|^{2}}{\zeta_{n}%
}\right]  ^{-1} >0,\\
W_{\alpha}(y)  &  :=\left[  \sum_{n=1}^{\infty}\frac{|(g_{n}^{(\alpha)},
\Pi_{\alpha}[\Gamma^{\infty}_{\mathbf{a},y}])_{L^{2}(\mathbb{S}^{n-1})}|^{2}%
}{\zeta_{n}^{(\alpha)}}\right]  ^{-1} >0,\qquad\alpha=p,s
\end{align}
where $\{\zeta_{n},g_{n}\}$ (resp. $\{\zeta_{n}^{(\alpha)},g^{(\alpha)}_{n}\}$
) is an eigensystem of the operator $F_{\#}$ (resp. $F_{\alpha\#}$).
\end{theorem}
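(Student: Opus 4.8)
The plan is to follow Kirsch's factorization machinery \cite{Kirsch1998, K2008}, adapted to the multi-scale geometry along the lines of \cite{HMS} (point plus extended obstacle in acoustics) and \cite{HKS13} (elastic extended rigid body). I treat the full far-field operator $F$ in detail; the assertions for $F_p$ and $F_s$ follow verbatim once the projectors are inserted, using $F_\alpha=\Pi_\alpha F\Pi_\alpha^*$ together with $\Pi_\alpha[\Gamma^\infty_{\mathbf{a},y}]$ as the corresponding test functions.

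First I would establish a symmetric factorization $F=H^*\,T\,H$. The operator $H$ sends a Herglotz density $g\in(L^2(\mathbb{S}^{n-1}))^n$ to the data that its Herglotz wave $v^{in}_g$ deposits on the scatterer, namely the pair formed by the Dirichlet trace $v^{in}_g|_{\partial D}\in(H^{1/2}(\partial D))^n$ and the point values $\gamma v^{in}_g\in\mathbb{C}^{n,N}$. Using the solution representation (\ref{PointSolution}) of Theorem \ref{Th1}, with the incident wave taken to be $v^{in}_g$ and the free tensor $\Gamma_{\omega^2}$ replaced by the Dirichlet tensor $\Gamma_D$, the far field $Fg$ is read off as a superposition of the boundary contribution on $\partial D$ and the point-source contributions at $Y$. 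This identifies $H^*$ as the data-to-far-field map and exhibits the middle operator $T$ as a block operator on $(H^{1/2}(\partial D))^n\times\mathbb{C}^{n,N}$: the extended-obstacle block is the boundary operator of the rigid Dirichlet problem (coercive up to a compact perturbation, with injectivity of $H$ secured by the hypothesis that $\omega^2$ is not a Dirichlet eigenvalue of $-\Delta^*$ in $D$), while the point block is governed by $\Lambda_{\omega^2}^{\alpha,D}$, whose invertibility is guaranteed by the spectral assumption on $\omega$ (see (\ref{S_D_alpha})).

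Next I would verify the hypotheses of the range identity \cite[Theorem 2.15]{K2008}: compactness of $H$ with dense range, invertibility of $T$ up to a compact term, and one-sided definiteness of $\mathrm{Im}\,T$ on the closure of $\mathsf{ran}(H)$. The requisite sign comes from the energy balance already used in the uniqueness part of Theorem \ref{Th1}: the generalized Betti identity produces, besides the radiating flux at infinity (which fixes the sign for the extended block), the point contributions whose imaginary part is controlled by $\mathrm{Im}\,\alpha_j\le0$. I expect this verification to be the main obstacle, since the coupled block $T$ mixes a continuous boundary part with a finite-dimensional point part living in spaces of different nature, and the cross-coupling through $\Gamma_D(y^{(k)},y^{(k')})$ must be shown not to destroy the definiteness inherited from the two diagonal blocks.

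The range identity then yields $\mathsf{ran}(F_\#^{1/2})=\mathsf{ran}(H^*)$, and it remains to characterize this range by the test function, i.e. to prove $\Gamma^\infty_{\mathbf{a},y}\in\mathsf{ran}(H^*)$ if and only if $y\in\Omega=D\cup Y$. For $y\in D$ this is the classical sound-soft argument, the trace of the radiating source $\Gamma_D(\cdot,y)\mathbf{a}$ lying in the data space; for $y\in Y$ it follows from the explicit singularity of $\Gamma_{\omega^2}$ encoded by $\tau_1,\tau_2$ together with the invertibility of $\Lambda_{\omega^2}^{\alpha,D}$; for $y\notin\overline{\Omega}$, a point singularity cannot be matched by a field that is regular near $y$, so unique continuation excludes membership. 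Finally, Picard's theorem \cite[Theorem 4.8]{K2008} applied to the positive self-adjoint operator $F_\#$ translates membership in $\mathsf{ran}(F_\#^{1/2})$ into convergence of the series, that is, $W(y)>0$; the identical argument with the projectors $\Pi_\alpha$ and the operators $F_{\alpha\#}$ gives $W_\alpha(y)>0$, completing the three criteria.
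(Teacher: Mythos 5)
Your proposal follows essentially the same route as the paper: a symmetric factorization of $F$ (the paper writes it as $F=GS^{*}G^{*}$ with a modified data-to-pattern operator $G$ and a single-layer-type middle operator $S$, equivalent to your $H^{*}TH$ form), Kirsch's range identity, the characterization of the range by the test functions $\Gamma^{\infty}_{\mathbf{a},y}$ (resp. $\Pi_{\alpha}[\Gamma^{\infty}_{\mathbf{a},y}]$), and Picard's theorem. The paper itself only sketches this argument, deferring the details to \cite{HMS}, \cite{HKS13} and \cite{K2008}, so your write-up is if anything more explicit than the paper's own, including your correct identification of how each hypothesis ($\omega^{2}$ not a Dirichlet eigenvalue, $\omega\notin S_{\alpha}^{D}$, $\mathrm{Im}\,\alpha_{j}\le 0$) enters and your honest flagging of the definiteness of the coupled middle operator as the step requiring real work.
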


Theorem \ref{TH:factorization} can be proved by combining the arguments of
\cite{HMS} for inverse acoustic scattering from multi-scale sound-soft
scatterers and \cite{HKS13} where the factorization method using different
type of elastic waves was established for imaging an extended rigid body (that
is, $Y=\emptyset$). We remark that, based on the solution representation
(\ref{PointSolution}), the far-field operators $F$, $F_{s}$, $F_{p}$ can be
factorized as follows (cf. \cite{HMS, HKS13}):
\begin{align*}
F=GS^{*}G^{*}, \quad F_{\alpha}=(\Pi_{\alpha}G) S^{*} (\Pi_{\alpha}%
G)^{*},\quad\alpha=p,s,
\end{align*}
where $G$ and $S$ are modified data-pattern and single layer operators. There
is no essential difficulties in applying the range identity established in
\cite{K2008}. The assumptions on the frequency $\omega$ ensure that the
operators $F$, $F_{s}$ and $F_{p}$ are injective with dense range and also
give arise to properties of the middle operator required by the range identity.

%For $f\in L^2(\Gamma_R)^n$, introduce the incident field
%\ben
%\tilde{v}^{in}_f=\int_{\Gamma_R} \Pi(x,y)f(y)\, d s(y),\quad x\in \R^n.
%\enn
%\begin{defn}
%The near-field operator $\widetilde{\mathcal{N}}$ is defined as
%\ben
%\widetilde{\mathcal{N}} f:= \tilde{u}^{sc}_f|_{\Gamma_R},\quad (L^2(\Gamma_R))^n\rightarrow (L^2(\Gamma_R))^N,
%\enn
%where $\tilde{u}^{sc}_f$ denotes the scattered field incited by $\tilde{v}^{in}_f$.
%\end{defn}

%\begin{theorem} Let $\textbf{a}\in \s^{n-1}$ be fixed and let $y\in \R^n$ be a sampling point. Then
%\ben
%y\in D \Longleftrightarrow \widetilde{W}(y):=\left[\sum_{n=1}^\infty \frac{|(f_n, \Pi(\cdot,y) \textbf{a})_{L^2(\Gamma_R)}|^2}{\sigma_n}\right]^{-1} >0
%\enn
%where $(f_n, \sigma_n)$ is an eigensystem of the operator $(\mathcal{T}\widetilde{\mathcal{N}})_{\#}$. Here $\mathcal{T}:
%(L^2(\Gamma_R))^n\rightarrow  (L^2(\Gamma_R))^n$ is the outgoing-to-incoming (OtI) operator defined in Section
%\ref{sec:oti}.

%\end{theorem}

\subsection{Numerical tests in 2D}

\FloatBarrier In this section, we present numerical examples in two dimensions
for testing accuracy and validity of the developed inversion schemes. In our
numerical tests, the solutions $u_{D}^{sc}$ and $\Gamma_{D}^{sc}$ are
generated by means of the boundary integral equation method and we always set
$\omega=8$, $\rho=1$, $\lambda=2$, $\mu=1$. For simplicity, we assume that
$\omega^{2}$ is not the Dirichlet eigenvalue of $-\Delta^{\ast}$ and make the
ansatz for the scattered fields corresponding to the rigid scatterer $D$ in
the form
\begin{align*}
u_{D}^{sc}(x;d) &  =\int_{\partial D}\Gamma_{\omega^{2}}(x,y)\varphi
(y;d)ds(y),\\
\Gamma_{D}^{sc}(x;z,\mathbf{a}) &  =\int_{\partial D}\Gamma_{\omega^{2}%
}(x,y)\psi(y;z,\mathbf{a})ds(y),
\end{align*}
where $u_{D}^{sc}(x;d)$ and $\Gamma_{D}^{sc}(x;z,\mathbf{a})$ are excited by a
plane wave of direction $d\in{\mathbb{S}}^{1}$ and the point source located at
$z\in{\mathbb{R}}^{2}\backslash\overline{D}$ with the polarization vector
$\mathbf{a}\in{\mathbb{S}}$, respectively. The densities functions
$\varphi(y;d)$ and $\psi(y;z,\mathbf{a})$ can be solved by an equivalent
boundary integral equation of first kind. Then, for $z=y^{(m)}\in Y$ one can
get the far-field pattern of $\Gamma_{D}^{sc}(x;z,\mathbf{a})$ as
\[
\Gamma_{D,\mathbf{a}}^{\infty}(\hat{x};y^{(m)})=\int_{\partial D}\left[
e^{-ik_{s}\hat{x}\cdot y}(\psi(y;y^{(m)},\mathbf{a})\cdot{\hat{x}}^{\bot}%
)\hat{x}^{\bot}+e^{-ik_{p}\hat{x}\cdot y}(\hat{x}\cdot\psi(y;y^{(m)}%
,\mathbf{a}))\hat{x}\right]  \;ds(y).
\]
By (\ref{PointSolution}), we get the far-field pattern for scattering of a
plane wave from $\Omega=D\cup Y$ as
\[
u^{\infty}(\hat{x};d)=\sum_{m,l=1}^{N}\Gamma_{D}^{\infty}(\hat{x}%
;y^{(m)})\,\left[  \Lambda_{\omega^{2}}^{\alpha,D}\right]  _{m,l}\,u_{D}%
(y^{(l)};d),\quad\hat{x},d\in{\mathbb{S}},
\]
where $\Gamma_{D}^{\infty}=(\Gamma_{D,\mathbf{e}_{1}}^{\infty},\Gamma
_{D,\mathbf{e}_{2}}^{\infty})\in{\mathbb{R}}^{2\times2}$ with $\mathbf{e}%
_{1}=(0,1)$, $\mathbf{e}_{2}=(1,0)$.

Let $N$ (here we set $N=64$) incident compressional plane waves $u_{p}%
^{in}=d_{j}\exp(ik_{p}x\cdot d_{j})$ or shear plane waves $u_{s}^{in}%
=d_{j}^{\perp}\exp(ik_{s}x\cdot d_{j})$ be given at equidistantly distributed
directions $d_{j}=(\cos\theta_{j},\sin\theta_{j})$ with $\theta_{j}=2\pi j/N$,
$j=1,2,\cdots,N$. Denote by $u_{p}^{\infty}(\hat{x};d_{j})$, $u_{s}^{\infty
}(\hat{x};d_{j})$ the P-part, S-part of the far-field pattern of the scattered
field $u^{sc}$ corresponding to $\Omega$ and the incident compressional wave,
and by $u_{p}^{\infty}(\hat{x};d_{j}^{\perp})$, $u_{s}^{\infty}(\hat{x}%
;d_{j}^{\perp})$ the counterparts associated with the incident shear wave.
\begin{figure}[th]
\centering
\includegraphics[scale=0.3]{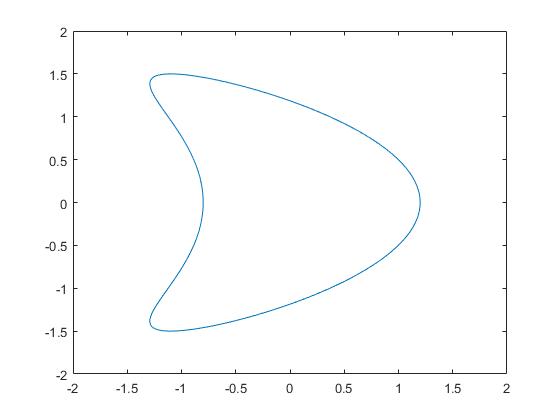}\caption{The kite-shaped
extended obstacle.}%
\label{obstacle}%
\end{figure}The numerical experiments are performed in the following three cases.

\begin{description}
\item[PP case:] Reconstruct $\partial D$ and $Y$ from $u_{p}^{\infty}(d_{k};
d_{j})$ for $N$ incident compressional plane waves $d_{j}\exp(ik_{p}x\cdot
d_{j})$.

\item[SS case:] Reconstruct $\partial D$ and $Y$ from $u_{s}^{\infty}(d_{k};
d_{j}^{\perp})$ for $N$ incident shear plane waves $d_{j}^{\perp}\exp
(ik_{s}x\cdot d_{j}) $.

\item[FF case:] Reconstruct $\partial D$ and $Y$ from $u^{\infty}(d_{k}%
;d_{j})d_{k}+u^{\infty}(d_{k};d_{j}^{\perp})d_{k}^{\perp}$ for $N$ incident
elastic plane waves $d_{j}\exp(ik_{p}x\cdot d_{j})+d_{j}^{\perp}\exp
(ik_{s}x\cdot d_{j})$.
\end{description}

The measured far-field and near-field data are perturbed by the multiplication
of $(1+\delta\xi)$ with the noise level $\delta$, where $\xi$ is an
independent and uniformly distributed random variable generated between -1 and
1. In all of our examples, we suppose that $D$ is a kite-shaped extended
obstacle; see Figure \ref{obstacle}.

\textbf{Example 1:} (point-scatterers on a line segment) In this example, we
compare the reconstruction results for the kite-shaped extended obstacle
together with $M=6$ point-like scatterers on the line segment $2\times[-2,2]$
in the SS case, PP case and FF case; see Figures \ref{E1.1} and \ref{E1.2}.
Here we set $\alpha_{j}=0.1, j=1,\cdots,M$. It can be seen that using the
S-part of the far-field pattern still produces satisfactory reconstruction,
but the reconstruction from P-part of the far-field pattern is less reliable.
This is due the fact that shorter wave length always gives higher resolution
and the wave length of shear waves is smaller than that of compressional
waves, that is, $\lambda_{s}<\lambda_{p}$; see (\ref{wavelength}). In our
tests we have $\lambda_{s}\sim0.785$ and $\lambda_{s}\sim1.57$. In Figure
\ref{E1.1}, the distance between the point-like scatterers is $l=0.8$, which
is close to the shear wave length but is nearly a half of the compressional
wave length. Hence, the point-like scatterers can be distinguished in the SS
case rather than the PP case. However, when $l$ is decreased to be 0.4 in
Figure \ref{E1.2}, using shear waves cannot yield satisfactory reconstructions
of the point-like scatterers; cf. (c) and (d) in Figures \ref{E1.1} and
\ref{E1.2}. It can also be observed that using entire elastic waves would give
more reliable reconstructions than the PP and SS cases.

\textbf{Example 2:} (sensitivity to the ``impedance'' coefficients) In this
example, we consider the FF case for reconstructing the same scatterers as in
Example 1. We fix $M=6$, $\mathbf{a}=(1,0)$. The reconstruction results for
different values of $\alpha_{j}$ are presented in Figure \ref{E2.1}. It can be
observed that the values of the indicator function around the point-like
scatterers grow as the value of $\alpha_{j}$ decreases, that is, the
point-like obstacles are more visible for small $\alpha_{j}$. This can be
interpreted by the fact that the far-field pattern is proportional to the
inverse of $\alpha_{j}$; see the solution form (\ref{PointSolution}) and the
diagonal terms appearing in the matrix (\ref{MatrixTheta}). The same
phenomenon was observed in the inverse acoustic scattering problems \cite{HMS}.

\textbf{Example 3:} (random distributed point-like scatterers) In this
example, we consider the kite-shaped obstacle and 20 point-like scatterers
randomly located in $\{[-3,-2]\cup[2,3]\}\times[-3,3]$. The reconstruction
from data with 1\% noise are shown in Figure \ref{E3.1}.

\begin{figure}[th]
\centering
\begin{tabular}
[c]{cc}%
\includegraphics[scale=0.2]{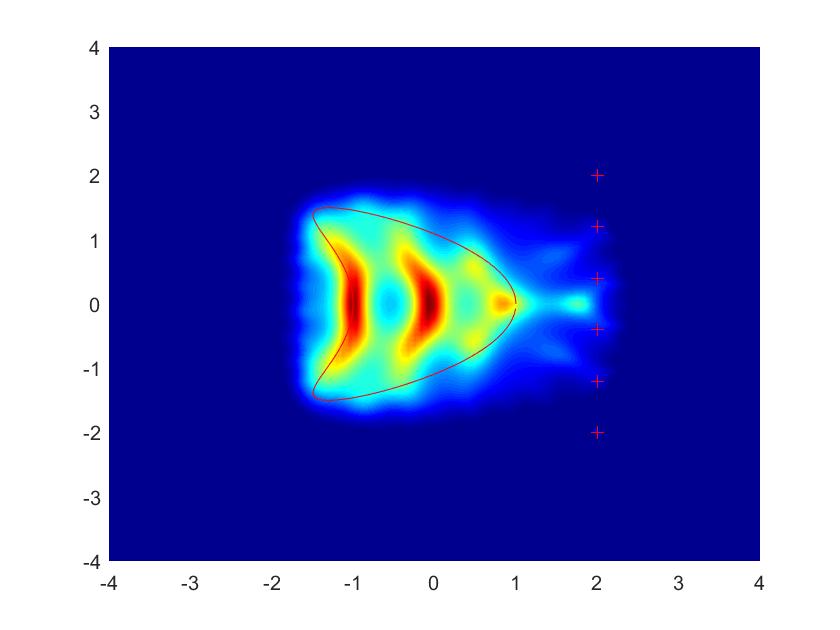} &
\includegraphics[scale=0.2]{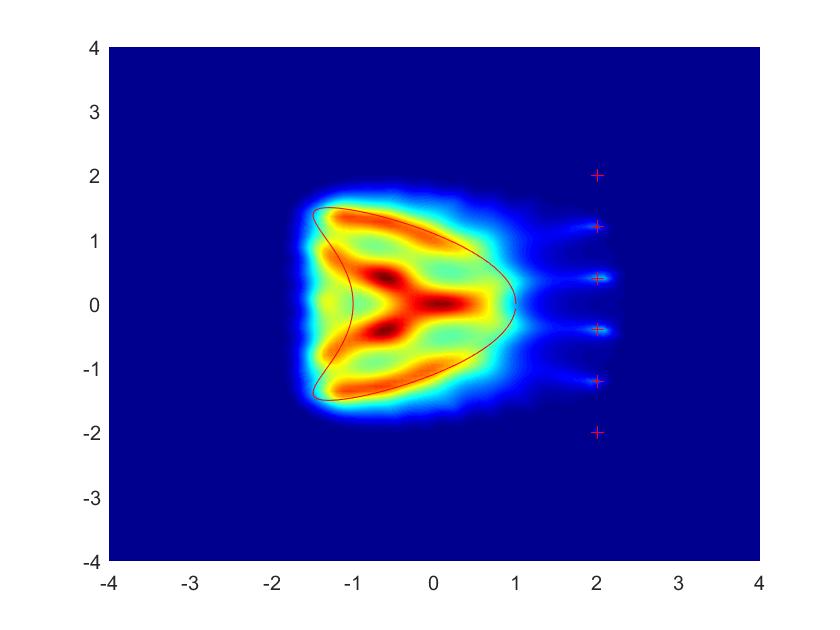}\\
(a) $\mbox{PP}$ & (b) $\mbox{PP}$\\
\includegraphics[scale=0.2]{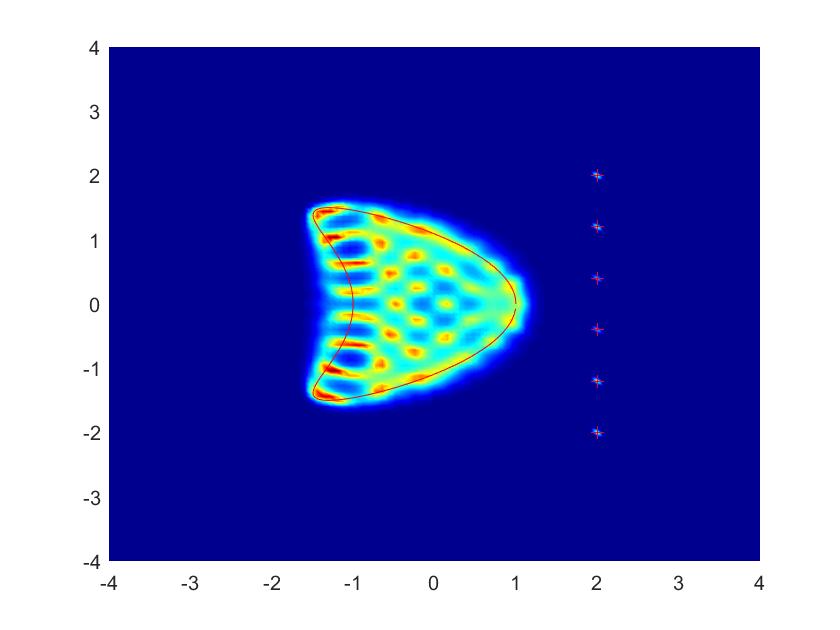} &
\includegraphics[scale=0.2]{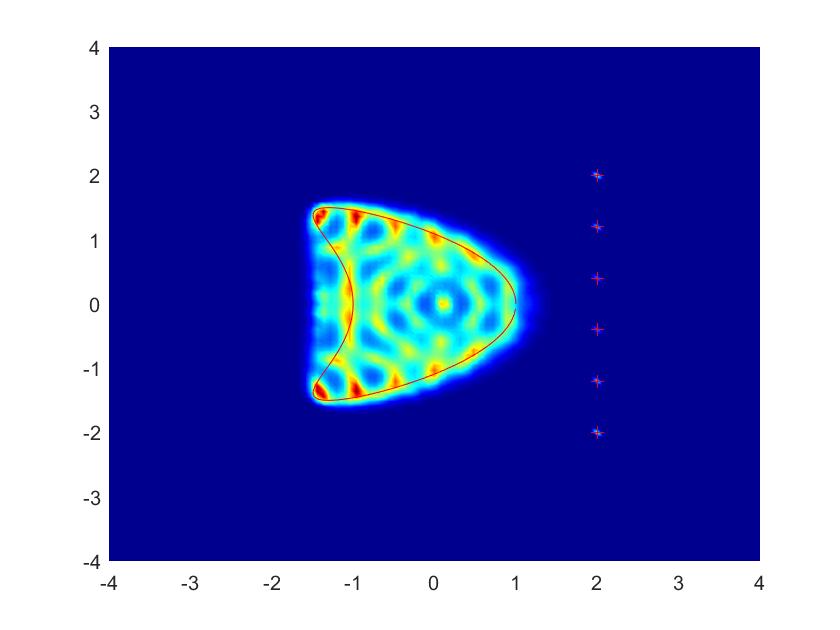}\\
(c) $\mbox{SS}$ & (d) $\mbox{SS}$\\
\includegraphics[scale=0.2]{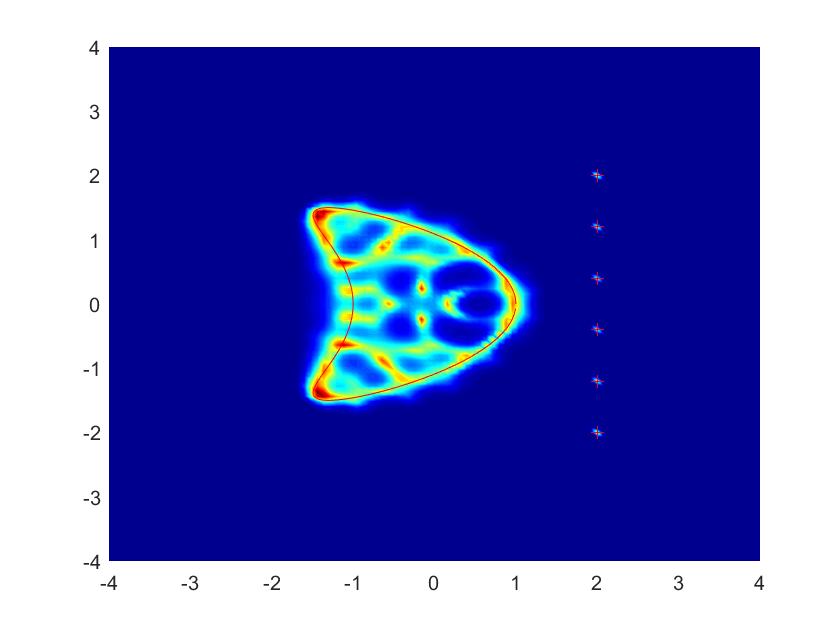} &
\includegraphics[scale=0.2]{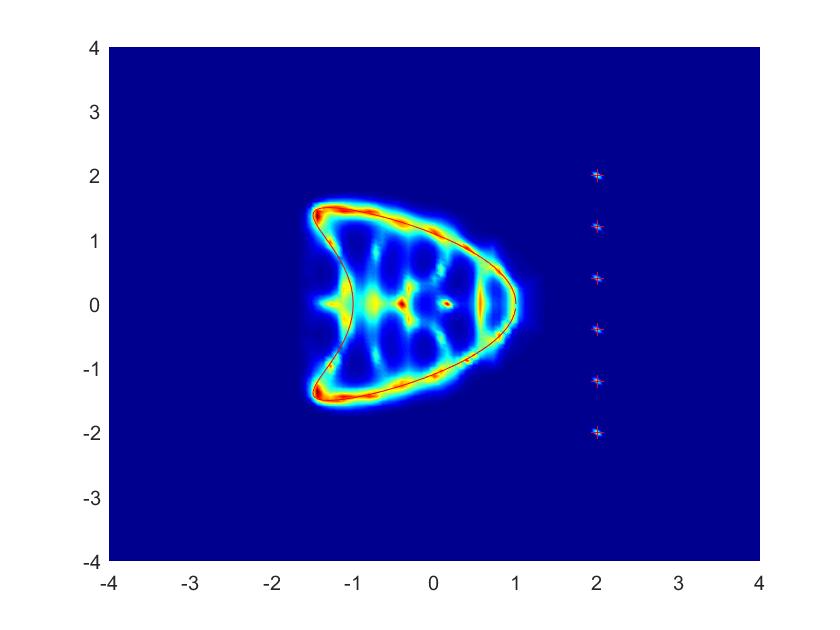}\\
(e) $\mbox{FF}$ & (f) $\mbox{FF}$%
\end{tabular}
\caption{Reconstruction of the kite-shaped obstacle and 6 point-like
scatterers for Example 1 with different polarization vectors $\mathbf{a}%
=(\cos\beta,\sin\beta)$. We set $\beta=0$ in (a,c,e) and $\beta=\pi/2$ in
(b,d,f).}%
\label{E1.1}%
\end{figure}

\begin{figure}[th]
\centering
\begin{tabular}
[c]{cc}%
\includegraphics[scale=0.2]{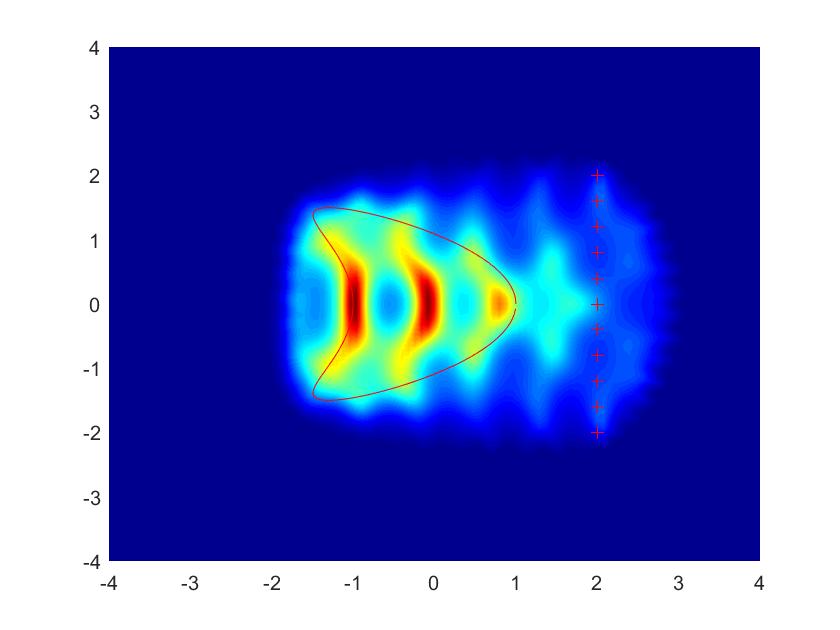} &
\includegraphics[scale=0.2]{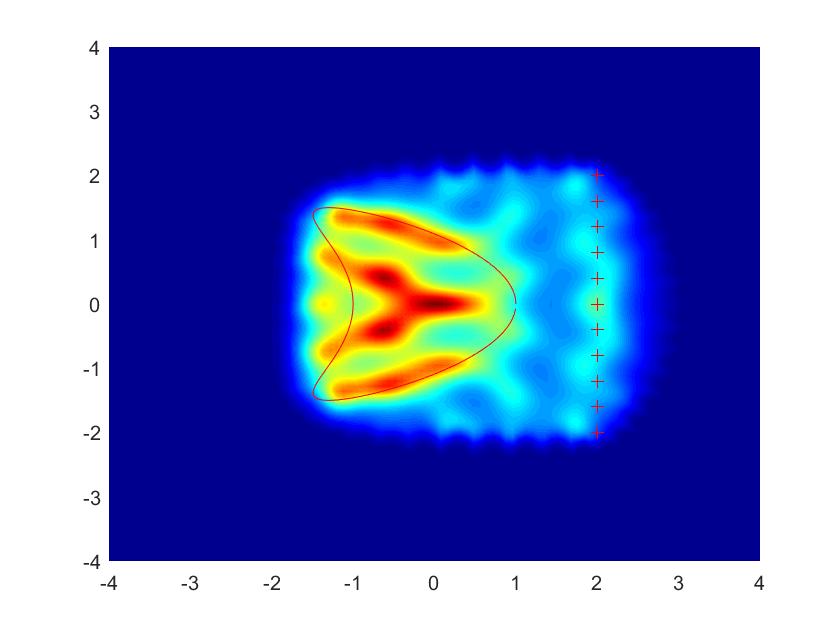}\\
(a) $\mbox{PP}$ & (b) $\mbox{PP}$\\
\includegraphics[scale=0.2]{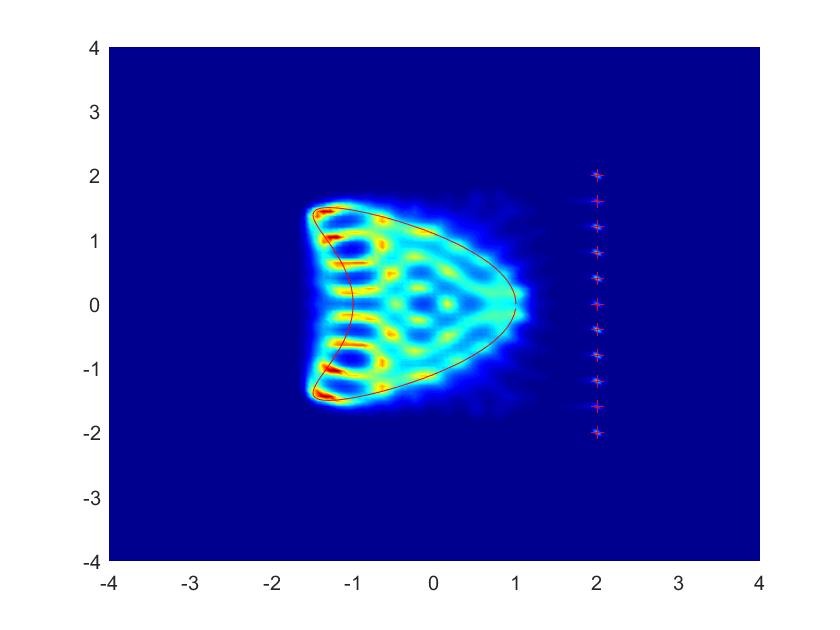} &
\includegraphics[scale=0.2]{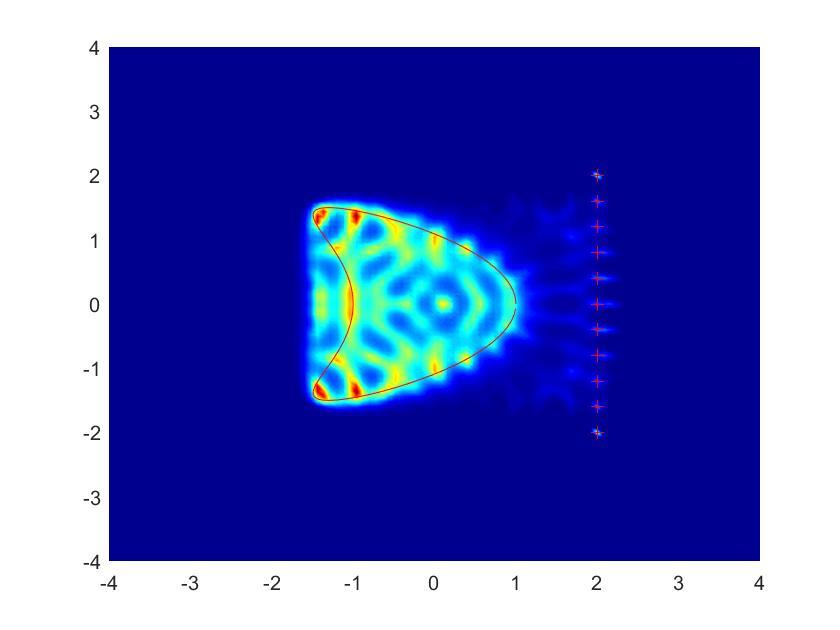}\\
(c) $\mbox{SS}$ & (d) $\mbox{SS}$\\
\includegraphics[scale=0.2]{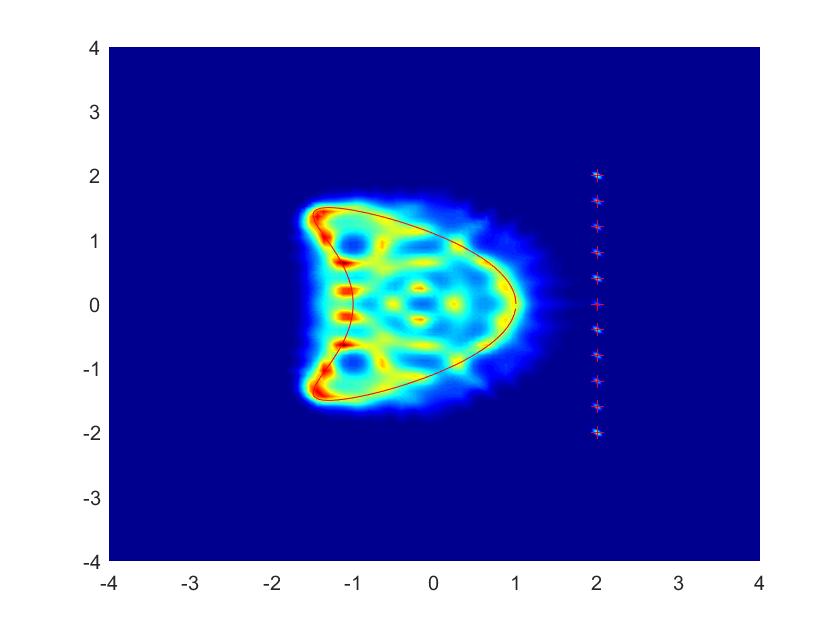} &
\includegraphics[scale=0.2]{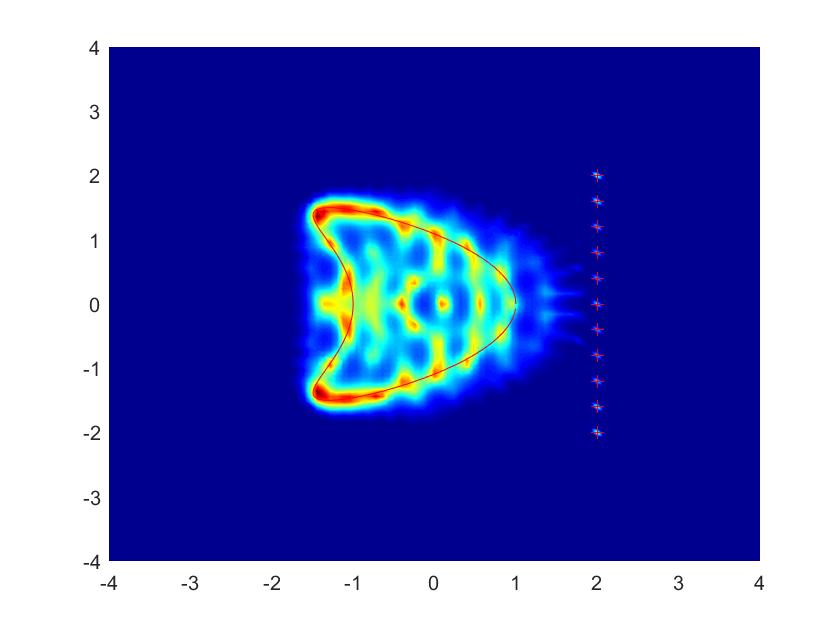}\\
(e) $\mbox{FF}$ & (f) $\mbox{FF}$%
\end{tabular}
\caption{Reconstruction of the kite-shaped obstacle and 11 point-like
scatterers for Example 1 with different polarization vectors $\mathbf{a}%
=(\cos\beta,\sin\beta)$. $\alpha=0$ in (a,c,e), $\beta=\pi/2$ in (b,d,f).}%
\label{E1.2}%
\end{figure}

\begin{figure}[th]
\centering
\begin{tabular}
[c]{ccc}%
\includegraphics[scale=0.15]{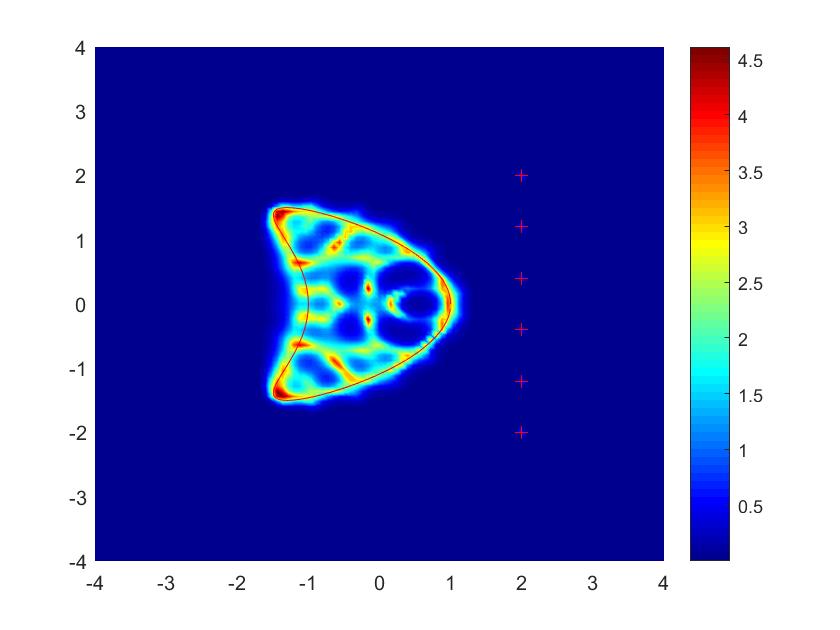} &
\includegraphics[scale=0.15]{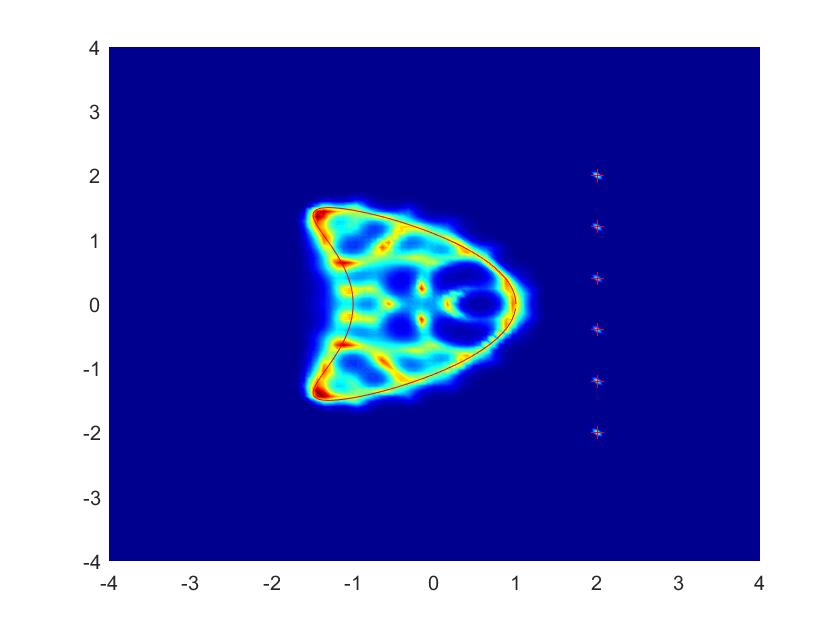} &
\includegraphics[scale=0.15]{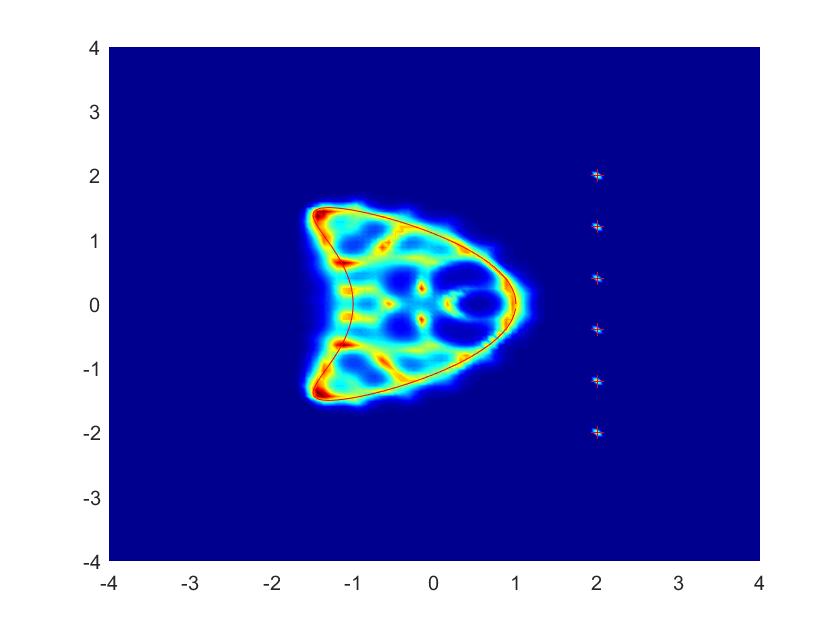}\\
(a) $\alpha_{j}=1$ & (b) $\alpha_{j}=0.1$ & (c) $\alpha_{j}=0.01$\\
\includegraphics[scale=0.15]{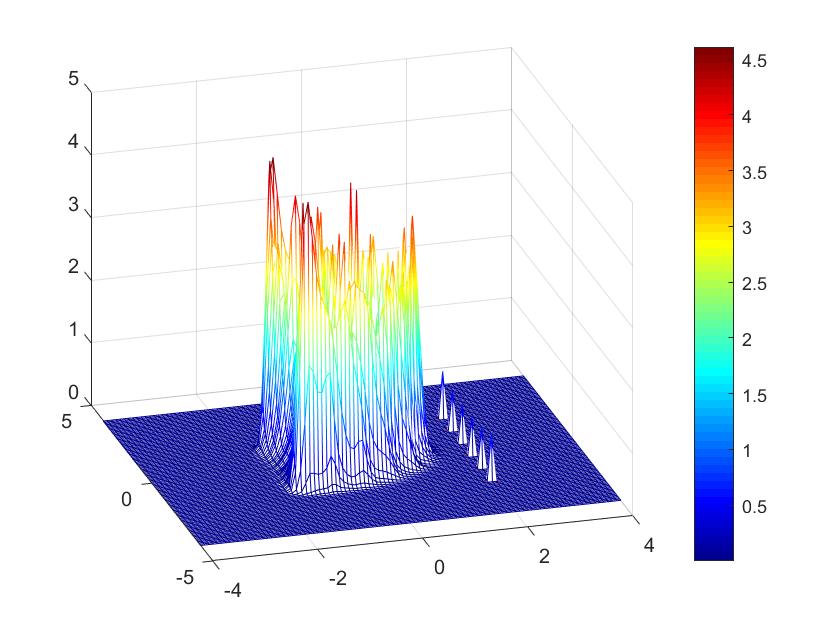} &
\includegraphics[scale=0.15]{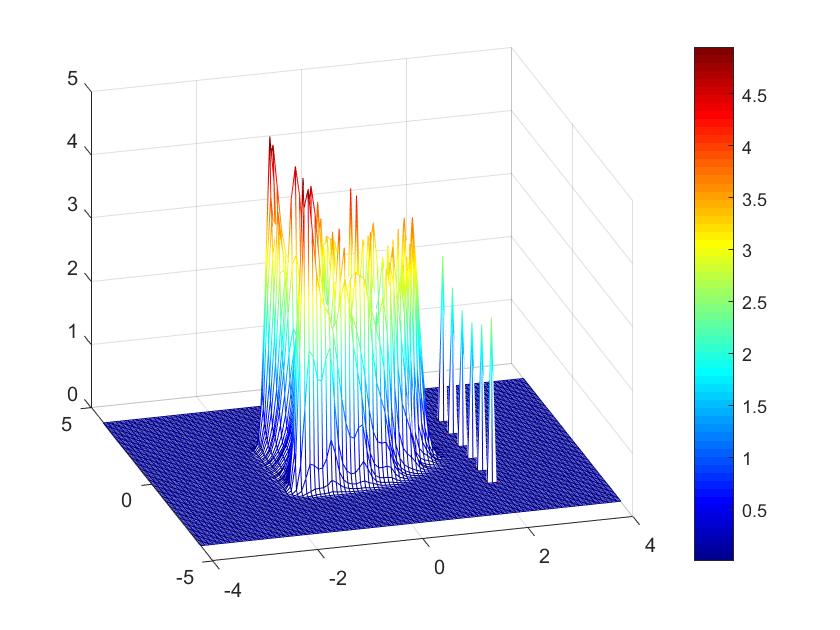} &
\includegraphics[scale=0.15]{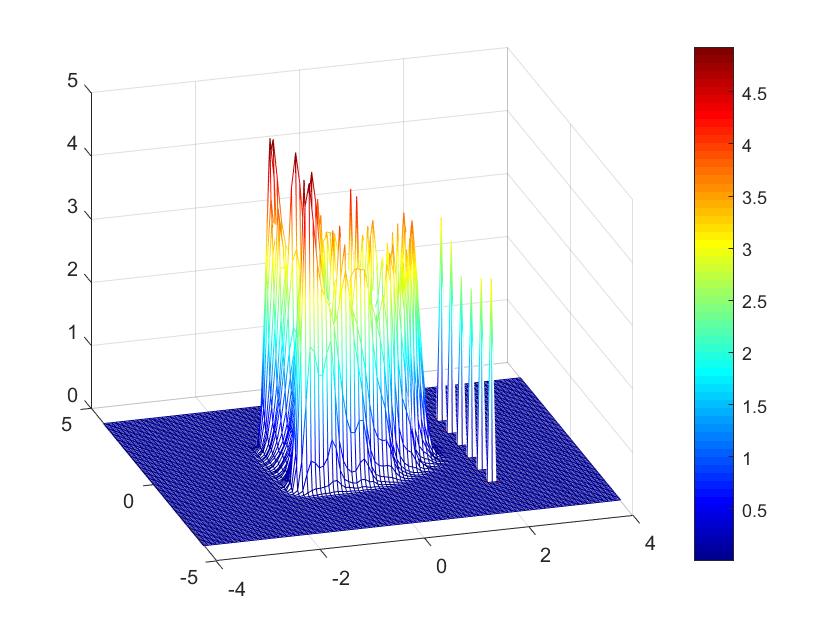}\\
(d) $\alpha_{j}=1$ & (e) $\alpha_{j}=0.1$ & (f) $\alpha_{j}=0.01$\\
&  &
\end{tabular}
\caption{Reconstruction of the kite-shaped obstacle and 6 point-like
scatterers for Example 2 with different ``impedance'' coefficients $\alpha
_{j}, j=1,\cdots,M$..}%
\label{E2.1}%
\end{figure}

\begin{figure}[th]
\centering
\begin{tabular}
[c]{ccc}%
\includegraphics[scale=0.15]{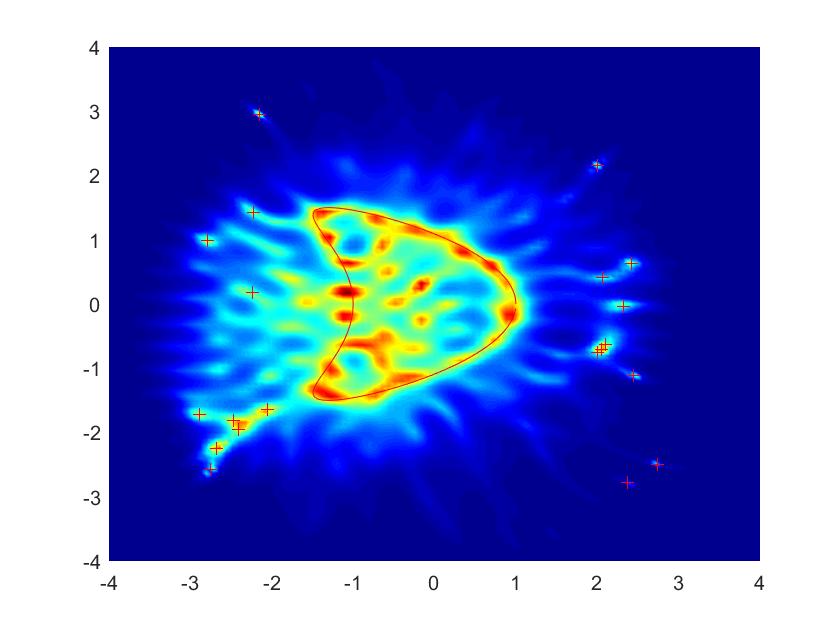} &
\includegraphics[scale=0.15]{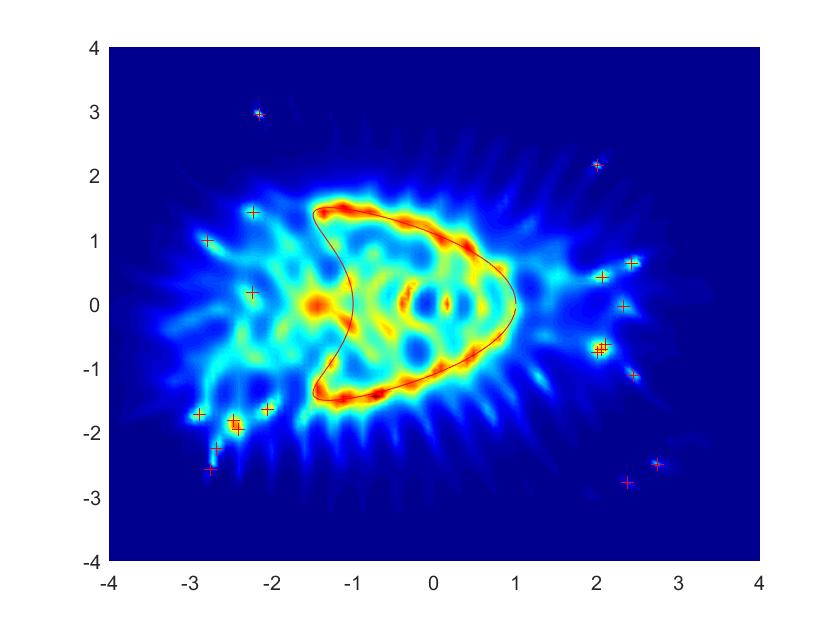} &
\includegraphics[scale=0.15]{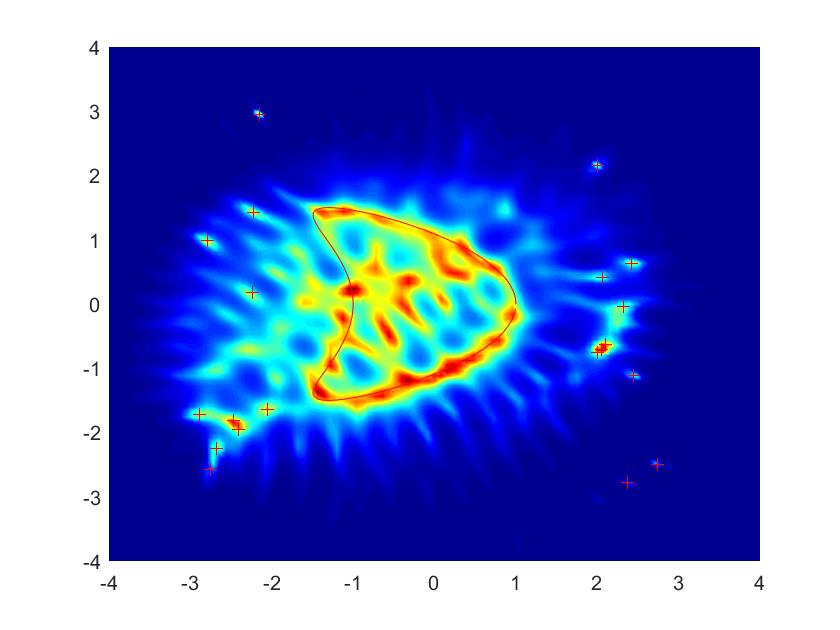}\\
(a) $\beta=0$ & (b) $\beta=\pi/2$ & (c) $\beta=5\pi/4$\\
&  &
\end{tabular}
\caption{Reconstruction of the kite-shaped obstacle and 20 point-like
scatterers for Example 3 with different polarization vectors $\mathbf{a}%
=(\cos\beta,\sin\beta)$.}%
\label{E3.1}%
\end{figure}\FloatBarrier

\section*{Acknowledgements}

Part of this work was finished when G. Hu and A. Mantile visited the Johann Radon Institute
for Computational and Applied Mathematics (RICAM) in October, 2019. The
hospitality of the institute is appreciated.
The authors were partially supported by the Austrian Science Fund (FWF) P28971-N32, \tcr{the NSFC grant 11671028 and NSAF grant U1530401.}

\end{document}